\DeclareSymbolFont{forpolishl}{T1}{cmr}{m}{n}
\DeclareMathSymbol{\mathrmL}{0}{forpolishl}{'212}
\newcommand{\N}{\ensuremath{\mathbb{N}}}
\newcommand{\R}{\ensuremath{\mathbb{R}}}
\newcommand{\pfa}[1]{\text{\footnotesize $#1$}}
\newcommand{\lang}{\ensuremath{\mathcal{L}}}
\newcommand{\fm}{\ensuremath{{\rm Fm}}}
\newcommand{\var}{\ensuremath{{\rm Var}}}
\newcommand{\newvar}{\ensuremath{{\rm Var}^*}}
\newcommand{\bo}{\ensuremath{\Box}}
\newcommand{\di}{\ensuremath{\Diamond}}
\newcommand{\luknot}{\ensuremath{{\sim}}}
\newcommand{\lukimp}{\ensuremath{\supset}}
\newcommand{\fram}[1]{\mathfrak{#1}}
\newcommand{\mdl}[1]{\models_{\lgc{#1}}}
\newcommand{\der}[1]{\vdash_{\lgc{#1}}}
\newcommand{\lgc}[1]{{\mathrm{#1}}}
\newcommand{\f}{\ensuremath{\varphi}}
\newcommand{\p}{\ensuremath{\psi}}
\newcommand{\x}{\ensuremath{\chi}}
\newcommand{\0}{\ensuremath{\overline{0}}}
\newcommand{\De}{\mathrm{\Delta}}
\newcommand{\Ga}{\mathrm{\Gamma}}
\newcommand{\Si}{\mathrm{\Sigma}}
\newcommand{\seq}{\Rightarrow}
\newcommand{\idr}{(\textsc{id})}
\newcommand{\ilr}{(\to\seq)}
\newcommand{\irr}{(\seq\to)}
\newcommand{\salr}{(\&\!\seq)}
\newcommand{\sarr}{(\seq\!\&)}
\newcommand{\nlr}{(\lnot\!\seq)}
\newcommand{\nrr}{(\seq\!\lnot)}
\newcommand{\zrlr}{(\0\!\seq)}
\newcommand{\zrrr}{(\seq\!\0)}
\newcommand{\cutr}{\textup{\sc (cut)}}
\newcommand{\canr}{\textup{\sc (can)}}
\newcommand{\mixr}{\textup{\sc (mix)}}
\newcommand{\seqcontn}{\textup{\sc (sc$_n$)}}
\newcommand{\seqcontk}{\textup{\sc (sc$_k$)}}
\newcommand{\boxknr}[1]{(\bo_{#1})}
\newcommand{\aineq}{\triangleright}
\newcommand{\ex}{(\ensuremath{\rm ex})}
\newcommand{\vecn}[1]{\mathbf{#1}}
\newcommand{\multlr}{(\&\,\aineq)}
\newcommand{\multrr}{(\aineq\,\&)}
\newcommand{\bolrn}{(\bo\,\aineq')}
\newcommand{\borrn}{(\aineq\,\bo')}
\newcommand{\implr}{({\to}\,\aineq)}
\newcommand{\imprr}{(\aineq\,{\to})}
\newcommand{\lorr}{(\aineq\,\lor)}
\newcommand{\lorl}{(\lor\,\aineq)}
\newcommand{\wedr}{(\aineq\,\land)}
\newcommand{\wedl}{(\land\,\aineq)}
\newcommand{\bolr}{(\bo\,\aineq)}
\newcommand{\borr}{(\aineq\,\bo)}
\newcommand{\zerolr}{(\0\,\aineq)}
\newcommand{\zerorr}{(\aineq\,\0)}
\newcommand{\lab}[2]{\ensuremath{(#1)^{#2}}}
\newcommand{\newv}[1]{\lceil #1 \rceil}
\begin{document}

\title[A Real-Valued Modal Logic]{A Real-Valued Modal Logic}
\titlecomment{A precursor to this paper, reporting preliminary results, appeared in the proceedings of AiML 2016~\cite{DMS16}.}

\author[D. Diaconescu]{Denisa Diaconescu\rsuper{a}}
\author[G. Metcalfe]{George Metcalfe\rsuper{b}}	
\author[L. Schn{\"u}riger]{Laura Schn{\"u}riger\rsuper{c}}

\address{\lsuper{a}Faculty of Mathematics and Computer Science, University of Bucharest, Romania}
\thanks{The research of the first author was supported by Sciex grant 13.192 and Romanian National Authority for Scientific Research and Innovation grant, CNCS-UEFISCDI, project number PN-II-RU-TE-2014-4-0730.}
\email{ddiaconescu@fmi.unibuc.ro}

\address{\lsuper{b,c}Mathematical Institute, University of Bern,  Switzerland}	
\email{\{george.metcalfe,laura.schnueriger\}@math.unibe.ch}  
\thanks{The second and third authors acknowledge support from Swiss National Science Foundation grant 200021{\_}146748 and the EU Horizon 2020 research and innovation programme under the Marie Sk{\l}odowska-Curie grant agreement No 689176.}	

\keywords{Many-Valued Logic, Modal Logic, Abelian Logic, \L ukasiewicz Logic, Proof Theory, Tableau Calculus, Sequent Calculus}
\subjclass{F.4.1, I.2.3}

%%%%%%%%%%%%%%%%%%%%%%%%%%%%%%%%%%%%%%%%%%%%%%%%%%%%%%

\begin{abstract}
A many-valued modal logic is introduced that combines the usual Kripke frame semantics of the modal logic K with connectives interpreted locally at worlds by lattice and group operations over the real numbers. A labelled tableau system is provided and a  {\sc coNEXPTIME} upper bound obtained for checking validity in the logic. Focussing on the modal-multiplicative fragment, the labelled tableau system is then used to establish completeness for a sequent calculus that admits cut-elimination and an axiom system that extends the multiplicative fragment of Abelian logic.
\end{abstract}

%%%%%%%%%%%%%%%%%%%%%%%%%%%%%%%%%%%%%%%%%%%%%%%%%%%%%%

\maketitle

%%%%%%%%%%%%%%%%%%%%%%%%%%%%%%%%%%%%%%%%%%%%%%%%%%%%%%
 
\section{Introduction}

Many-valued modal logics combine the frame semantics of classical modal logics with a many-valued semantics at each world. As in the classical case, they may be understood as a compromise between the good computational properties (decidability and lower complexity) of propositional logics and the expressivity of their first-order counterparts, some of which are not even recursively axiomatizable. Such logics have been used to model modal notions such as necessity, belief, and spatio-temporal relations in the presence of multiple degrees of truth, certainty, and possibility, and span fuzzy belief~\cite{HEGG94,GHE03}, fuzzy similarity measures~\cite{GR99}, many-valued tense logics~\cite{HHV95,DG07}, and spatial reasoning with vague predicates~\cite{SDK09}. They also provide a basis for studying fuzzy description logics, which, analogously to the classical case, may be understood as many-valued multi-modal logics (see, e.g.,~\cite{straccia01a,Haj05,KPS13,BBP17}). 

Uniform approaches to many-valued modal logics defined over algebras with a complete lattice reduct are described in~\cite{BEGR11,priest:many}, extending previous work  on modal logics based on finite Heyting algebras~\cite{Fitting91,Fitting92}. In an infinite-valued setting, two core families emerge: ``order-based'' modal logics, including modal extensions of G{\"o}del logics~\cite{CR10,MO11,CR12,CMRR17}, where only the order type of the truth values matters, and ``continuous" modal logics, such as those based on \L ukasiewicz logic~\cite{Haj98,BEGR11,HT13,MM14,MM17,Bar17}, where propositional connectives are interpreted by continuous functions over sets of real numbers (see also~\cite{KP10,KPS13,MS13} for related systems). Such logics are easy to define semantically --- just decide on a suitable set of values and operations --- but not so easy to study. For example, an axiomatization for the G{\"o}del modal logic over many-valued frames is provided in~\cite{CR12}, but no axiomatization is yet known for the G{\"o}del modal logic over standard (Boolean-valued or ``crisp'') frames. Moreover, decidability and complexity problems for these and other order-based modal logics, which typically lack the finite model property, have been solved only recently (see~\cite{CMRR17}).

In this paper we focus on continuous modal logics. Axiomatizations for finite-valued \L ukasiewicz modal logics have been provided in~\cite{HT13}, but the axiom system presented for the infinite-valued  \L ukasiewicz modal logic $\lgc{K(\mathrmL)}$ includes a rule with infinitely many premises. Similarly, only an approximate completeness result (corresponding to including an infinitary rule) is established for the closely related continuous propositional modal logic considered in~\cite{Bar17}. Studying logics that lack a finitary axiom system, and therefore also a suitable algebraic semantics, may be difficult, as may be seen by considering classical modal logic deprived of the theory of Boolean algebras with operators. Note also that, although validity in finite-valued \L ukasiewicz modal logics is PSPACE-complete~\cite{BCE11}, only a  {\sc coNEXPTIME} upper bound is known for the infinite-valued case, as may be deduced from complexity results for \L ukasiewicz description logics (see~\cite{KPS13}).

We address some of these issues here by defining and investigating a many-valued modal logic $\lgc{K(A)}$ with propositional connectives interpreted as the usual lattice and group operations over the real numbers. According to this semantics, the logic may be viewed as a minimal modal extension of Abelian logic, the logic of lattice-ordered abelian groups, introduced independently by Meyer and Slaney as a relevant logic~\cite{mey:ab} and Casari as a comparative logic~\cite{cas:ab}. Some refinements to the usual definition of many-valued modal logics (see, e.g.,~\cite{BEGR11}) are needed to deal with the fact that the real numbers do not form a complete lattice. However, since $\lgc{K(A)}$ enjoys a finite model property, these non-standard features can  be safely ignored for practical purposes. Indeed, the logic $\lgc{K(A)}$ provides a rather simple formalism for reasoning about state transition systems where  linear combinations of real-valued variables are compared among worlds using modal operators.  Since the connectives are interpreted by common arithmetical operations $\min$, $\max$, $+$, and $-$, further connectives interpreted by combinations of these operations (e.g., for many-valued modal or description logics that reason about degrees of truth, certainty, and possibility) can also be defined in this setting. In particular, we show here that the {\L}ukasiewicz modal logic $\lgc{K(\mathrmL)}$ can be interpreted in the logic $\lgc{K(A)}$ extended with a constant. 
 
As our first main technical contribution, we present a sound and complete labelled tableau calculus for $\lgc{K(A)}$ and obtain a {\sc coNEXPTIME} upper bound for checking validity in this logic. The calculus is quite closely related to a labelled tableau calculus for a \L ukasiewicz description logic presented in~\cite{KPS13}, and indeed provides the same upper bound for checking validity. However, an important advantage of defining a logic over lattice-ordered abelian groups is that we are able to explicitly identify and study the modal-multiplicative fragment, solving problems for this fragment that seem at the moment to be quite difficult for the full logic. In particular, we show that the modal-multiplicative fragment of $\lgc{K(A)}$ has an {\sc EXPTIME} upper bound for checking validity and provide a sequent calculus for the fragment that admits cut-elimination. More significantly, we use the  labelled tableau calculus to establish the completeness of an axiom system extending the multiplicative fragment of Abelian logic that, unlike other known axiomatizations for continuous modal logics,  contains only finitary rules.

%%%%%%%%%%%%%%%%%%%%%%%%%%%%%%%%%%%%%%%%%%%%%%%%%%%%%%%%%

\section{A Modal Extension of Abelian Logic} \label{s:modalabelianlogic}

In this section we define the real-valued modal logic $\lgc{K(A)}$ semantically as a minimal modal extension of Abelian logic $\lgc{A}$, the logic of lattice-ordered abelian groups. We then show that validity in this logic remains unchanged when the semantics is restricted to the class of finite serial models. Finally, we provide a syntactic embedding of the minimal modal extension $\lgc{K(\mathrmL)}$ of infinite-valued \L ukasiewicz logic into $\lgc{K(A)}$ with an additional constant.

Since we will consider several propositional languages in this paper, let us begin with some quite general definitions. Given a propositional language $\lang$ (also known as an algebraic signature or type) consisting of connectives with fixed arities, let $\fm(\lang)$ denote the set of {\em $\lang$-formulas} $\f,\p,\x,\dots$, defined inductively in the usual way over a countably infinite set $\var$ of (propositional) variables $p,q,r,\dots$. The {\em complexity} of $\f \in \fm(\lang)$ is the number of occurrences of connectives in $\f$, and if $\lang$ contains a unary operation $\bo$, then the {\em modal depth} of $\f$ is  the deepest nesting of the modal connective $\bo$ in $\f$.

%%%%%%%%%%%%%%%%%%%%%%%%%%%%%%%%%%%%%%%%%%%%%%%%%%%%%%%%%

\subsection{Abelian Logic}

Let us begin with a brief summary of Abelian logic $\lgc{A}$, introduced independently by Meyer and Slaney in~\cite{mey:ab} as a relevant logic, and Casari in~\cite{cas:ab} as a comparative logic. In both settings, $\lgc{A}$ was defined via axiom systems that are complete with respect to validity in the variety of lattice-ordered abelian groups. However, since this variety is generated by a single algebra defined over the real numbers, we may also use this algebra to  introduce Abelian logic semantically as a many-valued logic.

Consider a language $\lang_\lgc{A}$ with binary connectives $\land$, $\lor$, $\&$, and $\to$, and a constant $\0$, fixing also $\lnot \f := \f \to \0$. We define {\em Abelian logic} $\lgc{A}$ via the logical matrix $\langle \mathbf{R}, \R^+_0 \rangle$ consisting of the algebra $\mathbf{R} = \langle \R,\min,\max,+,-,0 \rangle$ and the set of designated truth values $\R^+_0 = \{r \in \mathbb{R} : r \ge 0\}$. That is, an {\em $\lgc{A}$-valuation} is a map $e \colon \var \to \R$ extended to all $\lang_\lgc{A}$-formulas by
\[
\begin{array}{rcl}
e(\f \land \p) & = & \min(e(\f),e(\p))\\[.025in]
e(\f \lor \p) & = & \max(e(\f),e(\p))\\[.025in]
e(\f \& \p) & = & e(\f) + e(\p)\\[.025in]
e(\f \to \p) & = & e(\p) - e(\f)\\[.025in]
e(\0) & = & 0,
\end{array}
\]
and $\f \in \fm(\lang_\lgc{A})$ is \emph{$\lgc{A}$-valid} if $e(\f) \ge 0$ for each $\lgc{A}$-valuation $e$.

\begin{figure}[tbp] 
\centering
\fbox{
\begin{minipage}{10 cm}
\begin{center}
\[
\begin{array}[t]{rl}
{\rm (B)}  & (\f  \to \p) \to ((\p \to \x) \to (\f \to \x))\\
{\rm (C)}  & 	(\f \to (\p \to \x)) \to (\p \to (\f \to \x))\\
{\rm (\, I \, )}    &	 \f \to \f\\
{\rm (A)}  & 	 ((\f \to \p) \to \p) \to \f\\
(\& 1)      & \f \to (\p \to (\f \& \p))\\
(\& 2)      & (\f \to (\p \to \x)) \to ((\f \& \p) \to \x)\\
(\0\, 1)      & \0 \\
(\0\, 2)      & \f \to (\0 \to \f)\\
(\land 1)   & (\f \land \p) \to \f\\
(\land 2)   & (\f \land \p) \to \p\\
(\land 3)   & ((\f \to \p) \land (\f \to \x)) \to (\f \to (\p \land \x))\\
(\lor 1)    & \f \to (\f \lor \p)\\
(\lor 2)       & \p \to (\f \lor \p)\\
(\lor 3)       & ((\f \to \x) \land (\p \to \x)) \to ((\f \lor \p) \to \x)\\
\end{array}
\]
\[
\infer[(\text{mp})]{\p}{\f & \f \to \p} \qquad  \infer[(\text{adj})]{\f \land \p}{\f & \p}
\]
\end{center}
\caption{The Axiom System $\lgc{HA}$}
\label{f:ha}
\end{minipage}}
\end{figure}

As mentioned above, $\mathbf{R}$ generates the variety of lattice-ordered abelian groups. But also, using methods of abstract algebraic logic, it is easily proved that this variety provides an algebraic semantics for the axiom system $\lgc{HA}$ displayed in Figure~\ref{f:ha}: an axiomatization of multiplicative additive intuitionistic linear logic with just one constant $\0$ extended with the axiom schema ${\rm (A)}\ ((\f \to \p) \to \p) \to \f$. It follows then that $\f \in \fm(\lang_\lgc{A})$ is derivable in $\lgc{HA}$ if and only if $\f$ is $\lgc{A}$-valid.  

The choice of Abelian logic as the basis for the many-valued modal logics studied in this paper is motivated both by its expressivity and the central role of its semantics in ordinary mathematics. The connectives of $\lgc{A}$ are interpreted by the basic arithmetical operations $\min$, $\max$, $+$, $-$, and $0$, from which connectives for other many-valued logics, interpreted via combinations of these operations, can be defined. In particular, there exist syntactic embeddings of  infinite-valued \L ukasiewicz logic into  $\lgc{A}$ that allow results for the latter to be transferred to results about the former. The use of basic arithmetical operations on the real numbers means also that a huge array of results and methods from linear algebra are available for investigating $\lgc{A}$ and its modal expansions. For example, such methods have been used to obtain analytic sequent and hypersequent calculi and co-NP completeness results for Abelian logic and  infinite-valued \L ukasiewicz logic   in~\cite{met:seq} (see also~\cite{BM08,MOG08}).

%%%%%%%%%%%%%%%%%%%%%%%%%%%%%%%%%%%%%%%%%%%%%%%%%%%%%%%%%

\subsection{Kripke Semantics}

We define a minimal (crisp) modal extension $\lgc{K(A)}$ of Abelian logic by interpreting formulas locally in the algebra $\mathbf{R}$ over standard Kripke frames. That is, a {\em (crisp) frame} is a pair $\fram{F} = \langle W, R \rangle$, where $W$ is a non-empty set of \emph{worlds} and $R\subseteq W \times W$ is an {\em (crisp) accessibility relation}. As usual, we write $Rxy$ or $Rxy = 1$ to denote $\langle x,y \rangle \in R$ and $Rxy = 0$ to denote $\langle x,y \rangle \not\in R$. For any $x\in W$, we let $R[x] = \{y \in W : Rxy\}$. Modal formulas are defined over the  language $\lang_{\lgc{A}}^\bo$ extending $\lang_\lgc{A}$ with an additional unary ``box'' connective $\bo$, where the dual ``diamond'' connective is defined as $\di \f := \lnot \bo \lnot \f$. 

There exists a very general method for defining crisp modal logics over algebras with a complete lattice reduct (see in particular~\cite{BEGR11}), where the $\bo$ and $\di$ connectives are interpreted as infima and suprema of values of formulas at accessible worlds. However, since the real numbers do not form a complete lattice --- they lack a top and bottom element --- we make here a couple of minor adjustments to this method. First, we adopt the useful convention that $\bigwedge_\R \emptyset = \bigvee_\R \emptyset = 0$, and second, we restrict valuations of variables in a particular model to a fixed interval. Both these choices will be justified to some extent by Lemma~\ref{l:fmp} below.

A \emph{$\lgc{K(A)}$-model} $\fram{M} = \langle W, R, V \rangle$ consists of a frame $\langle W, R \rangle$ together with a {\em valuation} map $V \colon \var \times W \to [-r,r]$ for some $r \in \R^+_0$ that is extended to $V \colon \fm(\lang_{\lgc{A}}^\bo) \times W \to \R$ by 
\[
\begin{array}{rcl}
V(\f \land \p, x) & = & \min(V(\f,x),V(\p,x))\\[.025in]
V(\f \lor \p, x) & = & \max(V(\f,x), V(\p,x))\\[.025in]
V(\f \& \p, x) & = & V(\f,x) + V(\p,x)\\[.025in]
V(\f \to \p, x) & = & V(\p,x) - V(\f,x)\\[.025in]
V(\0, x) & = & 0\\[.025in]
V(\bo \f, x) & = &  \bigwedge_\R \{V(\f, y) : Rxy \}.
\end{array}
\]
By calculation, we obtain also
\[
\begin{array}{rcl}
V(\lnot \f, x) & = & - V(\f,x)\\[.025in]
V(\di \f, x) 	& = & \bigvee_\R \{V(\f, y) : Rxy \}.
\end{array}
\]
An $\lang_{\lgc{A}}^\bo$-formula $\f$ is \emph{valid} in a $\lgc{K(A)}$-model $\fram{M} = \langle W, R,V \rangle$ if $V(\f, x) \ge 0$ for all $x \in W$. If $\f$ is valid in all $\lgc{K(A)}$-models, then $\f$ is \emph{$\lgc{K(A)}$-valid}, written $\mdl{\lgc{K(A)}} \f$.

The convention that $\bigwedge_\R \emptyset = \bigvee_\R \emptyset = 0$ is rather counter-intuitive. This can be avoided, however, by restricting to {\em serial} frames: that is, frames $\fram{F} = \langle W, R \rangle$ such that for all $x \in W$, there exists $y \in W$ such that $Rxy$. With this restriction, $\bigwedge_\R \emptyset$ and $\bigvee_\R \emptyset$ may simply be left undefined. Similarly, restricting the codomain of a valuation to a bounded subset of $\R$ can be avoided by considering only finite models. Surprisingly perhaps, considering only finite serial models does not affect the valid formulas of the logic.

\begin{lem}\label{l:fmp}
$\mdl{\lgc{K(A)}} \f$ if and only if $\f$ is valid in all finite serial $\lgc{K(A)}$-models.
\end{lem}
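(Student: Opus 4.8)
The left-to-right direction is immediate, since every finite serial $\lgc{K(A)}$-model is in particular a $\lgc{K(A)}$-model. For the converse I would argue contrapositively: given an arbitrary $\lgc{K(A)}$-model $\fram{M} = \langle W, R, V \rangle$ and a world $x_0$ with $V(\f, x_0) < 0$, I would build a finite serial model in which $\f$ still takes a negative value. Write $d$ for the modal depth of $\f$ and put $\epsilon = -V(\f, x_0) > 0$. Since the variables are valued in a bounded interval $[-r, r]$, a routine induction gives $|V(\p, y)| \le c_\p\, r$ for every subformula $\p$ and world $y$, with $c_\p$ depending only on $\p$; in particular each set $\{V(\p, y) : Rxy\}$ is bounded below, so the infimum computing a $\bo$-value is always a genuine real number.

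The first step is to enforce seriality without changing any values. I would adjoin a single fresh world $w^*$ carrying a reflexive loop $R w^* w^*$ and valuing every variable $0$, and add an edge $x \to w^*$ from each dead-end $x$ (each $x$ with $R[x] = \emptyset$). A one-line induction shows that $V(\p, w^*) = 0$ for every $\p$ --- the loop makes $V(\bo\p, w^*) = V(\p, w^*)$, and the remaining clauses preserve $0$ --- so at a former dead-end the convention $\bigwedge_\R \emptyset = 0$ is faithfully reproduced by $V(\bo\p, x) = V(\p, w^*) = 0$. A second induction then confirms that the value of every formula at every original world is unchanged, and the enlarged model is serial.

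The second step bounds both the height and the branching. I would unravel the serial model into a tree of height $d$ rooted at $x_0$, but at each internal node keep only finitely many successors, chosen as approximate witnesses for the modal infima: at a node carrying a world $x$ (at level below $d$) and for each subformula $\bo\p$ of $\f$, I pick one successor $y$ of $x$ with $V(\p, y) < \inf\{V(\p, z) : Rxz\} + \delta$, and declare the finitely many chosen worlds to be the children. Nodes at level $d$ are made leaves and linked to $w^*$, which preserves seriality and is harmless because only subformulas of modal depth $0$ are consulted there. Letting each node inherit the variable values of the world it carries yields a finite (bounded height and branching) serial $\lgc{K(A)}$-model whose valuation still lands in $[-r, r]$.

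Finally I must control the perturbation introduced by replacing exact infima by minima over the retained witnesses, which is where the real work lies. By induction on subformulas one shows $|V'(\p, n) - V(\p, x_n)| \le g(\p)\,\delta$ for every node $n$ at level $\ell$ and every $\p$ of modal depth at most $d - \ell$, where $g(p) = g(\0) = 0$, $g(\p_1 \land \p_2) = g(\p_1 \lor \p_2) = \max(g(\p_1), g(\p_2))$, $g(\p_1 \& \p_2) = g(\p_1 \to \p_2) = g(\p_1) + g(\p_2)$, and $g(\bo\p_1) = 1 + g(\p_1)$. The lattice connectives do not amplify the error, the group connectives add the two errors, and each box contributes one extra $\delta$ from the slack in its witness on the upper side, while the lower side uses that every retained child is a genuine successor. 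Choosing $\delta < \epsilon / g(\f)$ (and any $\delta$ when $g(\f) = 0$) then forces $V'(\f, n_0) < 0$ at the root $n_0$, so $\f$ is refuted in a finite serial model. The genuinely delicate point is exactly this estimate: because the modal infima need not be attained, no finite restriction preserves values on the nose, and one must instead exploit the nonexpansive/additive (Lipschitz) behaviour of the connectives to keep the total error below $\epsilon$.
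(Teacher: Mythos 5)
Your proof is correct, and it shares the paper's two key ideas---restoring seriality with zero-valued sink worlds (you use a single shared $w^*$, the paper one fresh world per dead end, both exploiting the convention $\bigwedge_\R \emptyset = 0$) and replacing each modal infimum by finitely many $\delta$-approximate witnesses, with the Lipschitz behaviour of the connectives keeping the total error under control---but it organizes them quite differently. The paper proves an approximation lemma by induction on the sum of the complexities of a finite set $S$ of formulas: the finite model is built recursively, gluing models obtained for the submodels generated by the witnesses $y_i$ onto a fresh root, with the error budget halved ($\varepsilon \mapsto \varepsilon/2$) at each decomposition step; seriality is repaired only afterwards, so its intermediate finite models may contain dead ends (its base case has $R'=\emptyset$). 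You instead serialize first---which guarantees that every relevant infimum is over a nonempty set, so your witnesses always exist---then perform a single unravelling into a tree whose height is the modal depth $d$ of $\f$ and whose branching is bounded by the number of box subformulas, concentrating all the arithmetic in one estimate $|V'(\p,n)-V(\p,x_n)| \le g(\p)\,\delta$ with an explicit amplification function $g$ and a final choice $\delta < \varepsilon/g(\f)$. What your route buys is an explicit bound on the size of the countermodel (height $d$, branching at most the number of box subformulas, hence at most exponential in the size of $\f$), which dovetails with the tableau construction and the {\sc coNEXPTIME} bound obtained later in the paper; what the paper's route buys is freedom from any explicit bookkeeping function, since the recursive $\varepsilon/2$-splitting absorbs the error propagation, at the price of leaving the shape and size of the resulting finite model implicit.
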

\begin{proof}
The left-to-right direction is immediate. For the opposite direction, we note first that if $\f$ is not valid in a $\lgc{K(A)}$-model $\fram{M} = \langle W, R,V \rangle$, then it will not be valid in the serial $\lgc{K(A)}$-model $\fram{M}' = \langle W \cup W', R \cup R', V' \rangle$ where $W'$ is a set of new distinct worlds $x'$ for each $x \in W$ satisfying $R[x] = \emptyset$, $R' = \{\langle x,x'\rangle, \langle x',x'\rangle : x \in W \text{ and } R[x] = \emptyset\}$, and $V'$ extends $V$ with $V(p,x') = 0$ for all $p \in \var$ and $x' \in W'$. Clearly, if $\fram{M}$ is finite, then $\fram{M}'$ is also finite.

It now suffices to prove the following: for any $\lgc{K(A)}$-model $\fram{M} = \langle W, R, V \rangle$, $x \in W$, finite set of formulas $S$, and $\varepsilon > 0$, there exists a finite $\lgc{K(A)}$-model $\fram{M}' = \langle W', R', V' \rangle$ with $x \in W'$ such that $|V(\f,x) -V'(\f,x)| < \varepsilon$ for all $\f \in S$. We proceed by induction on the sum of the complexities of the formulas in $S$. 

For the base case, $S$ contains only variables and $\0$, and we let $\fram{M}' = \langle W', R', V' \rangle$ with $W' = \{x\}$, $R' = \emptyset$, and $V'(p,x) = V(p,x)$ for each $p \in \var$. For the inductive step, suppose first that $S = S' \cup \{\p_1 \to \p_2\}$. Then we can apply the induction hypothesis with $\fram{M}$, $x \in W$, $S'' = S' \cup \{\p_1,\p_2\}$, and $\frac{\varepsilon}{2} > 0$ to obtain a finite $\lgc{K(A)}$-model $\fram{M}'= \langle W', R', V' \rangle$ with $x \in W'$ such that $|V(\f,x) -V'(\f,x)| < \frac{\varepsilon}{2}$ for all $\f \in S''$. It suffices then to observe that $|V(\p_1 \to \p_2,x) -V'(\p_1 \to \p_2,x)| = |V(\p_2,x) - V(\p_1,x) - V'(\p_2,x) + V'(\p_1,x)| \le |V(\p_2,x) - V'(\p_2,x)| + |V(\p_1,x)  - V'(\p_1,x)| < \frac{\varepsilon}{2} + \frac{\varepsilon}{2} = \varepsilon$. The cases where $S$ contains $\p_1 \& \p_2$, $\p_1 \land \p_2$, or $\p_1 \lor \p_2$ are very similar.

Finally, suppose that $S$ consists of  variables and boxed formulas $\bo \p_1,\ldots,\bo \p_n$ ($n \ge 1$). Then for $i \in \{1,\ldots,n\}$, there exists $y_i \in W$ such that $Rxy_i$ and $|V(\bo \p_i,x) - V(\p_i,y_i)| < \frac{\varepsilon}{2}$. We apply the induction hypothesis to each submodel $\fram{M}_i$ of $\fram{M}$ generated by $y_i$ (i.e., the restriction of $\fram{M}$ to the smallest subset of $W$ containing  $y_i$ and closed under $R$) with $S' = (S \setminus \{\bo \p_1,\ldots,\bo \p_n\}) \cup \{\p_1,\ldots,\p_n\}$, $y_i \in W_i$, and $\frac{\varepsilon}{2} > 0$ to obtain a finite $\lgc{K(A)}$-model $\fram{M}'_i= \langle W'_i, R'_i, V'_i \rangle$ and $y_i \in W'_i$ such that $|V(\f,y_i) -V'(\f,y_i)| < \frac{\varepsilon}{2}$ for all $\f \in S'$. By renaming worlds, we may assume that these models are disjoint and do not include $x$. Now let $\fram{M}'= \langle W', R', V' \rangle$ be the finite  $\lgc{K(A)}$-model with $W' = \{x\} \cup W'_1 \cup \ldots \cup W'_n$ such that for $u,v \in W'$ and $p \in \var$,
\[
\begin{array}{rclcrcl}
R'uv & = &
\begin{cases}
R'_iuv & \text{if } u,v \in W'_i\\
1		 & \text{if } u=x, \ v \in \{y_1,\ldots,y_n\}\\
0		 & \text{otherwise}
\end{cases}
& \quad\text{and}\quad &
V'(p,u) & = &
\begin{cases}
V'_i(p,u) & \text{if } u \in W'_i\\
V(p,x)	& \text{if } u=x.
\end{cases}
\end{array}
\]
Clearly $V'(p,x) = V(p,x)$ for each variable $p \in S$. Moreover, $|V(\bo \p_i,x) - V(\p_i,y_i)| < \frac{\varepsilon}{2}$ and  $|V(\p_i,y_j) - V'(\p_i,y_j)| <  \frac{\varepsilon}{2}$ for $i,j \in \{1,\ldots,n\}$, so $|V(\bo \p_i,x) - V'(\bo \p_i,x)| < \varepsilon$.
\end{proof}

As remarked above, the preceding lemma provides some justification both for assuming in the definition of the semantics of $\lgc{K(A)}$ that $\bigwedge_\R \emptyset = \bigvee_\R \emptyset = 0$, and for restricting valuations of variables in a particular $\lgc{K(A)}$-model to a fixed interval. It shows that for determining the valid formulas of $\lgc{K(A)}$, we need only consider finite serial $\lgc{K(A)}$-models. For such frames we can leave $\bigwedge_\R \emptyset$ and $\bigvee_\R \emptyset$ undefined; we can also make use of a standard (unrestricted) valuation map $V \colon \var \times W \to \mathbb{R}$, since  only finite infima and suprema are needed for calculating values of formulas. In principle then, we could define the logic $\lgc{K(A)}$ without extra assumptions by considering only finite serial $\lgc{K(A)}$-models. We prefer here, however, to give a more general semantics and to discover this finite model property as a fact about the logic rather than building it into the definition.

\begin{exa}
Any $\lgc{K(A)}$-model can be viewed as a state transition system, where each state is labelled with a vector of real numbers that represents the values of the variables in $\var$  at that state. Such a transition system may be used to represent choices for various players in a game together with points (or other resources) accumulated by the players during that game. Consider for example  the $\lgc{K(A)}$-model $\fram{M} = \langle W, R, V \rangle$ depicted below, where the vectors {\tiny $\begin{pmatrix} p \\ q \end{pmatrix}$} represent the values of $p$ and $q$, respectively, at each state.

 \begin{center}
 \begin{pspicture}(-5,-1)(5,3.75)
 \psline[](-2,1.3)(0,0)(2,1.3)
 \psline[](-3.6,2.6)(-2,1.3)(-0.4,2.6)
 \psline[](0.4,2.6)(2,1.3)(3.6,2.6)
 \uput[-90](0,0.1){\footnotesize $\begin{pmatrix} 0 \\ 0 \end{pmatrix}$}
 \uput[-150](-1.8,1.3){\footnotesize $\begin{pmatrix} 1 \\ 0 \end{pmatrix}$}
 \uput[-30](1.8,1.3){\footnotesize $\begin{pmatrix} 0 \\ 1 \end{pmatrix}$}
 \uput[40](-4.1,2.6){\footnotesize $\begin{pmatrix} 2 \\ 0 \end{pmatrix}$}
 \uput[40](-0.9,2.6){\footnotesize $\begin{pmatrix} 4 \\ 1 \end{pmatrix}$}
 \uput[40](-0.1,2.6){\footnotesize $\begin{pmatrix} 1 \\ 1 \end{pmatrix}$}
 \uput[40](3.1,2.6){\footnotesize $\begin{pmatrix} 1 \\ 3 \end{pmatrix}$}
 \end{pspicture}
 \end{center}

\noindent 
We can use the model $\fram{M}$ to define various two-player games, where in the first round, starting at the root, Player~$P$ chooses one of several (in this case, two) options, and in the second round, Player~$Q$ also chooses one of several (in this case, also two) options. The points assigned to Player~$P$ and Player~$Q$ at each state are the values of $p$ and $q$, respectively. Let also call the values of $p-q$ and $q-p$ at a state, the {\em scores} for~$P$ and~$Q$, respectively. We assume in all these games that the players have complete knowledge of both $\fram{M}$ and their opponent's goals.

Let us consider some different ways of concluding games based on $\fram{M}$. Suppose that in Game 1 each player's goal is to maximize her final score. Player $P$'s maximal payoff is then the value at the root of the formula $\di \bo (q \to p)$, which is $2$. If Player~$P$'s goal in Game~2 is to maximize her final number of points, and Player~$Q$ aims to minimize this number, then the required formula is $\di \bo p$, which at the root also takes value $2$. Reversing the roles for Game $3$, we obtain the formula $\bo \di q$, which takes value $1$ at the root. More complicated goals can also be modelled. For example, if both players aim to maximize the sum of their scores accumulated during the two rounds, then Player~$P$'s maximal payoff is the value of the formula $\di ((q \to p) \& \bo (q \to p))$ at the root, namely $3$.

Formulas can also be used to express general relationships between games. For example, the model $\fram{M}$ shows that
\[
\not \mdl{\lgc{K(A)}} \di \bo (q \to p) \to (\bo \di q \to \di \bo p).
\]
That is, Player~$P$'s maximal payoff in Game 1 exceeds her maximal payoff in Game 2 minus her maximal payoff in Game 3. On the other hand, it can be shown (e.g., using one of the calculi introduced below) that
\[
\mdl{\lgc{K(A)}} \di \bo (q \to p) \to (\bo \bo q \to \di \bo p).
\]
This means that if the goals of Games 1 and 2 are adopted with respect to an arbitrary $\lgc{K(A)}$-model $\fram{M}'$ based on a finite rooted tree with branches of length $2$, then Player~$P$'s maximal payoff in Game 1 for $\fram{M}'$  is always less than or equal to her maximal payoff in Game 2 for $\fram{M}'$ minus the minimum number of points for Player~$Q$ in all final states of $\fram{M}'$. 
\end{exa}

%%%%%%%%%%%%%%%%%%%%%%%%%%%%%%%%%%%%%%%%%%%%%%%%%%%%%%%%%

\subsection{\L ukasiewicz Modal Logic} \label{ss:lukmodal}

Let us briefly recall the semantics of the \L ukasiewicz modal logic $\lgc{K(\mathrmL)}$ studied by Hansoul and Teheux in~\cite{HT13}. For convenience, we make use of a language $\lang_{\mathrmL}^\bo$ with the binary connective $\lukimp$ and unary connectives $\luknot$ and $\bo$, where further connectives are defined as $\f \oplus \p := \luknot \f \lukimp \p$, $\f \odot \p :=  \luknot (\luknot \f \oplus \luknot \p)$, $\f \lor	 \p  := (\f \lukimp \p) \lukimp \p$, $\f \land \p := \luknot (\luknot \f \lor \luknot \p)$, and $\di \f 	:=  \luknot \bo \luknot \f$.

A \emph{$\lgc{K(\mathrmL)}$-model} $\fram{M} = \langle W, R, V \rangle$ consists of a frame $\langle W, R \rangle$ and a valuation map $V \colon \var \times W \to [0,1]$ that is extended to $V \colon \fm(\lang_{\mathrmL}^\bo) \times W \to [0,1]$ by 
\[
\begin{array}{rcl}
V(\luknot \f, x) & = & 1 - V(\f,x)\\[.025in]
V(\f \lukimp \p, x) & = & \min(1, 1 - V(\f,x) + V(\p,x))\\[.025in]
V(\bo \f, x) & = &  \bigwedge_{[0,1]} \{V(\f, y) : Rxy \}.
\end{array}
\]
An $\lang_{\mathrmL}^\bo$-formula $\f$ is \emph{valid} in a $\lgc{K(\mathrmL)}$-model $\fram{M} = \langle W, R,V \rangle$ if $V(\f,x) = 1$ for all $x \in W$. If $\f$ is valid in all $\lgc{K(\mathrmL)}$-models, then $\f$ is \emph{$\lgc{K(\mathrmL)}$-valid}, written $\mdl{\lgc{K(\mathrmL)}} \f$.
 
An axiom system for $\lgc{K(\mathrmL)}$ is presented in~\cite{HT13} as an extension of an axiomatization of infinite-valued \L ukasiewicz logic with the modal axioms and rules
\[
\begin{array}{c}
\bo (\f \lukimp \p) \lukimp (\bo \f \lukimp \bo \p)\\[.025in]
\bo(\f \oplus \f) \lukimp (\bo \f \oplus \bo \f)\\[.025in]
\bo(\f \odot \f) \lukimp (\bo \f \odot \bo \f)\\[.075in]
\infer{\bo \f}{\f} 
\end{array}
\]
and the following rule with infinitely many premises
\[
 \infer{\f}{\f \oplus \f & \f \oplus (\f \odot \f) & \f \oplus (\f \odot \f \odot \f) & \ldots}
 \]
It is proved that an $\lang_{\mathrmL}^\bo$-formula $\f$ is derivable in this system if and only if $\mdl{\lgc{K(\mathrmL)}} \f$.\footnote{In fact, the authors of~\cite{HT13} prove a more general {\em strong completeness} result: an $\lang_{\mathrmL}^\bo$-formula $\f$ is derivable from a (possibly infinite) set of $\lang_{\mathrmL}^\bo$-formulas $\Si$ in the system if and only if for every $\lgc{K(\mathrmL)}$-model $\langle W, R,V \rangle$ and $x \in W$, whenever $V(\p,x) = 1$ for all $\p \in \Si$, also $V(\f,x) = 1$. Note that an infinitary rule is needed to obtain  a strong completeness theorem even for propositional {\L}ukasiewicz logic and Abelian logic. However, in this paper we establish only (weak) completeness results.}

This raises an intriguing question. Is there an elegant axiomatization containing only finitary rules, obtained perhaps by removing the infinitary rule above? Our first step towards addressing this issue will be to view \L ukasiewicz modal logic $\lgc{K(\mathrmL)}$ as a fragment of a modest extension of the Abelian modal logic $\lgc{K(A)}$. Let $\lang_{\lgc{A^c}}^\bo$ be the language $\lang_{\lgc{A}}^\bo$ extended with an extra constant $c$. A $\lgc{K(A^c)}$-model $\fram{M} = \langle W, R, V, c^\fram{M} \rangle$ consists of a $\lgc{K( A)}$-model $\langle W, R,V \rangle$ and an element $c^\fram{M} \in \R$, where valuations are extended as before, using the additional clause $V(c,x) =  c^\fram{M}$ for each $x \in W$. 

Let us fix $\bot := c \land \lnot c$ and define the following mapping from $\fm(\lang_{\mathrmL}^\bo)$ to $\fm(\lang_{\lgc{A^c}}^\bo)$:
\[
\begin{array}{rcl}
p^* 			& = & (p \land \0) \lor \bot \ \text{ for each} \ p \in \var\\[.025in]
(\luknot \f)^*  		& = & \f^* \to \bot\\[.025in]
(\f \lukimp \p)^*	& = & (\f^* \to \p^*) \land \0\\[.025in]
(\bo \f)^*		& = & \bo \f^*.
\end{array}
\]
We show that this mapping preserves validity between $\lgc{K(\mathrmL)}$ and $\lgc{K(A^c)}$ by identifying the value taken by an $\lang_{\mathrmL}^\bo$-formula in $[0,1]$ with the value taken by the corresponding $\lang_{\lgc{A^c}}^\bo$-formula in the interval $[-|c|,0]$.

\begin{prop}
Let  $\f \in \fm(\lang_{\mathrmL}^\bo)$. Then $\mdl{\lgc{K(\mathrmL)}} \f$ if and only if $\mdl{\lgc{K(A^c)}} \f^*$.
\end{prop}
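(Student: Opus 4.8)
The plan is to prove a pointwise correspondence between the two semantics and then derive the validity equivalence as a corollary. The key technical claim is the following: for every $\lgc{K(\mathrmL)}$-model $\fram{M} = \langle W, R, V \rangle$ there is a naturally associated $\lgc{K(A^c)}$-model $\fram{M}^\sharp$ (built on the same frame, with $c^{\fram{M}^\sharp}$ chosen negative, say $c^{\fram{M}^\sharp} = -1$, and with $V^\sharp(p,x) = V(p,x) - 1$, so that the $[0,1]$-valued variable $p$ is shifted into $[-1,0]$) such that for every $\f \in \fm(\lang_{\mathrmL}^\bo)$ and every $x \in W$,
\[
V^\sharp(\f^*, x) = V(\f,x) - 1.
\]
In other words, $\f^*$ always lands in $[-|c|,0] = [-1,0]$ and tracks the \L ukasiewicz value via the affine map $t \mapsto t-1$. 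The designated-value conditions then match up: $V(\f,x) = 1$ iff $V^\sharp(\f^*,x) = 0$, i.e.\ iff $V^\sharp(\f^*,x) \ge 0$ (since the value never exceeds $0$). Conversely, any $\lgc{K(A^c)}$-model in which $\f^*$ fails to be valid can, after noting the value of $\f^*$ is confined to $[-|c|,0]$, be rescaled by $|c|$ to produce a $\lgc{K(\mathrmL)}$-model where $\f$ fails.

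First I would establish the base and Boolean-like cases of the pointwise claim by induction on the structure of $\f$. The crucial observation is how the \L ukasiewicz operations are simulated in $\mathbf{R}$ on the interval $[-1,0]$. Writing $\bot$-value as $V^\sharp(\bot,x) = V^\sharp(c \land \lnot c, x) = \min(-1, 1) = -1$, the clipping built into $p^* = (p \land \0) \lor \bot$ enforces $V^\sharp(p^*,x) = \max(\min(V^\sharp(p,x),0), -1)$, which equals $V(p,x)-1$ precisely because $V(p,x) \in [0,1]$. For negation, $(\luknot \f)^* = \f^* \to \bot$ gives value $-1 - V^\sharp(\f^*,x) = -1-(V(\f,x)-1) = -V(\f,x) = (1-V(\f,x))-1$, matching $V(\luknot\f,x)-1$. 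For implication, $(\f \lukimp \p)^* = (\f^* \to \p^*) \land \0$ gives $\min(V^\sharp(\p^*,x) - V^\sharp(\f^*,x), 0) = \min((V(\p,x)-1)-(V(\f,x)-1), 0) = \min(V(\p,x)-V(\f,x),0)$, which is exactly $V(\f \lukimp \p,x) - 1$ since $\min(1, 1-V(\f,x)+V(\p,x)) - 1 = \min(0, V(\p,x)-V(\f,x))$.

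For the modal case I would use $(\bo \f)^* = \bo \f^*$ directly: by the induction hypothesis $V^\sharp(\f^*,y) = V(\f,y)-1$ for all $y$, and since infimum commutes with the order-preserving shift by $-1$, we get $V^\sharp(\bo\f^*,x) = \bigwedge_\R\{V(\f,y)-1 : Rxy\} = \big(\bigwedge\{V(\f,y):Rxy\}\big) - 1 = V(\bo\f,x)-1$, where the empty-infimum convention $\bigwedge_\R \emptyset = 0$ on the $\lgc{K(A)}$ side must be reconciled with the $\bigwedge_{[0,1]}\emptyset = 1$ convention on the \L ukasiewicz side --- and indeed these are consistent under the shift, since $1 - 1 = 0$. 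This boundary point is the step I expect to need the most care: one must check that the two distinct empty-set conventions agree under the affine identification, and that the chosen interval $[-|c|,0]$ genuinely contains all values of starred formulas, which is what makes the backward direction (rescaling an arbitrary $\lgc{K(A^c)}$-countermodel by $\tfrac{1}{|c|}$ to recover a $\lgc{K(\mathrmL)}$-countermodel) legitimate. Everything else reduces to the routine affine bookkeeping sketched above.
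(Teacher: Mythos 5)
Your proposal is correct and follows essentially the same route as the paper's proof: the forward direction uses the identical shift $V'(p,x) = V(p,x)-1$ with $c^{\fram{M}} = -1$ and the same induction showing $V'(\f^*,x) = V(\f,x)-1$ (including the reconciliation of the two empty-infimum conventions), while the backward direction uses the same rescaling-plus-clipping construction. The one point you should make explicit is that $c^{\fram{M}} = 0$ must be ruled out before dividing by $|c|$ --- the paper does this by a short induction showing that $\f^*$ would then be identically $0$, and your own observation that starred formulas take values in $[-|c|,0]$ yields the same conclusion, but it needs to be invoked for exactly this purpose.
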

\begin{proof}
Suppose first that $\f$ is not valid in a $\lgc{K(\mathrmL)}$-model $\fram{M} = \langle W, R, V \rangle$. So $V(\f,x) < 1$ for some $x \in W$. We consider the $\lgc{K(A^c)}$-model $\fram{M}' = \langle W, R, V', c^\fram{M} \rangle$ where $V'(p,x) = V(p,x) - 1$ for any $p \in \var$ and $x \in W$, and $c^\fram{M} = -1$. It suffices to prove that $V'(\p^*,x) = V(\p,x) - 1$ for any $\p \in \fm(\lang_{\mathrmL}^\bo)$, since then $V'(\f^*,x) = V(\f,x) - 1 < 0$ and $\not \mdl{\lgc{K(A^c)}} \f^*$. We proceed by induction on the complexity of $\p$. The base case follows by definition and for the inductive step for the propositional connectives, we just notice that, using the induction hypothesis,
\[
\begin{array}{rcl}
V'((\p_1 \lukimp \p_2)^*,x) & = & V'((\p_1^* \to \p_2^*) \land \0,x)\\[.025in]
& = & \min(V'(\p_2^*,x)-V'(\p_1^*,x),0)\\[.025in]
& = & \min((V(\p_2,x) - 1) -(V(\p_1,x) - 1),0)\\[.025in]
& = & \min(V(\p_2,x) - V(\p_1,x),0)\\[.025in]
& = & \min(1 - V(\p_1,x) + V(\p_2,x),1) - 1 \\[.025in]
& = & V(\p_1  \lukimp \p_2,x) - 1,
\end{array}
\]
the case where $\p$ is $\luknot \p_1$ being very similar. For the modal case, we obtain
\[
\begin{array}{rcl}
V'((\bo \p_1)^*,x) & = & V'(\bo \p_1^*,x)\\[.025in]
& = & \bigwedge_{\R} \{V'(\p_1^*,y) : Rxy\}\\[.025in]
& = & \bigwedge_{\R} \{V(\p_1,y) - 1 : Rxy\}\\[.025in]
& = & \bigwedge_{[0,1]} \{V(\p_1,y) : Rxy\} - 1\\[.025in]
& = & V(\bo \p_1,x) - 1,
\end{array}
\]
noting that in the case where $R[x] = \emptyset$, we obtain $\bigwedge_{\R} \{V'(\p_1^*,y) : Rxy\} = 0 = 1 - 1 = \bigwedge_{[0,1]} \{V(\p_1,y) : Rxy\} - 1$ as required.

Suppose now conversely that $\f^*$ is not valid in a $\lgc{K(A^c)}$-model $\fram{M} = \langle W, R, V, c^\fram{M} \rangle$. That is, $V(\f^*,x) < 0$ for some $x \in W$. Observe first that if $c^\fram{M} = 0$, then, by a simple induction on the complexity of $\f$, we obtain  $V(\f^*,x) = 0$ for all $x \in W$, a contradiction. Hence $c^\fram{M} \neq 0$. Moreover,  by scaling (dividing $V(p,x)$ by $c^\fram{M}$ for each $p \in \var$ and $x \in W$), we may assume that $V(\bot,x) = -1$ for all $x \in W$. We consider the $\lgc{K(\mathrmL)}$-model $\fram{M}' = \langle W, R, V' \rangle$ where $V'(p,x) = \max(\min(V(p,x) + 1,1),0)$. It then suffices to prove that $V'(\p,x) = V(\p^*,x) + 1$  for any $\p \in \fm(\lang_{\mathrmL}^\bo)$, proceeding by induction on the complexity of $\p$. 
\end{proof}

Note that the addition of a constant $c$ to $\lgc{K(A)}$ does not affect the fact that validity in the logic is equivalent to validity in finite models. It does, however, introduce a difference between the logic $\lgc{K(A^c)}$ and the same logic restricted to serial models. Clearly, the formula $c \to \bo c$ is valid in all serial models, but not in all models. 

%%%%%%%%%%%%%%%%%%%%%%%%%%%%%%%%%%%%%%%%%%%%%%%%%%%%%%%%%

\section{A Labelled Tableau Calculus}

%%%%%%%%%%%%%%%%%%%%%%%%%%%%%%%%%%%%%%%%%%%%%%%%%%%%%%

In this section we introduce a labelled tableau calculus for checking $\lgc{K(A)}$-validity that is based very closely on the Kripke semantics described above. We use the calculus here to show that the problem of checking $\lgc{K(A)}$-validity is in the complexity class {\sc coNEXPTIME}. In Section~\ref{s:fragment}, we will also use (a fragment of) the calculus to establish the completeness of an axiom system and a sequent calculus admitting cut-elimination for the modal-multiplicative fragment of $\lgc{K(A)}$.

%==========================================
\subsection{The Calculus}
%==========================================

Our labelled tableau calculus $\lgc{LK(A)}$ proves that an $\lang_{\lgc{A}}^\bo$-formula $\f$ is valid by showing that the assumption that $\f$ takes a value less than $0$ in some world $w_1$ leads to a contradiction. Informally, we  build a tableau for $\f$ as follows. First we decompose the propositional structure of $\f$ to obtain inequations between sums of formulas labelled with the world $w_1$. We then use box formulas occurring on the right of these inequations to generate new worlds accessible to $w_1$ and further inequations between sums of formulas labelled with $w_1$ and these accessible worlds.  Box formulas on the left are decomposed by considering accessible worlds to $w_1$ and generating new inequations for those worlds. The process is then repeated with the new inequations and worlds appearing on the tableau. The formula $\f$ will be valid if the generated set of inequations (suitably interpreted) on each branch of the tableau is unsatisfiable over the real numbers.

By a {\em labelled formula} we mean an ordered pair consisting of an $\lang_{\lgc{A}}^\bo$-formula $\f$ and a natural number $k$, written $\lab{\f}{k}$. Given a multiset of $\lang_{\lgc{A}}^\bo$-formulas $\Ga = [\f_1,\ldots,\f_n]$ (denoting the empty multiset by $[]$) and $k_1,\ldots,k_n \in \N$, we let $\lab{\Ga}{\vecn{k}}$ denote the multiset of labelled formulas $[\lab{\f_1}{k_1},\ldots,\lab{\f_n}{k_n}]$. 

Tableaux are constructed from {\em (tableau) nodes} of two types:\medskip

\begin{enumerate}[label=(\arabic*)]

\item
{\em labelled inequations} of the form $\lab{\Ga}{\vecn{k}} \aineq \lab{\De}{\vecn{l}}$ where $\aineq \in \{>,\geq\}$ and $\lab{\Ga}{\vecn{k}},\lab{\De}{\vecn{l}}$ are finite multisets of labelled formulas;\medskip

\item
{\em relations} of the form $rij$ where $i,j \in \N$.\medskip

\end{enumerate}

\noindent
An  $\lgc{LK(A)}$-{\em tableau} is a finite tree of nodes generated according to the inference rules of the system presented in Figure~\ref{f:lkz}. That is, if nodes above the line in an instance of a rule occur on the same branch $B$, then $B$ can be extended with the nodes below the line. For convenience, we often write branches as (numbered) lists, noting for future reference that tableaux for formulas in the modal-multiplicative fragment (i.e., not containing $\land$ or $\lor$) consist of just one branch.

\newcommand{\premiseszerol}{
\lab{\Ga}{\vecn{k}}, \lab{\0}{i} \aineq \lab{\De}{\vecn{l}}
}
\newcommand{\conclusionszerol}{
\lab{\Ga}{\vecn{k}} \aineq  \lab{\De}{\vecn{l}}
}
\newcommand{\premiseszeror}{
\lab{\Ga}{\vecn{k}} \aineq \lab{\0}{i}, \lab{\De}{\vecn{l}}
}
\newcommand{\conclusionszeror}{
\lab{\Ga}{\vecn{k}} \aineq \lab{\De}{\vecn{l}}
}
\newcommand{\premisesmultl}{
\lab{\Ga}{\vecn{k}}, \lab{\f\&\p}{i} \aineq \lab{\De}{\vecn{l}}
}
\newcommand{\conclusionsmultl}{
\lab{\Ga}{\vecn{k}}, \lab{\f}{i}, \lab{\p}{i} \aineq  \lab{\De}{\vecn{l}}
}
\newcommand{\premisesmultr}{
\lab{\Ga}{\vecn{k}} \aineq \lab{\f \& \p}{i}, \lab{\De}{\vecn{l}}
}
\newcommand{\conclusionsmultr}{
\lab{\Ga}{\vecn{k}} \aineq \lab{\f}{i}, \lab{\p}{i}, \lab{\De}{\vecn{l}}
}
\newcommand{\premisesbor}{
\lab{\Ga}{\vecn{k}} \aineq \lab{\bo \f}{i}, \lab{\De}{\vecn{l}}
}
\newcommand{\conclusionsbor}{
\begin{array}{c}
\lab{\bo \f}{i} \ge \lab{\f}{j} \\
rij 
\end{array}}
\newcommand{\premisesbol}{
\begin{array}{c}
rij\\
\lab{\Ga}{\vecn{k}}, \lab{\bo \f}{i} \aineq \lab{\De}{\vecn{l}}
\end{array}}
\newcommand{\conclusionsbol}{
\begin{array}{c}
\lab{\f}{j} \ge \lab{\bo \f}{i}\\
\end{array}}
\newcommand{\premisesimpl}{
\lab{\Ga}{\vecn{k}}, \lab{\f\to\p}{i} \aineq \lab{\De}{\vecn{l}}
}
\newcommand{\conclusionsimpl}{
\lab{\Ga}{\vecn{k}}, \lab{\p}{i} \aineq \lab{\f}{i}, \lab{\De}{\vecn{l}}
}
\newcommand{\premisesimpr}{
\lab{\Ga}{\vecn{k}} \aineq \lab{\f \to \p}{i}, \lab{\De}{\vecn{l}}
}
\newcommand{\conclusionsimpr}{
\lab{\Ga}{\vecn{k}}, \lab{\f}{i} \aineq \lab{\p}{i}, \lab{\De}{\vecn{l}}
}
\newcommand{\premisesveel}{
\lab{\Ga}{\vecn{k}}, \lab{\f \lor \p}{i} \aineq  \lab{\De}{\vecn{l}}
}
\newcommand{\conclusionsveel}{
\lab{\Ga}{\vecn{k}}, \lab{\f}{i} \aineq \lab{\De}{\vecn{l}} \quad  \lab{\Ga}{\vecn{k}}, \lab{\p}{i} \aineq \lab{\De}{\vecn{l}}
}
\newcommand{\premisesveer}{
\lab{\Ga}{\vecn{k}} \aineq \lab{\f \lor \p}{i}, \lab{\De}{\vecn{l}}
}
\newcommand{\conclusionsveer}{
\begin{array}{c}
\lab{\Ga}{\vecn{k}} \aineq \lab{\f}{i}, \lab{\De}{\vecn{l}} \\ 
\lab{\Ga}{\vecn{k}} \aineq \lab{\p}{i}, \lab{\De}{\vecn{l}}
\end{array}
}
\newcommand{\premiseswedl}{
\lab{\Ga}{\vecn{k}}, \lab{\f \land \p}{i} \aineq  \lab{\De}{\vecn{l}}
}
\newcommand{\conclusionswedl}{
\begin{array}{c}
\lab{\Ga}{\vecn{k}}, \lab{\f}{i} \aineq  \lab{\De}{\vecn{l}} \\ 
\lab{\Ga}{\vecn{k}}, \lab{\p}{i} \aineq  \lab{\De}{\vecn{l}}
\end{array}
}
\newcommand{\premiseswedr}{
\lab{\Ga}{\vecn{k}}  \aineq \lab{\f \land \p}{i}, \lab{\De}{\vecn{l}}
}
\newcommand{\conclusionswedr}{
\lab{\Ga}{\vecn{k}}  \aineq \lab{\f}{i}, \lab{\De}{\vecn{l}} \quad \lab{\Ga}{\vecn{k}} \aineq \lab{\p}{i}, \lab{\De}{\vecn{l}}
}

\begin{figure}[tbp] 
\centering
\fbox{
\begin{minipage}{14.5 cm}
\[
\begin{array}{cc}
\infer[\zerolr]{\conclusionszerol}{\premiseszerol} 
&
\infer[\zerorr]{\conclusionszeror}{\premiseszeror}  \qquad \qquad \qquad \\[.2in]
\infer[\multlr]{\conclusionsmultl}{\premisesmultl} 
&
\infer[\multrr]{\conclusionsmultr}{\premisesmultr}  \qquad \qquad \qquad \\[.2in]
\infer[\implr]{\conclusionsimpl}{\premisesimpl} 
&
\infer[\imprr]{\conclusionsimpr}{\premisesimpr}  \qquad \qquad \qquad \\[.2in]
\infer[\wedl]{\conclusionswedl}{\premiseswedl} 
&
\infer[\wedr]{\conclusionswedr}{\premiseswedr}  \qquad \qquad \qquad  \\[.3in]
\infer[\lorl]{\conclusionsveel}{\premisesveel} 
&
\infer[\lorr]{\conclusionsveer}{\premisesveer}  \qquad \qquad \qquad  \\[.2in]
\infer[\bolr]{\conclusionsbol}{\premisesbol}
& 
\infer[\borr]{\conclusionsbor}{\premisesbor}
\hspace{-.5in}j \in \N \text{ new} \hspace{1.7cm}
\end{array}
\]
\[
\infer[\ex]{rkj}{rik} \ j \in \N \text{ new}
\]
\caption{The labelled tableau calculus $\lgc{LK(A)}$} \label{f:lkz}
\end{minipage}}
\end{figure}

 Observe that the rules for $\0$, $\&$, $\to$, $\land$, and $\lor$ decompose formulas on the left and right of inequations, using the same label for added subformulas, while the rules for $\bo$ introduce inequations between a boxed formula $\bo \f$ labelled with $i$, and $\f$ labelled with a different $j$. Let us note also that the premises $\lab{\Ga}{\vecn{k}}, \lab{\bo \f}{i} \aineq \lab{\De}{\vecn{l}}$ and $\lab{\Ga}{\vecn{k}} \aineq \lab{\bo \f}{i}, \lab{\De}{\vecn{l}}$ of $\bolr$ and $\borr$, respectively, are not, strictly speaking, necessary for either the soundness or the completeness of the calculus. However, they restrict the decomposition of boxed formulas to those occurring as subformulas of the initial formula, thereby ensuring a subformula property for the calculus.

Let ${\rm LVar}$ be the set of all formulas of the form $\lab{p}{i}$ and $\lab{\bo \f}{i}$ for $p \in \var$, $\f \in \fm(\lang_{\lgc{A}}^\bo)$, and $i \in \N$, considered as a set of variables. Given an $\lgc{LK(A)}$-tableau $T$ and a branch $B$ of $T$, the {\em system of inequations $S$ associated to $B$} consists of all labelled inequations occurring on $B$ that contain only formulas from ${\rm LVar}$. Each labelled inequation in $S$ is interpreted here as an inequation between formal sums (where addition is over the multisets occurring in the labelled inequation and the empty multiset is $0$) of variables from ${\rm LVar}$. We call the branch $B$ {\em open} if the set of inequations $S$ associated to $B$ is consistent over $\R$, and {\em closed} otherwise. The tableau $T$ is called {\em closed} if all of its branches are closed, and {\em open} if it has at least one open branch.

A  {\em tableau for an $\lang_{\lgc{A}}^\bo$-formula} $\f$ is an $\lgc{LK(A)}$-tableau with root node $[] > [\lab{\f}{1}]$ and covering node $r12$. We say that $\f$ is {\em $\lgc{LK(A)}$-derivable}, written $\der{\lgc{LK(A)}} \f$, if there exists a closed tableau for $\f$.

\begin{exa}
The seriality axiom $\bo p \to \di p$ is $\lgc{LK(A)}$-derivable using the tableau\smallskip
\begin{center}
\begin{tabular}{rl}
{\footnotesize$1:$}	&	$[] > \lab{\bo p \to (\bo(p \to \0) \to \0)}{1}$\\
{\footnotesize$2:$}	&	$r12$\\
{\footnotesize$3:$}	&	$\lab{\bo p}{1} > \lab{\bo(p \to \0) \to \0}{1}$\\
{\footnotesize$4:$}	&	$\lab{\bo p}{1}, \lab{\bo(p\to\0)}{1} > \lab{\0}{1}$\\
{\footnotesize$5:$}	&	$\lab{\bo p}{1}, \lab{\bo(p\to\0)}{1} > []$\\
{\footnotesize$6:$}	&	$\lab{p}{2} \geq \lab{\bo p}{1}$ \\
{\footnotesize$7:$}	&	$\lab{p \to \0}{2} \geq \lab{\bo(p\to \0)}{1}$ \\
{\footnotesize$8:$}	&	$\lab{\0}{2} \geq \lab{p}{2}, \lab{\bo(p\to \0)}{1}$ \\
{\footnotesize$9:$}	&	$[] \geq \lab{p}{2}, \lab{\bo(p\to \0)}{1}$ \\
\end{tabular}
\end{center}\smallskip
which generates a (single) inconsistent system of inequations over $\R$
\[
\{
  x + y > 0, \
  z \geq x, \
  0 \geq z + y
\}
\]
where  $x$, $y$, and $z$ stand for $\lab{\bo p}{1}$, $\lab{\bo(p \to \0)}{1}$, and $\lab{p}{2}$, respectively.
\end{exa}

The calculus $\lgc{LK(A)}$ can also be used to prove that an $\lang_{\lgc{A}}^\bo$-formula is {\em not} $\lgc{K(A)}$-valid; indeed a concrete counter-model for such a formula can be constructed from an open branch of a tableau where, taking care to avoid loops, the rules have been applied exhaustively.

\begin{exa}
Consider a tableau for the formula $\bo(p \lor q) \to (\bo p \lor \bo q)$ that begins with
\begin{center}
\begin{tabular}{rl}
{\footnotesize$1:$}	&	$[] > \lab{\bo(p \lor q) \to (\bo p \lor \bo q)}{1}$\\
{\footnotesize$2:$}	&	$r12$\\
{\footnotesize$3:$}	&	$\lab{\bo(p \lor q)}{1} > \lab{\bo p \lor \bo q}{1}$\\
{\footnotesize$4:$}	&	$\lab{\bo(p \lor q)}{1} > \lab{\bo p}{1}$\\
{\footnotesize$5:$}	&	$\lab{\bo(p \lor q)}{1} > \lab{\bo q}{1}$\\
{\footnotesize$6:$}	&	$\lab{\bo p}{1}  \geq \lab{p}{3}$\\
{\footnotesize$7:$}	&	$r13$\\
{\footnotesize$8:$}	&	$\lab{\bo q}{1}  \geq \lab{q}{4}$\\
{\footnotesize$9:$}	&	$r14$\\
{\footnotesize$10:$}	&	$\lab{p \lor q}{2} \geq \lab{\bo(p \lor q)}{1}$\\
{\footnotesize$11:$}	&	$\lab{p \lor q}{3} \geq \lab{\bo(p \lor q)}{1}$\\
{\footnotesize$12:$}	&	$\lab{p \lor q}{4} \geq \lab{\bo(p \lor q)}{1}$\\
\end{tabular}
\end{center}
then continues by splitting into two subtrees, namely\medskip
\begin{prooftree}\rootAtTop
\AxiomC{$\lab{p}{4} \geq \lab{\bo(p \lor q)}{1}$}
\AxiomC{$\lab{q}{4} \geq \lab{\bo(p \lor q)}{1}$}
\BinaryInfC{$\lab{p}{3} \geq \lab{\bo(p \lor q)}{1}$}
\AxiomC{$\lab{p}{4} \geq \lab{\bo(p \lor q)}{1}$}
\AxiomC{$\lab{q}{4} \geq \lab{\bo(p \lor q)}{1}$}
\BinaryInfC{$\lab{q}{3} \geq \lab{\bo(p \lor q)}{1}$}
\BinaryInfC{$\lab{p}{2} \geq \lab{\bo(p \lor q)}{1}$}
\end{prooftree}\medskip
 and a second that is exactly the same except that the root is $\lab{q}{2} \geq \lab{\bo(p \lor q)}{1}$.
 
 Observe now that the systems of inequations for the two leftmost branches of the subtree above are both inconsistent, since, combining inequations, we obtain
 \[
 \lab{p}{3} \geq \lab{\bo(p \lor q)}{1} > \lab{\bo p}{1} \geq \lab{p}{3}.
 \]
 Similarly, the system of inequations for the rightmost branch is inconsistent, since we obtain
 \[
 \lab{q}{4} \geq \lab{\bo(p \lor q)}{1} > \lab{\bo q}{1} \geq \lab{q}{4}.
 \]
 The  system of inequations for the remaining branch is consistent, however. Let us denote each $\lab{p}{i}$ and  $\lab{q}{i}$ by $x_i$ and $y_i$, respectively, for $i = 2,3,4$, $\lab{\bo p}{1}$ and $\lab{\bo q}{1}$ by $x'_1$ and $y'_1$, respectively, and $\lab{\bo(p \lor q)}{1}$ by $z$. Then for this branch, we obtain the set of inequations
 \[
 \{z > x'_1, \, z > y'_1, \, x'_1 \geq x_3, \, y'_1 \geq y_4, \, x_2 \geq z, \, y_3 \geq z, \, x_4 \geq z\}
 \]
 which can be satisfied over $\R$ by taking, e.g., 
 \[
 x_2 = 3, \,  x_3 = 0, \, x_4 = 3, \, y_3 = 3, \, y_4 = 0, \, x'_1 = 1,  \, y'_1 = 1,  \, z = 2.
 \]
 We obtain a $\lgc{K(A)}$-model $\fram{M} = \langle W, R, V \rangle$ by identifying $w_i$ in $W$ with each $i \in \N$ occurring on the branch and including $\langle w_i,w_j \rangle$ in $R$ whenever $rij$ appears; that is, $W = \{w_1,w_2,w_3,w_4\}$ and $R = \{\langle w_1,w_2 \rangle, \langle w_1,w_3 \rangle, \langle w_1,w_4 \rangle\}$. We also use the assignment satisfying the set of inequations to define (the other values are unimportant)
 \[
 V(p,w_2) = V(p,w_4) = V(q,w_3) = 3 \quad \text{and} \quad V(q,w_2) = V(p,w_3) = V(q,w_4) = 0.
 \]
 Then $V(\bo (p \lor q),w_1) = 3$ and $V(\bo p \lor \bo q,w_1) = 0$, so $V(\bo(p \lor q) \to (\bo p \lor \bo q),w_1) = -3$. 
\end{exa}

%==========================================
\subsection{Soundness}\label{ss:soundness}
%==========================================

Let $T$ be an $\lgc{LK(A)}$-tableau and let $B$ be a branch of $T$. We call a serial $\lgc{K(A)}$-model $\fram{M} = \langle W, R, V \rangle$  {\em faithful to $B$} if there is a map $f \colon \N \to W$ (said to {\em show} that $\fram{M}$ is faithful to $B$) such that if $rij$ occurs on $B$, then $Rf(i)f(j)$, and for every inequation $\lab{\f_1}{i_1},\ldots,\lab{\f_n}{i_n} \aineq \lab{\p_1}{j_1},\ldots,\lab{\p_m}{j_m}$ occurring on $B$,
\[
 V(\f_1,f(i_1)) + \ldots + V(\f_n,f(i_n)) \,\aineq\, V(\p_1,f(j_1)) + \ldots + V(\p_m,f(j_m)).
\]
We say that $\fram{M}$ is {\em faithful to $T$} if $\fram{M}$ is faithful to a branch $B$ of $T$. Observe that in this case, the map defined by $e(\lab{p}{i}) = V(p,i)$ and $e(\lab{\bo \f}{i}) = V(\bo \f, i)$ satisfies the system of inequations associated to $B$, and hence $T$ is open.  

The following lemma establishes the soundness of the rules of $\lgc{LK(A)}$.

\begin{lem}\label{LK:soundness lemma}
 Let $\fram{M} = \langle W, R, V \rangle$ be a finite serial $\lgc{K(A)}$-model faithful to a branch $B$ of an $\lgc{LK(A)}$-tableau $T$. If a rule of $\lgc{LK(A)}$ is applied to $B$, giving a tableau $T'$ extending $T$, then $\fram{M}$ is faithful to $T'$.
\end{lem}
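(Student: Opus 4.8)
The plan is to fix a map $f \colon \N \to W$ showing that $\fram{M}$ is faithful to $B$ and to proceed by a case analysis on the rule applied. In each case I keep $f$ unchanged, or, when the rule introduces a label $j \in \N$ that is new (hence occurs in no node of $B$), reassign $f(j)$ to a suitable world; since $j$ is new, this reassignment does not disturb the satisfaction of any existing node. Because the only branch of $T$ affected by a rule application is $B$, it then suffices in each case to exhibit one branch of $T'$ together with a map witnessing that $\fram{M}$ is faithful to it.

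For the single-conclusion propositional rules $\zerolr$, $\zerorr$, $\multlr$, $\multrr$, $\implr$, and $\imprr$, the same map $f$ works, because the new inequation is in fact \emph{equivalent} to the one it replaces. The clauses $V(\0,x)=0$, $V(\f\&\p,x)=V(\f,x)+V(\p,x)$, and $V(\f\to\p,x)=V(\p,x)-V(\f,x)$ show that each decomposition merely rewrites one formal sum as an equal one, discarding a $\0$-summand, splitting a $\&$-summand, or shifting $\f$ across the inequality sign in the two $\to$-cases. Hence faithfulness to the single new branch is immediate.

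The branching rules split $B$ into two branches, and here I use the semantics of $\min$ and $\max$. For $\wedl$ and $\lorr$ the decomposition only \emph{weakens} the inequation: since $V(\f\land\p,x)=\min(V(\f,x),V(\p,x))$, replacing $\f\land\p$ on the left by either conjunct can only increase the left-hand sum, and dually, since $V(\f\lor\p,x)=\max(V(\f,x),V(\p,x))$, replacing $\f\lor\p$ on the right by either disjunct can only decrease the right-hand sum; so the inequation $(\cdots)\aineq(\cdots)$ is preserved and $\fram{M}$ is in fact faithful to \emph{both} new branches. For $\lorl$ and $\wedr$ I instead use that $\max(a,b)$ (respectively $\min(a,b)$) equals one of $a,b$: I choose the branch in which the relevant disjunct attains the maximum on the left (respectively the relevant conjunct attains the minimum on the right), so that the corresponding sum is left unchanged and $\fram{M}$ is faithful to that branch.

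The modal and existence rules are where finiteness and seriality enter, and $\borr$ is the main obstacle. For $\bolr$, the premise supplies $rij$ on $B$, so $Rf(i)f(j)$ holds; since $V(\bo\f,f(i))=\bigwedge_\R\{V(\f,y):Rf(i)y\}$ is a lower bound for the accessible values, we get $V(\f,f(j))\ge V(\bo\f,f(i))$, verifying $\lab{\f}{j}\ge\lab{\bo\f}{i}$ with $f$ unchanged. For $\ex$, the premise gives $Rf(i)f(k)$, so $f(k)\in W$, and seriality yields some $y$ with $Rf(k)y$; reassigning $f(j)=y$ for the new label $j$ satisfies $rkj$. The crux is $\borr$, where for the new label $j$ I must produce a world with both $Rf(i)f(j)$ and $V(\bo\f,f(i))\ge V(\f,f(j))$. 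Because $\fram{M}$ is serial, the set $\{V(\f,y):Rf(i)y\}$ is non-empty, and because $\fram{M}$ is finite this set is finite, so its infimum is \emph{attained} at some accessible world $y^\ast$; reassigning $f(j)=y^\ast$ gives $V(\f,f(j))=V(\bo\f,f(i))$, so both new nodes $rij$ and $\lab{\bo\f}{i}\ge\lab{\f}{j}$ are satisfied. It is precisely the attainment of this infimum, guaranteed only by finiteness together with seriality, that makes the box-on-the-right case go through; the premise inequation of $\borr$ plays no role in the soundness argument and serves only to enforce the subformula property.
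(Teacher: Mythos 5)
Your proof is correct and follows essentially the same route as the paper's: the same case analysis, the same choice of branch for $\lorl$ and $\wedr$ via attainment of the max/min, and the same use of finiteness together with seriality to realize the infimum defining $V(\bo\f,f(i))$ at an accessible world for $\borr$, reassigning $f$ only at the new label. One cosmetic slip: in the paper's calculus $\wedl$ and $\lorr$ do not branch (both conclusions are stacked on the same branch), but since you show that \emph{both} resulting inequations are satisfied by the model, your argument covers exactly what is needed there, so nothing is lost.
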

\begin{proof}
 Let $f$ be a map showing that the finite serial $\lgc{K(A)}$-model $\fram{M}= \langle W, R, V \rangle$ is faithful to the branch $B$ of the tableau $T$. The cases of $\zerolr$, $\zerorr$, $\implr$, $\imprr$, $\multlr$, $\multrr$, $\lorr$, and $\wedl$ follow easily. For $\lorl$, suppose that $\lab{\Ga}{\vecn{k}}, \lab{\f \lor \p}{i} \aineq \lab{\De}{\vecn{l}}$ appears on $B$, and that we obtain an extension $T'$ of $T$ by two branches: one branch $B'$ extending $B$ with $\lab{\Ga}{\vecn{k}}, \lab{\f}{i} \aineq \lab{\De}{\vecn{l}}$, and another branch $B''$ extending $B$ with $\lab{\Ga}{\vecn{k}}, \lab{\p}{i} \aineq \lab{\De}{\vecn{l}}$. Let $\lab{\Ga}{\vecn{k}} = [\lab{\f_1}{k_1},\ldots,\lab{\f_n}{k_n}]$ and $\lab{\De}{\vecn{l}} = [\lab{\p_1}{l_1},\ldots,\lab{\p_m}{l_m}]$ and denote $V(\f_1,f(k_1)) + \ldots + V(\f_n,f(k_n))$ by $V(\Ga,f(\vecn{k}))$ and $V(\p_1,f(l_1)) + \ldots + V(\p_m,f(l_m))$ by $V(\De,f(\vecn{l}))$. Since $\fram{M}$ is faithful to $B$, we have  $V(\Ga,f(\vecn{k})) + V(\f \lor \p,f(i)) \aineq V(\De,f(\vecn{l}))$. Hence
  \[
V(\Ga,f(\vecn{k})) + \max(V(\f,f(i)), V(\p,f(i))) \,\aineq\, V(\De,f(\vecn{l})).
 \]
 If $\max(V(\f,f(i)), V(\p,f(i))) = V(\f,f(i))$ then $\fram{M}$ is faithful to the branch $B'$, otherwise $\fram{M}$ is faithful to the branch $B''$. Hence $\fram{M}$ is faithful to $T'$. The case of $\wedr$ follows similarly.
 
 For $\bolr$, suppose that $\lab{\Ga}{\vecn{k}}, \lab{\bo \f}{i} \aineq \lab{\De}{\vecn{l}}$ and $rij$ appear on $B$ and we obtain an extension $T'$ of $T$ by a branch $B'$ which extends $B$ with $\lab{\f}{j} \geq \lab{\bo\f}{i}$. Since $\fram{M}$ is faithful to $B$, we have $Rf(i)f(j)$. But then $V(\f,f(j)) \geq V(\bo \f,f(i))$, so $\fram{M}$ is faithful to $B'$ and~$T'$.

For $\borr$ suppose that $\lab{\Ga}{\vecn{k}} \aineq \lab{\bo \f}{i}, \lab{\De}{\vecn{l}}$ appears on $B$ and we obtain an extension $T'$ of $T$ by a branch $B'$ that extends $B$ with $rij$ ($j\in \N$ new) and $\lab{\bo\f}{i} \geq \lab{\f}{j}$. Since $\fram{M}$ is finite and serial, there exists $v\in W$ such that $Rf(i)v$ and $V(\bo \f,f(i)) = V(\f,v)$. Hence the map $f'$ defined to coincide with $f$ except that $f'(j)= v$ together with the branch $B'$ show that $\fram{M}$ is faithful to $T'$.  

Finally, for $\ex$, suppose that $rik$ appears on $B$ and we obtain an extension $T'$ of $T$ by a branch $B'$ that extends $B$ with $rkj$ ($j \in \N$ new). Since $rik$ is in $B$, we have $Rf(i)f(k)$. Because $\fram{M}$ is serial, there exists $v\in W$ such that $Rf(k)v$. The map $f'$ defined to coincide with $f$ except that $f'(j)= v$ shows that $\fram{M}$ is faithful to $B'$ and, hence, to $T'$.
\end{proof}

\begin{prop}\label{p:soundnesslabelledtableau}
If $\der{\lgc{LK(A)}} \f$, then $\mdl{\lgc{K(A)}} \f$.
\end{prop}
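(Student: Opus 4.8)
The plan is to prove the contrapositive: assuming $\not\mdl{\lgc{K(A)}} \f$, I show that every $\lgc{LK(A)}$-tableau for $\f$ is open, so that no closed tableau exists and hence $\not\der{\lgc{LK(A)}} \f$. The first move is to invoke Lemma~\ref{l:fmp}: if $\f$ is not $\lgc{K(A)}$-valid, then it already fails in some \emph{finite serial} $\lgc{K(A)}$-model $\fram{M} = \langle W, R, V\rangle$, so there is $x \in W$ with $V(\f,x) < 0$. Passing through the finite model property is essential here, because finiteness and seriality are exactly the hypotheses under which Lemma~\ref{LK:soundness lemma} was stated, and those hypotheses are used in the $\borr$ and $\ex$ cases of that lemma to extract witnessing accessible worlds.

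Next I establish the base of a faithfulness argument. A tableau for $\f$ begins from the single branch consisting of the root $[] > \lab{\f}{1}$ together with the covering node $r12$. I set $f(1) = x$; then $0 > V(\f,f(1))$ holds by choice of $x$, so the root inequation is respected. Seriality of $\fram{M}$ supplies some $y \in W$ with $Rxy$, and putting $f(2) = y$ witnesses $r12$. Thus $\fram{M}$ is faithful to this initial branch via $f$.

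I then argue by induction on the number of rule applications used to construct an arbitrary tableau $T$ for $\f$, rather than on formula structure, since faithfulness is a global property of the whole tree. The base case is the observation just made. For the inductive step, if $T'$ arises from $T$ by a single rule application, then Lemma~\ref{LK:soundness lemma} guarantees that $\fram{M}$, being faithful to $T$, is also faithful to $T'$. Hence $\fram{M}$ is faithful to every tableau for $\f$. Finally, as noted immediately before Lemma~\ref{LK:soundness lemma}, faithfulness of $\fram{M}$ to $T$ means that the map $e(\lab{p}{i}) = V(p,i)$, $e(\lab{\bo \f}{i}) = V(\bo \f, i)$ satisfies the system of inequations associated to the faithful branch, so that branch is open and therefore $T$ is open. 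This contradicts the assumption that $\f$ is $\lgc{LK(A)}$-derivable.

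The argument is essentially routine once Lemmas~\ref{l:fmp} and~\ref{LK:soundness lemma} are in hand, and I do not anticipate a genuine obstacle. The only points demanding care are verifying that the \emph{initial} configuration (root plus covering node) is faithful to the chosen countermodel, and organising the induction over the stepwise construction of the tableau so that Lemma~\ref{LK:soundness lemma} applies uniformly at each step.
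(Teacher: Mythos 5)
Your proposal is correct and follows essentially the same route as the paper's proof: contraposition via the finite serial countermodel from Lemma~\ref{l:fmp}, faithfulness of that model to the initial branch (root plus covering node, with seriality supplying the witness for $r12$), and then an induction over rule applications using Lemma~\ref{LK:soundness lemma} to conclude that every tableau for $\f$ has an open branch. The only difference is presentational: you spell out the induction and the base-case verification slightly more explicitly than the paper, which compresses these into a single sentence.
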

\begin{proof}
Suppose that $\not \mdl{\lgc{K(A)}} \f$. By Lemma~\ref{l:fmp}, there exist a finite serial $\lgc{K(A)}$-model $\fram{M} = \langle W, R, V \rangle$ and $w_1\in W$ such that $0 > V(\f, w_1)$. Let $f \colon \N \to W$ be any function such that $f(1) = w_1$ and $f(2) = w_2$, where $Rw_1w_2$. This function shows that $\fram{M}$ is faithful to the only branch of the tableau consisting just of the root $[] > [\lab{\f}{1}]$ and covering node $r12$. Suppose that by applying the decomposition rules to this tableau, we obtain a tableau $T$. Applying Lemma~\ref{LK:soundness lemma} inductively, $\fram{M}$ is faithful to $T$ by some branch $B$. But then  the system of inequations associated with $B$ is consistent over $\R$, and $T$ is open. Hence $\not \der{\lgc{LK(A)}} \f$.
\end{proof}

\vspace{-0.5em}
%==========================================
\subsection{Completeness} \label{ss:completeness}
%==========================================

We establish the completeness of $\lgc{LK(A)}$ by showing that an open branch of a tableau for a formula where the rules have been applied exhaustively generates a $\lgc{K(A)}$-model where the formula is not valid. In order to avoid repetitions occurring when a rule is applied more than once to a labelled inequation with the same conclusions (or with a new label in the case of $\borr$), we distinguish between active and inactive inequations and use new variables to denote modal formulas that have already been decomposed. To make this precise, we introduce the notation $\newv{\bo \f}$ to denote a variable corresponding to the modal $\lang_{\lgc{A}}^\bo$-formula $\bo \f$, and define $\newvar =  \var \cup \{\newv{\bo \f} : \bo \f \in \fm(\lang_{\lgc{A}}^\bo)\}$. We let $\fm(\lang_{\lgc{A}}^\bo)^*$ denote the set of $\lang_{\lgc{A}}^\bo$-formulas over $\newvar$, noting that of course $\fm(\lang_{\lgc{A}}^\bo) \subseteq \fm(\lang_{\lgc{A}}^\bo)^*$. The {\em complexity} of a labelled inequation $\lab{\Ga}{\vecn{k}} \aineq \lab{\De}{\vecn{l}}$ over $\fm(\lang_{\lgc{A}}^\bo)^*$ is defined as the sum of the complexities of the formula occurrences in $\Ga$ and $\De$.

We now consider a slight variant $\lgc{LK'(A)}$ of $\lgc{LK(A)}$, replacing the rules for $\bo$ with the following rules that decompose several occurrences of a labelled formula simultaneously:
\newcommand{\premisesborn}{
\begin{array}{c}
\lab{\Ga_1}{\vecn{k_1}} \aineq  n_1\lab{\bo \f}{i},  \lab{\De_1}{\vecn{l_1}}\\
\vdots\\
\lab{\Ga_m}{\vecn{k_m}} \aineq  n_m \lab{\bo \f}{i},\lab{\De_m}{\vecn{l_m}}\\
\end{array}
}
\newcommand{\conclusionsborn}{
\begin{array}{c}
\lab{\newv{\bo \f}}{i} \ge \lab{\f}{j} \\
rij \\
\lab{\Ga_1}{\vecn{k_1}} \aineq n_1 \lab{\newv{\bo \f}}{i}, \lab{\De_1}{\vecn{l_1}} \\
\vdots\\
\lab{\Ga_m}{\vecn{k_m}} \aineq n_m \lab{\newv{\bo \f}}{i}, \lab{\De_m}{\vecn{l_m}} \\
\end{array}}
\newcommand{\premisesboln}{
\begin{array}{c}
rij_1\\
\vdots\\
rij_s\\
\lab{\Ga_1}{\vecn{k_1}}, n_1\lab{\bo \f}{i} \aineq  \lab{\De_1}{\vecn{l_1}}\\
\vdots\\
\lab{\Ga_m}{\vecn{k_m}}, n_m \lab{\bo \f}{i} \aineq  \lab{\De_m}{\vecn{l_m}}\\
\end{array}}
\newcommand{\conclusionsboln}{
\begin{array}{c}
\lab{\f}{j_1} \ge \lab{\newv{\bo \f}}{i}\\
\vdots\\
\lab{\f}{j_s} \ge \lab{\newv{\bo \f}}{i}\\
\lab{\Ga_1}{\vecn{k_1}}, n_1 \lab{\newv{\bo \f}}{i} \aineq \lab{\De_1}{\vecn{l_1}} \\
\vdots\\
\lab{\Ga_m}{\vecn{k_m}}, n_m \lab{\newv{\bo \f}}{i} \aineq \lab{\De_m}{\vecn{l_m}} \\
\end{array}}
\[
\begin{array}{ccc}
\infer[\bolrn]{\conclusionsboln}{\premisesboln}
& &
\infer[\borrn]{\conclusionsborn}{\premisesborn}
\hspace{-1.25cm}j \in \N \text{ new}
\end{array}
\]
Closed and open $\lgc{LK'(A)}$-tableaux are defined as for $\lgc{LK(A)}$, except that the system associated to a branch of a tableau consists of all inequations on the branch that contain only variables from $\{\lab{q}{i} : q \in \newvar, i \in \N\}$. We call an $\lgc{LK'(A)}$-tableau for $\f \in \fm(\lang_{\lgc{A}}^\bo)$ {\em complete} if it is constructed as follows, making use of the notions of {\em active} and {\em inactive} inequations of the tableau to control applications of the rules:\smallskip

\begin{enumerate}

\item	Begin the tableau with the active labelled inequation $[] > [\lab{\f}{1}]$ and relation $r12$.\smallskip

\item	If all active labelled inequations have complexity $0$, then stop.\smallskip

\item	Apply the rules for $\0,\&,\to,\land,\lor$ exhaustively to active labelled inequations, changing the premise to inactive and the conclusions to active after each application.\smallskip

\item	Fix $i$ such that $\lab{\bo \p}{i}$ occurs in an active labelled inequation, and apply $\ex$ to every branch $B$ containing $\lab{\bo \chi}{i}$ for some $\bo \chi$ in an active inequation to obtain relations $rik_B$ for some new $k_B \in \N$.\smallskip

\item	For each $\lab{\bo \p}{i}$ occurring on the right in an active labelled inequation, apply $\borrn$ to the collection of all active labelled inequations $\lab{\Ga_t}{\vecn{k_t}} \aineq n_t \lab{\bo \p}{i}, \lab{\De_t}{\vecn{l_t}}$ (where $\lab{\bo \p}{i}$ does not occur in $\lab{\De_t}{\vecn{l_t}}$) on a branch, changing the premise to inactive and the conclusions to active after each application. \smallskip

\item	For each $\lab{\bo \p}{i}$ occurring on the left in an active labelled inequation, apply $\bolrn$ to the collection of all active labelled inequations $\lab{\Ga_t}{\vecn{k_t}}, n_t \lab{\bo \p}{i} \aineq  \lab{\De_t}{\vecn{l_t}}$ (where $\lab{\bo \p}{i}$ does not occur in $\lab{\Ga_t}{\vecn{k_t}}$) and all relations $rij_1,\ldots,rij_s$ on a  branch, changing the premises to inactive and the conclusions to active after each application. \smallskip

\item Repeat from (2).

\end{enumerate}

Observe that steps (3), (5), and (6) above decrease the multiset of complexities of the active labelled inequations, according to the standard multiset well-ordering (see~\cite{DeMa79}). Hence the procedure terminates with a complete $\lgc{LK'(A)}$-tableau $T$ for any $\f\in \fm(\lang_{\lgc{A}}^\bo)$.  Suppose now that we change each $\lab{\newv{\bo \p}}{i}$ to $\lab{\bo \p}{i}$ in $T$. Replacing applications of the rules $\bolrn$ and $\borrn$ with appropriate repeated applications of the rules $\bolr$ and $\borr$, we obtain an $\lgc{LK(A)}$-tableau $T'$ for $\f$  such that each branch of $T'$ contains all the inequations (modulo renaming of variables) occurring on the corresponding branch of $T$. Hence we obtain:

\begin{lem}\label{LK:equivalent calculus}
If there exists a closed complete $\lgc{LK'(A)}$-tableau for $\f \in \fm(\lang_{\lgc{A}}^\bo)$, then $\der{\lgc{LK(A)}}~\f$.
\end{lem}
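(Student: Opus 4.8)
The plan is to turn a given closed complete $\lgc{LK'(A)}$-tableau $T$ for $\f$ into a closed $\lgc{LK(A)}$-tableau $T'$ for $\f$, so that $\der{\lgc{LK(A)}} \f$ holds by definition. Following the recipe sketched before the statement, $T'$ is obtained from $T$ by two operations: rename every variable $\newv{\bo\p}$ occurring in $T$ back to the formula $\bo\p$, and replace each instance of the bundled modal rules $\bolrn$ and $\borrn$ by a short block of instances of the single-occurrence rules $\bolr$ and $\borr$. Every other rule ($\zerolr$, $\zerorr$, $\multlr$, $\multrr$, $\implr$, $\imprr$, $\wedl$, $\wedr$, $\lorl$, $\lorr$, $\ex$) belongs to both calculi and is copied verbatim, so the tree shape and all propositional and relational nodes carry over. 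Crucially, I only need that each branch $B'$ of $T'$ \emph{contains}, modulo the renaming, every inequation of the corresponding branch $B$ of $T$: superfluous nodes in $T'$ can do no harm, since a system of inequations that extends an inconsistent one is itself inconsistent.

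First I would make the two rule replacements precise. An instance of $\borrn$ contributes only two genuinely new nodes, namely $\lab{\newv{\bo\f}}{i} \ge \lab{\f}{j}$ and the relation $rij$ for a fresh $j$; its remaining conclusions merely re-state the premises with $\newv{\bo\f}$ in place of $\bo\f$. I would reproduce the new nodes by a \emph{single} application of $\borr$ to one premise $\lab{\Ga_1}{\vecn{k_1}} \aineq n_1\lab{\bo\f}{i}, \lab{\De_1}{\vecn{l_1}}$, keeping the same fresh label $j$, yielding $\lab{\bo\f}{i} \ge \lab{\f}{j}$ and $rij$, which match after renaming. An instance of $\bolrn$ with relations $rij_1,\dots,rij_s$ contributes the new nodes $\lab{\f}{j_t} \ge \lab{\newv{\bo\f}}{i}$, which I would reproduce by applying $\bolr$ once for each $rij_t$ to a premise carrying $\lab{\bo\f}{i}$ on the left. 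The observation that makes this work is that, after renaming, the re-stated conclusions of both bundled rules become \emph{literally identical} to their own premises; since the $\lgc{LK(A)}$ box rules never alter or remove the inequations they fire on, these premises simply persist on $B'$ and the re-stated conclusions need not be produced again.

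With $T'$ in hand, I would check that it is a legitimate $\lgc{LK(A)}$-tableau for $\f$ (same root $[] > [\lab{\f}{1}]$ and covering node $r12$; the freshness side-conditions of $\borr$ and $\ex$ hold because the labels are inherited from $T$, where they were already fresh) and then transfer closure. The renaming $\lab{\newv{\bo\p}}{i} \mapsto \lab{\bo\p}{i}$ is a bijection from $\{\lab{q}{i} : q \in \newvar\}$ onto ${\rm LVar}$, hence a bijective renaming of the variables of the associated systems; it therefore carries the (inconsistent) $\lgc{LK'(A)}$-system of each branch $B$ of $T$ onto an inconsistent set of inequations contained in the $\lgc{LK(A)}$-system of the corresponding branch $B'$ of $T'$. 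Consequently every branch of $T'$ is closed, $T'$ is a closed $\lgc{LK(A)}$-tableau for $\f$, and $\der{\lgc{LK(A)}} \f$.

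I expect the only real subtlety to lie in the $\borrn$ replacement, where it is tempting to apply $\borr$ once per premise and thereby introduce $m$ distinct fresh worlds, whereas $\borrn$ introduces a single world $j$ shared by all $m$ inequations. The point to get right is that a lone $\borr$ application already supplies the unique new inequation $\lab{\bo\f}{i} \ge \lab{\f}{j}$ together with $rij$, while the effect of $\borrn$ on the remaining $m-1$ inequations is, after renaming, nil — they are unchanged premises that $\lgc{LK(A)}$ carries along automatically. Making this matching fully rigorous means tracking active versus inactive status and the simultaneous treatment of all occurrences of a fixed $\lab{\bo\p}{i}$ through the iterations of the completion procedure; its termination (via the multiset ordering of~\cite{DeMa79}) guarantees there are only finitely many bundled instances to unwind, so $T'$ is indeed a finite tree.
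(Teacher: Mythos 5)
Your proof is correct and takes essentially the same route as the paper's: the paper likewise constructs an $\lgc{LK(A)}$-tableau from the closed complete $\lgc{LK'(A)}$-tableau by changing each $\lab{\newv{\bo\p}}{i}$ to $\lab{\bo\p}{i}$ and replacing applications of $\bolrn$ and $\borrn$ by ``appropriate repeated applications'' of $\bolr$ and $\borr$, so that each branch of the resulting tableau contains, modulo renaming, all inequations of the corresponding branch of the original, whence closure (inconsistency of the associated systems) transfers. Your more detailed accounting --- one $\borr$ application with the same fresh label per $\borrn$ instance, one $\bolr$ application per relation for $\bolrn$, and the observation that the restated conclusions become literally identical to their premises after renaming --- simply makes the paper's sketch precise.
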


 Let $T$ be an open complete $\lgc{LK'(A)}$-tableau for $\f \in \fm(\lang_{\lgc{A}}^\bo)$ and let $e$ be a map satisfying the system of inequations associated to an open branch $B$ of $T$.  We say that $\fram{M} = \langle W, R, V \rangle$ is the {\em $e$-induced model}  of $T$ by $B$ if 

\begin{itemize}

\item	$W = \{w_i : i \in \N \text{ is a label occurring on } B\}$;\smallskip

\item	$Rw_iw_j$ if and only if $rij$ occurs on $B$;\smallskip

\item	the valuation map $V \colon \newvar \times W \to [-r,r]$ is defined by
\[
\begin{array}{rcl}
V(p,w_i) & = & \begin{cases} e(\lab{p}{i}) & \text{ if } \lab{p}{i} \text{ occurs on } B\\
0 & \text{ otherwise} \end{cases}
\end{array}
\]
where $r =\max\{|e(\lab{p}{i})| : \lab{p}{i} \mbox{ occurs on } B\}$.

\end{itemize}

\begin{lem}\label{LK:completeness lemma}
Let $\fram{M} = \langle W, R, V \rangle$ be the $e$-induced model of an open complete $\lgc{LK'(A)}$-tableau $T$ by a branch $B$. Extend the map $e$ by fixing $e(\lab{\f}{i}) = V(\f,w_i)$ for each $\f \in \fm(\lang_{\lgc{A}}^\bo)^*$ and $w_i \in W$, and denote $e(\lab{\f_1}{k_1}) + \ldots + e(\lab{\f_n}{k_n})$ by $e((\Ga)^{\vecn{k}})$ for $\lab{\Ga}{\vecn{k}} = [\lab{\f_1}{k_1},\ldots,\lab{\f_n}{k_n}]$.  Then $e((\Ga)^{\vecn{k}}) \aineq e((\De)^{\vecn{l}})$ for each labelled inequation $\lab{\Ga}{\vecn{k}} \aineq \lab{\De}{\vecn{l}}$ that appears on $B$.
 \end{lem}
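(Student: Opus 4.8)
The plan is to extend $e$ to all of $\fm(\lang_{\lgc{A}}^\bo)^*$ by $e(\lab{\f}{i}) = V(\f,w_i)$ exactly as in the statement, and then to argue by induction on the complexity of the labelled inequation $\lab{\Ga}{\vecn{k}} \aineq \lab{\De}{\vecn{l}}$ occurring on $B$ (counting occurrences of $\0$, so that every rule strictly reduces the measure). If the inequation contains only variables from $\newvar$ --- in particular, if it has complexity $0$ --- then it belongs to the system of inequations associated to $B$ and is satisfied by $e$ by the choice of $e$; this is the base case. Otherwise some formula occurrence is compound, a constant $\0$, or a box formula, and the key structural observation is that, since $T$ is \emph{complete}, such an inequation is necessarily the (inactive) premise of the rule application that decomposed that occurrence, all of whose conclusions lie on $B$ and have strictly smaller complexity. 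I would then split into cases according to which rule was applied, applying the induction hypothesis to the conclusions on $B$.

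For the propositional and constant rules the point is simply that the defining clauses of $V$ convert each premise into the relevant combination of its conclusions, using $e(\lab{\0}{i})=0$, $e(\lab{\f\&\p}{i}) = e(\lab{\f}{i})+e(\lab{\p}{i})$, $e(\lab{\f\to\p}{i}) = e(\lab{\p}{i})-e(\lab{\f}{i})$, $e(\lab{\f\land\p}{i}) = \min(e(\lab{\f}{i}),e(\lab{\p}{i}))$, and $e(\lab{\f\lor\p}{i}) = \max(e(\lab{\f}{i}),e(\lab{\p}{i}))$. For $\zerolr$, $\zerorr$, $\multlr$, $\multrr$, $\implr$, $\imprr$ the premise is then, under the induction hypothesis, literally equivalent to its single conclusion. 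For the conjunction-style rules $\wedl$ and $\lorr$, both conclusions lie on $B$ and are satisfied by the induction hypothesis; since the relevant $\min$ (resp.\ $\max$) equals whichever argument realises it, the premise reduces to that satisfied conclusion. For the two branching rules $\lorl$ and $\wedr$, the branch $B$ carries exactly one conclusion, and here I would use monotonicity: $e(\lab{\f\lor\p}{i}) \ge e(\lab{\f}{i})$ with the disjunct on the left, and $e(\lab{\f\land\p}{i}) \le e(\lab{\f}{i})$ with the conjunct on the right, together with the fact that enlarging the larger side, or shrinking the smaller side, of an inequation in $\{>,\ge\}$ preserves it.

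The substantive cases are the modal rules, where everything hinges on comparing $e(\lab{\bo\f}{i}) = V(\bo\f,w_i) = \bigwedge_\R\{V(\f,w_{i'}):Rw_iw_{i'}\}$ with $e(\lab{\newv{\bo\f}}{i}) = V(\newv{\bo\f},w_i)$. For $\borrn$ the easy direction $e(\lab{\newv{\bo\f}}{i}) \ge e(\lab{\bo\f}{i})$ suffices: the conclusions include $rij$ and $\lab{\newv{\bo\f}}{i} \ge \lab{\f}{j}$, so $Rw_iw_j$ holds in the induced model, and the induction hypothesis applied to $\lab{\newv{\bo\f}}{i} \ge \lab{\f}{j}$ gives $V(\newv{\bo\f},w_i) \ge V(\f,w_j) \ge V(\bo\f,w_i)$, the last inequality because a single successor value dominates the infimum. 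Replacing each $\lab{\newv{\bo\f}}{i}$ on the right of a (satisfied) conclusion $\lab{\Ga_t}{\vecn{k_t}} \aineq n_t\lab{\newv{\bo\f}}{i},\lab{\De_t}{\vecn{l_t}}$ by the smaller $\lab{\bo\f}{i}$ then recovers the premise.

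For $\bolrn$ I need the \emph{reverse} inequality $e(\lab{\bo\f}{i}) \ge e(\lab{\newv{\bo\f}}{i})$, and this is the crux of the whole lemma and the step I expect to be the main obstacle. The conclusions provide $\lab{\f}{j_t} \ge \lab{\newv{\bo\f}}{i}$, hence (by the induction hypothesis) $V(\f,w_{j_t}) \ge V(\newv{\bo\f},w_i)$, only for the successors $j_1,\dots,j_s$ of $i$ that were present when $\bolrn$ was applied; to conclude $V(\bo\f,w_i) = \min_t V(\f,w_{j_t}) \ge V(\newv{\bo\f},w_i)$ I must argue that $\{w_{j_1},\dots,w_{j_s}\}$ is exactly $R[w_i]$ in the induced model. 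This is precisely where completeness of the construction is used: step~(6) acts on the collection of all relations $rij_1,\dots,rij_s$ on $B$ only after steps~(4) and~(5) have created every successor of $i$, and once $\lab{\bo\f}{i}$ has been rewritten to the variable $\lab{\newv{\bo\f}}{i}$ no box formula labelled $i$ remains active, so no further successor of $i$ is ever generated. Granting this bookkeeping, enlarging the left-hand side of each satisfied conclusion $\lab{\Ga_t}{\vecn{k_t}}, n_t\lab{\newv{\bo\f}}{i} \aineq \lab{\De_t}{\vecn{l_t}}$ back up to $n_t\lab{\bo\f}{i}$ yields the premise, completing the induction. Every case other than this last one is routine once the appropriate semantic clause and monotonicity are in place; the delicate part is verifying from the complete-tableau procedure that no successor of $i$ escapes the simultaneous decomposition.
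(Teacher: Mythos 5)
Your proposal is correct and follows essentially the same route as the paper's proof: induction on the complexity of the labelled inequation, base case from the satisfiability of the associated system, monotonicity of $\min$/$\max$ for the branching rules $\lorl$, $\wedr$ versus attainment for $\wedl$, $\lorr$, and the constraints $\lab{\f}{j} \ge \lab{\newv{\bo\f}}{i}$ (resp.\ $\lab{\newv{\bo\f}}{i} \ge \lab{\f}{j}$) for the modal cases. The only cosmetic difference is in the $\bolrn$ case: the paper picks the single successor $w_j$ at which the finite infimum $V(\bo\f,w_i)$ is attained and uses the constraint for that $j$, whereas you take the minimum over all successors; both versions rest on exactly the bookkeeping fact you spell out (and the paper leaves implicit), namely that every relation $rij$ on $B$ was already present when $\bolrn$ was applied, so each successor carries its constraint.
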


\begin{proof}
We prove the claim by induction on the complexity of $\lab{\Ga}{\vecn{k}} \aineq \lab{\De}{\vecn{l}}$. The base case follows using the definition of $\fram{M}$ and the fact that $e$ satisfies the system of inequations associated to $B$. Moreover, the cases where $\lab{\Ga}{\vecn{k}} \aineq \lab{\De}{\vecn{l}}$ appears as a premise of an application of a rule for $\0$, $\&$, or $ \to$ follow directly using the induction hypothesis. 

Suppose that the inequation is $\lab{\Ga'}{\vecn{k'}}, \lab{\f \lor \p}{i} \aineq \lab{\De}{\vecn{l}}$ and $\lab{\Ga'}{\vecn{k'}}, \lab{\f}{i} \aineq \lab{\De}{\vecn{l}}$ appears on $B$. (The case where $\lab{\Ga'}{\vecn{k'}}, \lab{\p}{i} \aineq \lab{\De}{\vecn{l}}$ appears on $B$ is symmetrical.) By the induction hypothesis, $e((\Ga')^{\vecn{k'}}) + e((\f)^i) \aineq e((\De)^{\vecn{l}})$. Since $e((\f \lor \p)^i) = \max(e((\f)^i),e((\p)^i))$ we obtain the desired inequality. The case when the inequation is $\lab{\Ga}{\vecn{k}} \aineq \lab{\f \land \p}{i}, \lab{\De'}{\vecn{l'}}$ follows similarly.

Suppose that the inequation is $\lab{\Ga}{\vecn{k}} \aineq \lab{\f \lor \p}{i}, \lab{\De'}{\vecn{l'}}$ and  $\lab{\Ga}{\vecn{k}} \aineq \lab{\f}{i}, \lab{\De'}{\vecn{l'}}$ and $\lab{\Ga}{\vecn{k}} \aineq \lab{\p}{i}, \lab{\De'}{\vecn{l'}}$ appear on $B$. The desired inequality follows  by applying the induction hypothesis to these two inequations and noticing that $e((\f \lor \p)^i) = \max(e((\f)^i),e((\p)^i))$. The case when the inequation is $\lab{\Ga'}{\vecn{k'}},  \lab{\f \land \p}{i} \aineq \lab{\De}{\vecn{l}}$ follows similarly.

Suppose that the inequation is $\lab{\Ga'}{\vecn{k'}}, n \lab{\bo \f}{i} \aineq \lab{\De}{\vecn{l}}$ and $\lab{\Ga'}{\vecn{k'}}, n\lab{\newv{\bo \f}}{i} \aineq \lab{\De}{\vecn{l}}$ occurs on $B$. Since $\fram{M}$ is finite and serial, there is a $j$ such that $rij$ occurs on $B$ and $V(\bo \f, w_i) = V(\f,w_j)$. But then also $\lab{\f}{j} \geq \lab{\newv{\bo \f}}{i}$ occurs on $B$. By the induction hypothesis twice,  $e((\Ga')^{\vecn{k'}}) + ne(\lab{\newv{\bo \f}}{i}) \aineq e((\De)^{\vecn{l}})$ and $e(\lab{\f}{j}) \geq e(\lab{\newv{\bo \f}}{i})$, and the desired inequality follows since also $e(\lab{\bo \f}{i})= V(\bo \f, w_i) = V(\f,w_j) = e(\lab{\f}{j})$. 

Finally, suppose that the inequation is $\lab{\Ga}{\vecn{k}} \aineq  n\lab{\bo \f}{i}, \lab{\De'}{\vecn{l'}}$ and $\lab{\Ga}{\vecn{k}} \aineq n\lab{\newv{\bo\f}}{i}, \lab{\De'}{\vecn{l'}}$ and $\lab{\newv{\bo\f}}{i} \geq \lab{\f}{j}$ appear on $B$ together with the relation $rij$. By the induction hypothesis twice, $e((\Ga)^{\vecn{k}})  \aineq ne(\lab{\newv{\bo \f}}{i}) + e((\De')^{\vecn{l'}})$ and $e(\lab{\newv{\bo \f}}{i}) \geq e(\lab{\f}{j})$, and the desired inequality follows since also $e(\lab{\f}{j}) = V(\f,w_j) \ge V(\bo\f,w_i) = e(\lab{\bo \f}{i})$.
\end{proof}

\begin{thm}\label{t:LabelledEquivSemantics}
The following are equivalent for any $\f \in \fm(\lang_{\lgc{A}}^\bo)$:
\begin{enumerate}[label=\rm (\arabic*)]
\item There exists a closed complete $\lgc{LK'(A)}$-tableau for $\f$.
\item	$\der{\lgc{LK(A)}} \f$.
\item $\mdl{\lgc{K(A)}} \f$.
\end{enumerate}
\end{thm}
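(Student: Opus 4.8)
The plan is to close the cycle of implications $(1) \Rightarrow (2) \Rightarrow (3) \Rightarrow (1)$, with the first two links read off directly from results already in hand. The implication $(1) \Rightarrow (2)$ is precisely Lemma~\ref{LK:equivalent calculus}, which converts a closed complete $\lgc{LK'(A)}$-tableau into a closed $\lgc{LK(A)}$-tableau by unravelling each simultaneous box rule $\bolrn$, $\borrn$ into repeated single applications of $\bolr$ and $\borr$ (after renaming each $\newv{\bo \p}$ back to $\bo \p$). The implication $(2) \Rightarrow (3)$ is Proposition~\ref{p:soundnesslabelledtableau}, the soundness direction obtained from the faithfulness argument of Lemma~\ref{LK:soundness lemma} together with the finite serial model property of Lemma~\ref{l:fmp}. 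So nothing new is required for these two steps.

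The one substantive link is $(3) \Rightarrow (1)$, which I would prove by contraposition: assuming no closed complete $\lgc{LK'(A)}$-tableau for $\f$ exists, I would exhibit a $\lgc{K(A)}$-model refuting $\f$. First I would invoke the termination argument stated just before the theorem — steps (3), (5), and (6) of the construction strictly decrease the multiset of complexities of the active labelled inequations in the multiset well-ordering — to conclude that the construction applied to $\f$ always halts with some complete $\lgc{LK'(A)}$-tableau $T$. By the contrapositive hypothesis $T$ is not closed, hence open, so it has an open branch $B$ whose associated system of inequations over the variables $\{\lab{q}{i} : q \in \newvar, i \in \N\}$ is consistent over $\R$; fix an assignment $e$ satisfying it. I would then form the $e$-induced model $\fram{M} = \langle W, R, V \rangle$ of $T$ by $B$, as defined immediately before Lemma~\ref{LK:completeness lemma}. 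Since the label $1$ occurs in the root node $[] > [\lab{\f}{1}]$, we have $w_1 \in W$, and $\fram{M}$ is a genuine $\lgc{K(A)}$-model (non-empty, with valuation bounded by the finite maximum $r$). Applying Lemma~\ref{LK:completeness lemma} to the root inequation, which appears on $B$, yields $0 > e(\lab{\f}{1}) = V(\f, w_1)$, so $\f$ is not valid in $\fram{M}$ and therefore $\not\mdl{\lgc{K(A)}} \f$, which completes the contrapositive.

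The real weight of the argument sits in Lemma~\ref{LK:completeness lemma}, so the hard part is already discharged before we reach the theorem; the theorem itself is essentially the bookkeeping above. The delicate point inside that lemma is its modal cases, which need the $e$-induced model of a \emph{complete} tableau to be both finite and serial: finiteness so that each box value is a finite infimum, and seriality — guaranteed by the covering node $r12$ together with the exhaustive applications of $\ex$ in step (4) and of $\borrn$ in step (5) — so that for each $\bo\f$ labelled $i$ some accessible $w_j$ witnesses $V(\bo\f, w_i) = V(\f, w_j)$. The convention $\bigwedge_\R \emptyset = 0$ is what makes the boundary case $R[w_i] = \emptyset$ behave, and I would make sure the construction's control of active/inactive inequations is exactly what prevents the procedure from looping so that termination and exhaustiveness hold simultaneously.
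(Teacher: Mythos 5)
Your proposal is correct and follows essentially the same route as the paper: $(1)\Rightarrow(2)\Rightarrow(3)$ via Lemma~\ref{LK:equivalent calculus} and Proposition~\ref{p:soundnesslabelledtableau}, and $(3)\Rightarrow(1)$ by contraposition, using termination of the tableau construction, an assignment $e$ satisfying the system of an open branch, and Lemma~\ref{LK:completeness lemma} applied to the root inequation to get $0 > e(\lab{\f}{1}) = V(\f,w_1)$. Your added remarks on finiteness, seriality, and the active/inactive bookkeeping are accurate glosses on what Lemma~\ref{LK:completeness lemma} and the construction already guarantee, so nothing further is needed.
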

\begin{proof}
(1)\,$\Rightarrow$\,(2)\,$\Rightarrow$\,(3) is just the combination of Lemma~\ref{LK:equivalent calculus} and Proposition~\ref{p:soundnesslabelledtableau}. We prove (3)\,$\Rightarrow$\,(1) by contraposition. If (1) fails, then there is an open complete $\lgc{LK'(A)}$-tableau $T$  beginning with  $[] > [\lab{\f}{1}], \ r12$. Let $e$ be a map satisfying the system of inequations associated to a branch $B$ of $T$ and consider the $e$-induced model $\fram{M} = \langle W, R, V \rangle$ of $T$ by $B$. By Lemma \ref{LK:completeness lemma}, we obtain $0 > e(\lab{\f}{1}) = V(\f,w_1)$. Hence $\not \mdl{\lgc{K(A)}} \f$.
\end{proof}

Let us remark here that there exist significant similarities between $\lgc{LK(A)}$ and the tableau calculus given in~\cite{KPS13} for the fuzzy description logic ``{\L}ukasiewicz fuzzy $\mathcal{ALC}$''. Both calculi reduce the validity of a formula to the satisfiability of linear programming problems, using labels to record values of formulas at different worlds. Superficial differences arise as a result of the restriction of values for {\L}ukasiewicz fuzzy $\mathcal{ALC}$ to the real unit interval $[0,1]$ and the use of several modal operators (corresponding to roles in the description logic). More significantly, roles in {\L}ukasiewicz fuzzy $\mathcal{ALC}$ are interpreted by fuzzy rather than crisp relations and appear also in inequations, whereas $\lgc{LK(A)}$ proceeds by directly generating a crisp frame suitable for constructing a potential countermodel.

%==========================================
\subsection{Complexity} \label{ss:complexity}
%==========================================

It follows directly from the completeness proof above that checking the $\lgc{K(A)}$-validity of an $\lang_\lgc{A}^\bo$-formula $\f$ is decidable. We simply apply the procedure for building a complete $\lgc{LK'(A)}$-tableau for $\f$ to generate finitely many linear programming problems which can then be checked for satisfiability. Considering this procedure in more detail, we obtain  an upper bound for the complexity of checking $\lgc{K(A)}$-validity.

\begin{thm}\label{t:complexity}
The problem of checking if $\f \in \fm(\lang_\lgc{A}^\bo)$ is $\lgc{K(A)}$-valid is in {\sc coNEXPTIME}.
\end{thm}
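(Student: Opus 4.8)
The plan is to read off the bound from the complete $\lgc{LK'(A)}$-tableau construction together with Theorem~\ref{t:LabelledEquivSemantics}. Since \textsc{coNEXPTIME} is by definition the class of problems whose complements lie in \textsc{NEXPTIME}, it suffices to show that checking \emph{non-validity} of $\f$ is in \textsc{NEXPTIME}. By Theorem~\ref{t:LabelledEquivSemantics}, $\not\mdl{\lgc{K(A)}} \f$ if and only if some complete $\lgc{LK'(A)}$-tableau for $\f$ has an open branch, i.e.\ a branch whose associated system of inequations is consistent over $\R$. I would therefore design a nondeterministic procedure that follows the deterministic construction of a complete tableau but guesses a single branch $B$, and accepts exactly when the linear programming problem associated to $B$ is feasible; it then remains to argue that this runs in exponential time.

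First I would bound the number of worlds (labels) occurring on a single branch. The key observation is that world creation is tightly controlled: a new label is introduced only by $\borrn$ and $\ex$, where $\borrn$ creates at most one successor per boxed subformula $\bo\p$ of $\f$ occurring on the right at a given world, and $\ex$ adds at most one further successor to guarantee seriality. Hence every world has at most $|\f|+1$ successors. Moreover, a world created to witness $\bo\p$ carries only the formula $\p$, together with the unboxed versions $\x$ of the box-left formulas $\bo\x$ of its predecessor, all of strictly smaller modal depth; so the tree of worlds has depth at most the modal depth $d$ of $\f$. Consequently the number of worlds on $B$ is at most $(|\f|+1)^{d+1} = 2^{O(|\f|\log|\f|)}$, which is exponential in $|\f|$.

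Next I would bound everything else in terms of the number of worlds. At each world the formulas that occur are labelled (starred) subformulas of $\f$, of which there are only linearly many, so the rules for $\0,\&,\to,\land,\lor$ perform polynomially many decompositions per world and thereby contribute only polynomially many inequations per world, each of polynomial length. Combined with the exponential bound on the number of worlds, the whole branch $B$, and in particular the system of inequations associated to it, has size exponential in $|\f|$. Since feasibility of a finite system of strict and non-strict linear inequalities over $\R$ is decidable in time polynomial in the size of the system (standard linear programming), the feasibility of this exponential-size system can be decided in exponential time.

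Putting these together, the nondeterministic procedure guesses the branch choices made by the branching rules $\lorl$ and $\wedr$ along a single branch (an exponentially long certificate), deterministically reconstructs the corresponding branch $B$ by running the construction steps of a complete $\lgc{LK'(A)}$-tableau with those choices resolving each split, and finally checks feasibility of the associated system over $\R$. Every step takes exponential time, and some run accepts if and only if an open branch exists, i.e.\ if and only if $\not\mdl{\lgc{K(A)}} \f$. Hence non-validity is in \textsc{NEXPTIME}, so $\lgc{K(A)}$-validity is in \textsc{coNEXPTIME}. I expect the main obstacle to be the first step: carefully confirming that a single branch carries only exponentially many worlds, which rests on bounding the successor degree of each world by the number of boxed subformulas and the depth of the world-tree by the modal depth of $\f$. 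Once the exponential size bound is secured, the linear-programming feasibility and the guess-and-check packaging are routine.
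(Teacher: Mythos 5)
Your proposal follows essentially the same route as the paper's proof: reduce validity to the existence of a closed complete $\lgc{LK'(A)}$-tableau via Theorem~\ref{t:LabelledEquivSemantics}, bound the number of labels on a branch exponentially (new labels arise only from $\borrn$ and $\ex$, at most one per boxed subformula per world, to a depth bounded by the modal depth), nondeterministically resolve the splits introduced by $\lorl$ and $\wedr$ to pick a single branch, and decide the associated linear program in time polynomial in its size, using the fact that linear programming is in {\sc P}.

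One intermediate claim, however, is wrong as stated: that the propositional rules ``contribute only polynomially many inequations per world.'' The rules $\lorr$ and $\wedl$ do \emph{not} branch --- both conclusions are stacked onto the \emph{same} branch --- so a single world can carry exponentially many inequations on one branch. For instance, starting from $[] > \lab{(a_1\lor b_1)\,\&\cdots\&\,(a_k\lor b_k)}{1}$, applying $\multrr$ and then $\lorr$ exhaustively places all $2^k$ inequations $[] > \lab{c_1}{1},\ldots,\lab{c_k}{1}$ with $c_i \in \{a_i,b_i\}$ on the same branch, all labelled with world $1$ (this is only polynomial in the modal-multiplicative fragment, where your claim would be fine). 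The paper avoids this by bounding instead the number of \emph{distinct} labelled inequations of complexity at most $n$ over the exponentially many labels --- an exponential quantity --- and assuming no inequation repeats on a branch. Your argument survives the same repair: exponentially many inequations per world times exponentially many worlds is still exponential, so the branch and its system of inequations remain of exponential size, and the guess-and-check packaging with polynomial-time LP feasibility goes through unchanged.
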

\begin{proof}
By Theorem~\ref{t:LabelledEquivSemantics}, we may consider a complete $\lgc{LK'(A)}$-tableau $T$ for an $\lang_\lgc{A}^\bo$-formula $\f$  obtained by following steps (1)-(7) in the procedure above. We may also assume that no labelled inequation appears twice on the same branch of $T$. Suppose that $\f$ has complexity $n$. A new label $j$ is introduced by applying the rule $\borrn$ to a labelled inequation $\lab{\Ga}{\vecn{k}} \aineq \lab{\bo \p}{i}, \lab{\De}{\vecn{l}}$, and by step (4), producing a new labelled inequation $\lab{\newv{\bo \p}}{i} \ge \lab{\p}{j}$, where $\bo \p$ is a subformula of $\f$, and $\p$ has smaller modal depth than $\bo \p$. Note that the number of subformulas $\bo \p$ of $\f$ is bounded by $n$; also the modal depth of $\f$ is bounded by $n$.  Hence the number of labels appearing on a branch of $T$ is at most exponential in $n$. Observe next that the complexity of any labelled inequation that occurs in $T$ is bounded by $n$, and that there are at most $n$ new variables of the form $\newv{\bo \p}$ appearing in $T$. Hence the number of different labelled inequations that can appear in $T$, and so also the length of any branch of $T$, is at most exponential in $n$.

To show that $\f$ is not $\lgc{K(A)}$-valid, we choose a branch $B$ of $T$ non-deterministically, noting that (binary) branching occurs only when applying the rules $\lor$ and $\land$. By the above reasoning, the length of $B$ and the complexity of the labelled inequations appearing on $B$ are at most exponential in $n$. The result then follows from the fact that the linear programming problem is in {\sc P}~\cite{Khachiyan79}. 
\end{proof}

It is no surprise that the upper bound provided here for checking $\lgc{K(A)}$-validity matches the known upper bound for checking validity in fuzzy description logics based on infinite-valued \L ukasiewicz logic (see~\cite{KPS13}) and indeed also the \L ukasiewicz modal logic described in Section~\ref{s:modalabelianlogic}. In all these cases, unpacking the semantics leads to a non-deterministic guessing of linear programming problems of exponential size in the complexity of the original formula. Validity in modal or description logics based on finite \L ukasiewicz logics is known to be PSPACE-complete~\cite{BCE11}, and the same holds for many-valued modal logics based on G{\"o}del logics~\cite{CMRR17}; however, these arguments do not seem to generalize to the current setting.

%%%%%%%%%%%%%%%%%%%%%%%%%%%%%%%%%%%%%%%%%%%%%%%%%%%%%%%%%

\section{The Modal-Multiplicative Fragment} \label{s:fragment}

In this section, we provide an axiom system (without infinitary rules) and analytic sequent calculus for the modal-multiplicative fragment of $\lgc{K(A)}$, and in doing so, take a first step towards obtaining such systems for the full logic. 

%%%%%%%%%%%%%%%%%%%%%%%%%%%%%%%%%%%%%%%%%%

\subsection{An Axiom System}\label{ss:axiomatization}

\begin{figure}[tbp] 
\centering
\fbox{
\begin{minipage}{10 cm}
\[
\begin{array}{rll}
{\rm (B)}  & \	& (\f  \to \p) \to ((\p \to \x) \to (\f \to \x))\\[.025in]
{\rm (C)}  & 	& (\f \to (\p \to \x)) \to (\p \to (\f \to \x))\\[.025in]
{\rm (I)}    &		& \f \to \f\\[.025in]
{\rm (A)}  & 	& ((\f \to \p) \to \p) \to \f\\[.025in]
{\rm (K)}  &		& \bo (\f \to \p) \to (\bo \f \to \bo \p)\\[.025in]
\textrm{(D$_n$)} &	& \bo(n\f) \to n\bo \f \qquad (n \ge 2)\\[.025in]
\end{array}
\]
\[
\infer[\textrm{(mp)}]{\p}{\f & \f \to \p}
\qquad
\infer[\textrm{(nec)}]{\bo \f}{\f}
\qquad
\infer[\textrm{(con$_n$) }]{\f}{n \f}
\quad
(n \ge 2)
\]
\caption{The axiom system $\lgc{K(A_m)}$}\label{f:kz}
\end{minipage}}
\end{figure}

For convenience (in particular, to reduce the number of cases in proofs), we define the modal-multiplicative fragment here over a language $\lang_{\lgc{A_m}}^\bo$ consisting of the binary connective $\to$ and unary connective $\bo$. To define further connectives, we fix $p_0 \in \var$ and let
\[
\0 := p_0 \to p_0, \quad 
\lnot \f := \f \to \0, \quad
\f \& \p := \lnot \f \to \p, \quad 
\text{and} \quad
\di \f := \lnot \bo \lnot \f.
\]
We also define $0\f := \0$ and $(n+1)\f := \f \& (n\f)$ for each $n \in \N$.

Our axiom system $\lgc{K(A_m)}$ for the modal-multiplicative fragment of $\lgc{K(A)}$ is presented in Figure~\ref{f:kz}. For a formula  $\f \in \fm(\lang_\lgc{A_m}^\bo)$, we write $\der{K(A_m)} \f$ if there exists a $\lgc{K(A_m)}$-derivation of $\f$, defined as usual as a finite sequence of $\lang_\lgc{A_m}^\bo$-formulas that ends with $\f$ and is constructed inductively using the axioms and rules of $\lgc{K(A_m)}$. 

Establishing soundness for this system is straightforward. It is easily checked that the axioms (B), (C), (I), (A), and (K) are valid in all $\lgc{K(A)}$-models. For the less standard axioms (D$_n$) ($n \ge 2$), it suffices to consider a $\lgc{K(A)}$-model $\fram{M} = \langle W, R,V \rangle$ and $x \in W$, and observe that for all  $\f \in \fm(\lang_\lgc{A_m}^\bo)$,
\[
\begin{array}{rcl}
V(\bo(n\f),x) & = & \bigwedge_{\R} \{V(n\f, y) : Rxy \}\\[.05in]
				& = & \bigwedge_{\R} \{nV(\f, y) : Rxy \}\\[.05in]
				& = &  n\bigwedge_{\R} \{V(\f, y) : Rxy \}\\[.05in]
				& = & V(n\bo \f,x).
\end{array}
\]
It is clear that (mp) and (nec) preserve validity in $\lgc{K(A)}$-models. For (con$_n$) ($n \ge 2$), we just note that if $V(n\f,x) \ge 0$ for a  $\lgc{K(A)}$-model $\fram{M} = \langle W, R,V \rangle$ and $x \in W$, then $V(\f,x) \ge 0$. Hence a simple induction on the length of a $\lgc{K(A_m)}$-derivation gives the following result.
 
\begin{prop}\label{p:axiomsystemsound}
Let $\f \in \fm(\lang_{\lgc{A_m}}^\bo)$. If $\der{K(A_m)} \f$, then $\mdl{\lgc{K(A)}} \f$.
\end{prop}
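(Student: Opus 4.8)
The plan is to prove soundness by the standard induction on the length of a $\lgc{K(A_m)}$-derivation, so the work splits into two tasks: verifying that each axiom is $\lgc{K(A)}$-valid, and verifying that each rule preserves $\lgc{K(A)}$-validity. The excerpt has essentially already carried out both tasks in the paragraph preceding the statement, so my proof will mostly consolidate these observations into a clean inductive argument. First I would fix an arbitrary $\lgc{K(A)}$-model $\fram{M} = \langle W, R, V \rangle$ and note the basic fact that a formula $\f$ is valid in $\fram{M}$ exactly when $V(\f,x) \ge 0$ for all $x \in W$, reducing everything to arithmetic inequalities over $\R$.

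For the axioms, I would handle (B), (C), (I), and (A) by observing that each is built purely from $\to$, and that $V(\f \to \p, x) = V(\p,x) - V(\f,x)$; unwinding any of these four formulas at a world $x$ yields an arithmetic identity that evaluates to $0$, hence is $\ge 0$. For instance, $V((\f \to \f),x) = V(\f,x) - V(\f,x) = 0$, and (A) similarly collapses to $0$. Axiom (K) uses $V(\bo \chi, x) = \bigwedge_\R\{V(\chi,y) : Rxy\}$ together with the fact that an infimum of a sum of two functions is bounded below by the sum of the separate infima; this gives $V(\bo(\f \to \p),x) + V(\bo \f, x) \le V(\bo \p, x)$, which is exactly what validity of (K) requires after rearranging. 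For (D$_n$) I would simply cite the displayed calculation in the excerpt, whose only nontrivial ingredient is that scaling by a positive integer $n$ commutes with $\bigwedge_\R$, i.e. $\bigwedge_\R\{n\,V(\f,y) : Rxy\} = n \bigwedge_\R\{V(\f,y) : Rxy\}$.

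For the rules, (mp) and (nec) are immediate: if $V(\f,x) \ge 0$ and $V(\f \to \p, x) \ge 0$ for all $x$, then $V(\p,x) = V(\f,x) + V(\f \to \p, x) \ge 0$; and if $V(\f,y) \ge 0$ for all $y$, then every infimum $\bigwedge_\R\{V(\f,y) : Rxy\}$ is $\ge 0$ (here the convention $\bigwedge_\R \emptyset = 0$ keeps the empty-successor case sound). For (con$_n$) I would note that $V(n\f,x) = n\,V(\f,x)$, so $n\,V(\f,x) \ge 0$ forces $V(\f,x) \ge 0$ since $n \ge 2 > 0$. Assembling these cases, a routine induction on derivation length establishes that every $\lgc{K(A_m)}$-theorem is $\lgc{K(A)}$-valid.

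The only point requiring genuine care — and the step I would flag as the main obstacle, though it is a mild one — is the interaction of the $\bo$ clauses with the convention $\bigwedge_\R \emptyset = 0$ in the cases (K) and (nec), since this non-standard convention could in principle break the infimum manipulations at worlds with no successors. I would check explicitly that at a dead-end world $x$ (where $R[x] = \emptyset$) every boxed subformula evaluates to $0$, so that (K) reduces to $0 + 0 \le 0$ and (nec) to $0 \ge 0$, both of which hold; the same remark covers (D$_n$), where both sides become $0$. Everything else is direct computation, so I would present the $\bo$-cases in full and merely indicate the purely arithmetical axioms and rules.
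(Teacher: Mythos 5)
Your proposal is correct and follows essentially the same route as the paper: check that each axiom is $\lgc{K(A)}$-valid (with the displayed $\bigwedge_\R$ calculation for (D$_n$)), check that (mp), (nec), and (con$_n$) preserve validity, and conclude by induction on the length of a $\lgc{K(A_m)}$-derivation. You merely fill in details the paper leaves as ``easily checked,'' such as the arithmetic collapse of (B), (C), (I), (A) to $0$, the superadditivity of infima for (K), and the dead-end-world case under the convention $\bigwedge_\R \emptyset = 0$.
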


\noindent
The proof of the converse direction is much harder. Before arriving finally at this result in Theorem~\ref{t:main}, we make a detour via a sequent calculus and exploit the completeness result for our labelled tableau calculus provided by Theorem~\ref{t:LabelledEquivSemantics}.

%%%%%%%%%%%%%%%%%%%%%%%%%%%%%%%%%%%%%%%%%%%%%%%%%%%%%%%%%%%%

\subsection{A Sequent Calculus}\label{ss:sequent}

For the purposes of this paper, a {\em sequent} is an ordered pair of finite multisets of $\lang_{\lgc{A_m}}^\bo$-formulas $\Ga$ and $\De$, written $\Ga \seq \De$. A {\em sequent rule} is a set of {\em instances}, each consisting of a finite set of sequents called {\em premises} and a sequent called the {\em conclusion}. Such rules are typically written schematically, using $\f,\p,\x$ and $\Ga,\De,\Pi,\Si$  to denote arbitrary formulas and finite multisets of formulas, respectively. We also often write $\Ga,\De$ to denote the multiset union $\Ga \uplus \De$, $n\Ga$ for $\Ga,\ldots,\Ga$ ($n$ times), and $\bo \Ga$ for $[\bo \f : \f \in \Ga]$. 

We make use of a formula translation (assuming  $\f_1 \& \ldots \& \f_n = \0$ for $n=0$),
\[
\begin{array}{rcl}
\mathcal{I}(\f_1,\ldots,\f_n \seq \p_1,\ldots,\p_m) & := &  (\f_1 \& \ldots \& \f_n) \to (\p_1 \& \ldots \& \p_m),
\end{array}
\]
and say that a sequent $\Ga \seq \De$ is \emph{$\lgc{K(A)}$-valid}, written $\mdl{\lgc{K(A)}} \Ga \seq \De$, if $\mdl{\lgc{K(A)}}\mathcal{I}(\Ga \seq \De)$.

A {\em sequent calculus} $\lgc{GL}$ consists of a set of sequent rules, and a {\em $\lgc{GL}$-derivation} of a sequent $S$ from a set of sequents $Y$ is a finite tree of sequents with root $S$ such that each node is either (i) a leaf node and in $Y$,  or (ii) together with its parent nodes forms an instance of a rule of $\lgc{GL}$. In this case, we write $Y \der{\lgc{GL}} S$ or just $\der{\lgc{GL}} S$ if $Y = \emptyset$. A sequent rule is $\lgc{GL}$-{\em derivable} if there is a $\lgc{GL}$-derivation of the conclusion of any instance of the rule from its premises; $\lgc{GL}$-{\em admissible} if whenever the premises of an instance of the rule are $\lgc{GL}$-derivable, the conclusion is  $\lgc{GL}$-derivable; and  $\lgc{GL}$-{\em invertible} if whenever the conclusion of an instance of the rule is $\lgc{GL}$-derivable, the premises are $\lgc{GL}$-derivable. 

\begin{figure}[tbp] 
\centering
\fbox{
\begin{minipage}{14.75 cm}
 \[
 \begin{array}{c}
 \begin{array}{ccc}
 \infer[\idr]{\De \seq \De}{} & & 
 \infer[\cutr]{\Ga, \Pi \seq \Si, \De}{\Ga, \f \seq \De & \Pi \seq \f, \Si}\\[.15in]
 \infer[\mixr]{\Ga,\Pi \seq \Si,\De}{\Ga \seq \De & \Pi \seq \Si}  & \quad & 
 \qquad  \qquad  \infer[\seqcontn]{\Ga \seq \De}{n\Ga \seq n\De} \quad (n \ge 2)\\[.15in]
 \infer[\ilr]{\Ga, \f \to \p \seq \De}{\Ga, \p \seq \f, \De} &  & \infer[\irr]{\Ga \seq \f \to \p, \De}{\Ga, \f \seq \p, \De}\\[.15in]
 \end{array}\\
\infer[\boxknr{n}]{\bo \Ga \seq n[\bo \f]}{\Ga \seq n[\f]}\quad (n \ge 0) \quad 
\end{array}
 \]
 \caption{The sequent calculus $\lgc{GK(A_m)}$} \label{f:gkz}
\end{minipage}}
\end{figure}

A sequent calculus $\lgc{GK(A_m)}$ for the modal-multiplicative fragment of $\lgc{K(A)}$, an extension of a calculus for the multiplicative fragment of Abelian logic given in~\cite{met:seq}, is presented in Figure~\ref{f:gkz}. Although only rules for $\to$ and $\bo$ appear in this system, the following rules for other connectives  are $\lgc{GK(A_m)}$-derivable:\smallskip
\[
\begin{array}{ccc}
\infer[\salr]{\Ga, \f\&\p \seq \De}{\Ga, \f, \p \seq \De} & \qquad \qquad & 
\infer[\sarr]{\Ga \seq \f\&\p, \De}{\Ga \seq \f, \p, \De}\\[.1in]
\infer[\nlr]{\Ga, \lnot \f \seq \De}{\Ga \seq \f, \De} & & 
\infer[\nrr]{\Ga \seq\lnot \f, \De}{\Ga, \f \seq \De}\\[.1in]
\infer[\zrlr]{\Ga, \0 \seq \De}{\Ga \seq \De} & & 
\infer[\zrrr]{\Ga \seq \0, \De}{\Ga \seq \De}
\end{array}
\]

\begin{exa}\label{ex:dn}
Below we provide a simple example of a $\lgc{GK(A_m)}$-derivation, making use of the derived rules for $\&$ given above.
\[
\infer[\pfa{\irr}]{\seq \bo (\f \& \f) \to (\bo \f \& \bo \f)}{
 \infer[\pfa{\sarr}]{\bo (\f \& \f) \seq \bo \f \& \bo \f}{
  \infer[\pfa{\boxknr{2}}]{\bo (\f \& \f) \seq \bo \f, \bo \f}{
   \infer[\pfa{\salr}]{\f \& \f \seq \f, \f}{
     \infer[\pfa{\idr}]{\f,\f \seq \f,\f}{}}}}}
\]
Sequents of the form $\seq \bo (n\f) \to n \bo \f$ can be proved similarly using the rule $\boxknr{n}$.
\end{exa}

It is straightforward to establish an equivalence between derivability of a sequent in $\lgc{GK(A_m)}$ and derivability of its formula interpretation in the axiom system $\lgc{K(A_m)}$.

\begin{prop}\label{p:sequentcalculusaxiomsystemequivalent}
$\der{\lgc{GK(A_m)}} \Ga \seq \De$ if and only if $\der{\lgc{K(A_m)}} \mathcal{I}(\Ga \seq \De)$.
\end{prop}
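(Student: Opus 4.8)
The plan is to prove the biconditional by induction on derivations in each direction, showing that the two systems simulate each other under the interpretation $\mathcal{I}$. The key conceptual point is that $\mathcal{I}(\Ga \seq \De)$ is exactly the formula $(\f_1 \& \ldots \& \f_n) \to (\p_1 \& \ldots \& \p_m)$, and because the underlying semantics is that of lattice-ordered abelian groups, $\&$ and $\to$ interact in a very flexible, ``multiplicative'' way: $\&$ is associative and commutative up to provable equivalence, $\lnot$ is an involution, and moving a formula from one side of $\seq$ to the other corresponds to applying $\lnot$ under $\mathcal{I}$. I would first record these basic facts as a preliminary observation — in particular that for any multisets, $\der{\lgc{K(A_m)}}$ proves the equivalences needed to permute and regroup the $\&$-conjuncts, and that $\mathcal{I}(\Ga, \f \seq \De)$ and $\mathcal{I}(\Ga \seq \lnot\f, \De)$ are $\lgc{K(A_m)}$-interderivable. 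These are routine consequences of axioms (B), (C), (I), (A) together with the definitions $\lnot\f := \f \to \0$ and $\f \& \p := \lnot\f \to \p$.

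For the left-to-right direction, I would induct on the height of a $\lgc{GK(A_m)}$-derivation. For each rule of Figure~\ref{f:gkz} I must check that its $\mathcal{I}$-translation is $\lgc{K(A_m)}$-derivable from the translations of its premises. The axiom $\idr$ translates to an instance of provable equivalence of identical $\&$-blocks; $\cutr$ and $\mixr$ are handled using the group structure (cut corresponds to composing implications via (B), mix to juxtaposing $\&$-blocks); $\ilr$ and $\irr$ are precisely the side-switching equivalences noted above; $\seqcontn$ translates to an application of the rule (con$_n$), since $\mathcal{I}(n\Ga \seq n\De)$ is provably equivalent to $n\,\mathcal{I}(\Ga\seq\De)$; and the modal rule $\boxknr{n}$ is where the axioms (K), (nec), and (D$_n$) are used — from a proof of $\mathcal{I}(\Ga \seq n[\f])$ one applies (nec) and then (K) repeatedly to distribute $\bo$ over the implication, and (D$_n$) to convert $\bo(n[\f])$-style terms into $n\bo\f$, yielding $\mathcal{I}(\bo\Ga \seq n[\bo\f])$.

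For the converse direction, I would induct on the length of a $\lgc{K(A_m)}$-derivation of $\mathcal{I}(\Ga \seq \De)$, but more cleanly I would prove that every axiom of $\lgc{K(A_m)}$ is $\lgc{GK(A_m)}$-derivable as a sequent (with empty antecedent) and that each rule of $\lgc{K(A_m)}$ is $\lgc{GK(A_m)}$-admissible, using the derived rules $\salr,\sarr,\nlr,\nrr,\zrlr,\zrrr$ from the excerpt to decompose $\&$, $\lnot$, and $\0$. The propositional axioms (B),(C),(I),(A) are provable by small derivations analogous to Example~\ref{ex:dn}; (K) and each (D$_n$) are provable using $\boxknr{n}$ as illustrated in that example. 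The rule (mp) is simulated by $\cutr$, (nec) by $\boxknr{1}$ (the $n=1$ case), and (con$_n$) by $\seqcontn$. Since $\der{\lgc{GK(A_m)}} \seq \mathcal{I}(\Ga \seq \De)$ and $\der{\lgc{GK(A_m)}} \Ga \seq \De$ are interderivable (again via the side-switching and $\&$-decomposition lemmas), this closes the loop.

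The main obstacle I anticipate is \textbf{bookkeeping around the modal rule} $\boxknr{n}$ in the left-to-right direction: verifying that the translation of a single application of $\boxknr{n}$ is derivable requires chaining (nec), several instances of (K), and (D$_n$) in the right order, and one must be careful that the multiset $n[\f]$ on the right translates to $n\bo\f$ precisely as (D$_n$) demands. Everything else reduces to the flexible equational behaviour of $\&$ and $\lnot$ in lattice-ordered abelian groups, which is mechanical but must be stated carefully enough that the permutation and regrouping steps are genuinely justified by the axioms rather than merely by the semantics.
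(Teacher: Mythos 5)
Your proposal is correct and follows essentially the same route as the paper: for the left-to-right direction, a rule-by-rule simulation in which the translation of $\boxknr{n}$ is derived by chaining (nec), (K), and (D$_n$) exactly as in the paper's displayed derivation; for the converse, deriving each axiom of $\lgc{K(A_m)}$ in $\lgc{GK(A_m)}$, simulating (mp) by $\cutr$ and (con$_n$) via $\seqcontn$, and then recovering $\Ga \seq \De$ from $\seq \mathcal{I}(\Ga \seq \De)$ — the paper does this last step by a single $\cutr$ with the derivable sequent $\Ga, \mathcal{I}(\Ga \seq \De) \seq \De$, which is the precise form of your ``interderivability via side-switching'' claim. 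The bookkeeping you flag around $\boxknr{n}$ is indeed where the paper spends its effort, including the auxiliary derivation of $(\bo \p_1 \& \ldots \& \bo \p_m) \to \bo(\p_1 \& \ldots \& \p_m)$ from (B), (C), (I), and (K).
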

\begin{proof}
It suffices for the left-to-right direction to show that for any rule of $\lgc{GK(A_m)}$ with premises $S_1,\ldots,S_m$ and conclusion $S$,  whenever $\der{\lgc{K(A_m)}} \mathcal{I}(S_i)$ for each $i \in \{1, \ldots, m\}$, also $\der{\lgc{K(A_m)}} \mathcal{I}(S)$. For example, consider the rule $\boxknr{n}$ and assume that $\der{\lgc{K(A)}} \mathcal{I}(\Ga \seq n[\f])$. Suppose that $\Ga = [\p_1, \ldots, \p_m]$ and let $\p = \p_1 \& \ldots \& \p_m$.  We continue the $\lgc{K(A_m)}$-derivation of $\mathcal{I}(\Ga \seq n[\f]) =\p \to n\f$ to obtain a $\lgc{K(A_m)}$-derivation of $\bo \p \to n\bo \f$:
\[
\begin{array}{rll}
  1.\, & \p \to n\f\\
  2.\, & \bo (\p \to n\f) &(\text{nec})\\
  3.\, & \bo (\p \to n\f) \to (\bo \p \to \bo n\f) &  \rm{(K)}\\
  4.\, & \bo \p \to \bo n\f & (\text{mp}) \text{ with 2,3}\\
  5.\, & \bo n\f \to n\bo \f & \textrm{(D$_n$)}\\
  6.\, &  (\bo \p \to \bo n\f) \to ((\bo n\f \to n\bo \f) \to (\bo \p \to n\bo \f)) \quad & \rm{(B)}\\
  7.\, & (\bo n\f \to n\bo \f) \to (\bo \p \to n\bo \f) &  (\text{mp}) \text{ with 4,6}\\
  8.\, & \bo \p \to n\bo \f & (\text{mp}) \text{ with 5,7}. 
  \end{array}
 \]
$(\bo \p_1 \& \ldots \& \bo \p_m) \to \bo \p$ is derivable using (B), (C), (I), and (K),  so, using (B) and (mp), we obtain  a $\lgc{K(A_m)}$-derivation of  $\mathcal{I}(\bo \Ga \seq n[\bo \f]) = (\bo \p_1 \& \ldots \& \bo \p_m) \to n \bo \f$.
 
For the right-to-left direction, it is easy to show that every axiom of $\lgc{K(A_m)}$ is $\lgc{GK(A_m)}$-derivable; see, e.g., Example~\ref{ex:dn} for $\lgc{GK(A_m)}$-derivations of instances of (D$_n$). Also, the rules of $\lgc{K(A_m)}$ are $\lgc{GK(A_m)}$-derivable. For example, for $(\text{con}_n)$, starting with $\seq n\f$, we apply  $\cutr$ with the $\lgc{GK(A_m)}$-derivable sequent $n\f \seq n[\f]$ to obtain $\seq n[\f]$ and then, applying $\seqcontn$, also $\seq \f$. Hence, if $\der{\lgc{K(A_m)}} \mathcal{I}(\Ga \seq \De)$, then  $\der{\lgc{GK(A_m)}} \seq \mathcal{I}(\Ga \seq \De)$ and, applying $\cutr$ with the $\lgc{GK(A_m)}$-derivable sequent $\Ga, \mathcal{I}(\Ga \seq \De) \seq \De$, also $\der{\lgc{GK(A_m)}} \Ga \seq \De$. 
 \end{proof}
 
We now consider a more complicated family of rules, indexed by $k \in \N \setminus\!\{0\}$ and $n \in \N$, that will be very useful in subsequent cut-elimination and completeness proofs:\smallskip
\[
\begin{array}{c}
\infer[\boxknr{k,n} \quad ]{\De, \bo \Ga \seq \bo \f_1,\ldots, \bo \f_n, \De}{
\Ga_0 \seq & \Ga_1 \seq k[\f_1] & \ldots & \Ga_n \seq k[\f_n]}\quad
 \text{where } k\Ga = \Ga_0 \uplus \Ga_1 \uplus \ldots \uplus \Ga_n.
\end{array}
\]
Critically for our later considerations, $\boxknr{k,n}$ is $\lgc{GK(A_m)}$-derivable for all $k \in \N \setminus\!\{0\}$, $n \in \N$ (for $k = 1$, omitting the application of $\seqcontk$):
\[
\infer[\pfa{\mixr}]{\De, \bo \Ga \seq \bo \f_1,\ldots, \bo \f_n, \De}{
  \infer[\pfa{\idr}]{\De \seq \De}{} &
 \infer[\pfa{\seqcontk}]{\bo\Ga \seq \bo \f_1,\ldots, \bo \f_n}{
  \infer[\pfa{\mixr}]{\bo (\Ga_0 \uplus \Ga_1 \uplus \ldots \uplus \Ga_n) \seq k [\bo \f_1],\ldots, k[\bo \f_n]}{
    \infer[\pfa{\boxknr{0}}]{\bo \Ga_0 \seq}{ 
     \Ga_0 \seq} & 
   \infer[\pfa{\mixr}]{\bo (\Ga_1 \uplus \ldots \uplus \Ga_n) \seq k [\bo \f_1],\ldots, k[\bo \f_n]}{
    \infer[\pfa{\boxknr{k}}]{\bo \Ga_1 \seq k[\bo \f_1]}{
     \Ga_1 \seq k [\f_1]} &
       \infer[\pfa{\mixr}]{\vdots}{
        \infer[\pfa{\boxknr{k}}]{\bo \Ga_n \seq k[\bo \f_n]}{
     \Ga_n \seq k [\f_n]}}}}}}
       \]

We devote the remainder of this subsection to showing that the calculus $\lgc{GK(A_m)}$ admits cut-elimination. That is, we provide an algorithm for constructively eliminating applications of the rule $\cutr$ from $\lgc{GK(A_m)}$-derivations. Observe first that the ``cancellation'' rule 
\[
 \infer[\canr]{\Ga \seq \De}{\Ga, \f \seq \f, \De}
\]
is both $\lgc{GK(A_m)}$-derivable  and can be used, with $\mixr$, to derive $\cutr$:\smallskip
\[
\infer[\pfa{\cutr}]{\Ga \seq \De}{
 \infer[\pfa{\ilr}]{\f \to \f \seq}{
  \infer[\pfa{\idr}]{\f \seq \f}{}}
&
  \infer[\pfa{\irr}]{\Ga \seq \f \to \f, \De}{
   \Ga, \f \seq \f, \De}}
\qquad
\infer[\pfa{\canr}]{\Ga, \Pi \seq \Si, \De}{
 \infer[\pfa{\mixr}]{\Ga, \Pi, \f \seq \f, \Si, \De}{
  \Ga, \f \seq \De & \Pi \seq \f, \Si}}
\]
Hence, to prove cut-elimination, it will be enough to show constructively that $\canr$ is admissible in $\lgc{GK(A_m)}$ without $\cutr$.

We begin by showing that every cut-free $\lgc{GK(A_m)}$-derivation can be transformed into a derivation in a restricted calculus $\lgc{GK(A_m)^r}$ consisting only of the rules $\idr$, $\ilr$, $\irr$, and $\boxknr{k,n}$ ($k \in \N \setminus\!\{0\}$, $n \in \N$). 

\begin{lem}\label{l:invertiblerules}
The rules $\ilr$ and $\irr$ are $\lgc{GK(A_m)^r}$-invertible.
\end{lem}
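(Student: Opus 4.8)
The plan is to prove both invertibility statements by simultaneous induction on the height of the given $\lgc{GK(A_m)^r}$-derivation, arguing by cases on its last rule; the two arguments run in parallel, so I would describe the one for $\irr$ in detail and indicate the dual changes for $\ilr$. Thus, given a $\lgc{GK(A_m)^r}$-derivation of $\Ga \seq \f \to \p, \De$, the goal is to construct one of $\Ga, \f \seq \p, \De$. The routine cases are as follows. If the last rule is $\irr$ with principal formula the displayed $\f \to \p$, then its premise is already the desired sequent. If the last rule is $\irr$ or $\ilr$ acting on some \emph{other} implication $\x_1 \to \x_2$ occurring in $\De$ or in $\Ga$, then the induction hypothesis applies to the (shorter) premise, which still displays $\f \to \p$ on the right, after which the same rule is reapplied to reintroduce $\x_1 \to \x_2$. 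The base case $\idr$, where $\Ga = \f \to \p, \De$, is closed by a single application of $\ilr$ to the leftmost $\f \to \p$, since the resulting premise $\De, \f, \p \seq \f, \p, \De$ is itself an instance of $\idr$.

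The one case requiring a different idea, and the step I expect to be the main obstacle, is when the last rule is $\boxknr{k,n}$. Its conclusion has the shape $\De', \bo \Ga' \seq \bo \f_1, \ldots, \bo \f_n, \De'$, and since $\f \to \p$ is not a boxed formula it can occur only in the shared context $\De'$; write $\De' = \f \to \p, \De''$. The key observation is that the premises $\Ga_0 \seq$ and $\Ga_i \seq k[\f_i]$ of $\boxknr{k,n}$ do not mention the shared context at all, so those very same premises justify a fresh instance of $\boxknr{k,n}$ in which the shared context is enlarged to $\De'', \f, \p$. This yields $\De'', \f, \p, \bo \Ga' \seq \bo \f_1, \ldots, \bo \f_n, \De'', \f, \p$, and a single application of $\ilr$ (consuming the added $\p$ on the left and $\f$ on the right to form $\f \to \p$ on the left) delivers exactly the required sequent $\f \to \p, \De'', \bo \Ga', \f \seq \p, \bo \f_1, \ldots, \bo \f_n, \De''$. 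Notably no induction hypothesis is needed here, as the box premises are simply reused.

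The argument for $\ilr$ is parallel: the distinguished $\f \to \p$ again lies in the shared context of any $\boxknr{k,n}$ conclusion, and one enlarges that context by $\f, \p$ in the same way, finishing with $\irr$ (rather than $\ilr$) to recover the premise $\Ga, \p \seq \f, \De$. The only point that must be checked carefully throughout the box case is that enlarging the shared context preserves a legitimate instance of $\boxknr{k,n}$; this holds precisely because $\De'$ appears identically on both sides of the conclusion of that rule, so adding $\f$ and $\p$ to each side leaves the side condition $k\Ga' = \Ga_0 \uplus \Ga_1 \uplus \ldots \uplus \Ga_n$ and all premises untouched.
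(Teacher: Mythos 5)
Your proof is correct and takes essentially the same approach as the paper's: induction on the height of the $\lgc{GK(A_m)^r}$-derivation, with the crucial $\boxknr{k,n}$ case handled by exactly the same trick of re-instantiating the rule with the shared context enlarged by $\f,\p$ (no induction hypothesis needed) and then applying a single implication rule. The only cosmetic difference is that the paper proves the statement for $m$ occurrences $m[\f \to \p]$ simultaneously (convenient for its later use in the admissibility of \seqcontn), which your single-occurrence version recovers by iteration.
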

\begin{proof}
To show that $\ilr$ is $\lgc{GK(A_m)^r}$-invertible, we prove, more generally, that $\der{\lgc{GK(A_m)^r}} \Ga, m[\f \to \p] \seq \De$ implies $\der{\lgc{GK(A_m)^r}} \Ga, m\p \seq m\f, \De$ for all $m \in \N$, proceeding by induction on the height of a $\lgc{GK(A_m)^r}$-derivation of $\Ga, m[\f \to \p] \seq \De$. For the base case, $\De = \Ga \uplus m[\f \to \p]$ and it suffices to observe that $\der{\lgc{GK(A_m)^r}} \Ga, m\p \seq m\f, m[\f \to \p], \Ga$. For the inductive step, we observe that when the last rule applied is  $\ilr$ or $\irr$, the claim follows immediately by applying the induction hypothesis and, where necessary, the relevant rule. If the last rule applied is $\boxknr{k,n}$, then $m[\f \to \p]$ must occur also on the right of the sequent and the claim follows by first applying the rule $\boxknr{k,n}$ and then $\irr$ $m$ times. The proof that $\irr$ is $\lgc{GK(A_m)^r}$-invertible is very similar.
\end{proof}

\begin{lem}\label{l:admissiblemixsck}
The rules $\mixr$ and $\seqcontn$ are $\lgc{GK(A_m)^r}$-admissible.
\end{lem}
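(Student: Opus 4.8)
The plan is to prove both admissibility results by induction, treating $\mixr$ first and then $\seqcontn$, exploiting throughout the two distinctive features of $\lgc{GK(A_m)^r}$: the invertibility of $\ilr$ and $\irr$ (Lemma~\ref{l:invertiblerules}) and the fact that the modal rule $\boxknr{k,n}$ carries an arbitrary \emph{balanced} side context together with two free parameters $k$ and $n$. As a preliminary I would record a \emph{scaling} lemma --- if $\der{\lgc{GK(A_m)^r}} \Si \seq \Pi$ then $\der{\lgc{GK(A_m)^r}} m\Si \seq m\Pi$ for every $m \ge 1$ --- proved by a routine induction on derivation height: the $\ilr$ and $\irr$ cases re-apply the corresponding rule $m$ times, while a $\boxknr{k,n}$ step is rescaled to $\boxknr{k,mn}$, reusing each minor premise $m$ times and scaling only the zero-succedent premise by the induction hypothesis. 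Crucially, this lemma uses no instance of $\mixr$.

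For $\mixr$ I would induct on the sum of the heights of the given derivations of $\Ga \seq \De$ and $\Pi \seq \Si$ and split on their last rules. If either derivation ends in $\ilr$ or $\irr$, that rule permutes below the mix: apply the induction hypothesis to its premise and reattach the rule, exactly as in the inversion argument of Lemma~\ref{l:invertiblerules}. If a derivation ends in $\idr$, so that $\Ga = \De$ (resp.\ $\Pi = \Si$), the required conclusion $\De, \Pi \seq \Si, \De$ is obtained from the other derivation by \emph{balanced weakening} --- adding a multiset $\Theta$ to both sides --- which is itself an easy induction on height whose only interesting case, $\boxknr{k,n}$, merely enlarges the balanced side context from $\De'$ to $\De' \uplus \Theta$.

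The heart of the argument, and the step I expect to be the main obstacle, is when both derivations end in $\boxknr{k,n}$. Here I would merge the two modal inferences into a single application of $\boxknr{\mathrm{lcm}(k,k'),\,n+n'}$: the boxed antecedents combine as $\bo\Ga^{(1)} \uplus \bo\Ga^{(2)}$, the active boxed succedents are concatenated, and the two side contexts fuse into one balanced context. To feed this rule I must raise every minor premise to the common multiplier $k'' = \mathrm{lcm}(k,k')$ via the scaling lemma, and I must combine the two zero-succedent premises $\Ga^{(1)}_0 \seq$ and $\Ga^{(2)}_0 \seq$ into one. The delicate point is to organise this combination (a further mix on strict subderivations) so that the induction measure genuinely decreases despite the scaling; isolating the scaling lemma beforehand as a mix-free construction is precisely what keeps this manageable.

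Finally, for $\seqcontn$ I would first use invertibility of $\ilr$ and $\irr$ to reduce to the case where $\Ga$ and $\De$ contain no top-level implication, i.e.\ consist only of variables and boxed formulas: each outermost implication in the goal is stripped by the corresponding rule after the induction hypothesis is applied to the $n$-fold inverted premise supplied by Lemma~\ref{l:invertiblerules}. For such atomic sequents a derivation of $n\Ga \seq n\De$ must end in $\idr$ (whence $\Ga = \De$ and we are done) or in $\boxknr{k,m}$. In the latter case every variable is forced into the balanced side context, so the variable parts of $\Ga$ and $\De$ already coincide, and it remains to ``divide the modal inference by $n$'', respecting the split between active boxed succedents and the balanced side context. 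This division is the contraction-side counterpart of the box--box merge above, and is again made possible only by the freedom in the parameters $k$ and $m$ of $\boxknr{k,m}$; I expect it to be the second genuinely technical point of the proof.
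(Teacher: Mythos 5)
Your decomposition --- a standalone, mix-free scaling lemma followed by a plain-\mixr\ induction on the sum of derivation heights --- breaks down at exactly the point you flag, and the mix-free scaling lemma does not repair it. Concretely, suppose the two derivations end with $\boxknr{k,n}$ and $\boxknr{k',n'}$, with zero-succedent premises $\Ga^{(1)}_0 \seq$ and $\Ga^{(2)}_0 \seq$, and you merge them into one application of $\boxknr{k'',n+n'}$ with $k'' = \mathrm{lcm}(k,k')$, $a = k''/k$, $b = k''/k'$. The merged rule then requires the premise $a\Ga^{(1)}_0, b\Ga^{(2)}_0 \seq$ (one-sided weakening is not available in this calculus, so the two parts cannot be smuggled into other premises). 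There are only two ways to produce it, and both fail under your measure: (i) scale first, then mix --- but your scaling construction is not height-preserving (each $\ilr$/$\irr$ step becomes $m$ steps), so the scaled derivations of $a\Ga^{(1)}_0 \seq$ and $b\Ga^{(2)}_0 \seq$ need not have smaller height-sum than the original pair, and the induction hypothesis for \mixr\ does not apply to them; or (ii) mix the strict subderivations first (which the induction hypothesis does allow) and scale afterwards --- but this yields only equally scaled conclusions $m\Ga^{(1)}_0, m\Ga^{(2)}_0 \seq$, whereas $a \neq b$ whenever $k \neq k'$. So the box--box case, which you correctly identify as the heart of the matter, is a genuine gap, not merely a delicate bookkeeping step.

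The paper closes this gap by generalizing the statement rather than modularizing the proof: it shows by a single induction on the sum of heights that derivations of $\Ga \seq \De$ and $\Pi \seq \Si$ yield $\der{\lgc{GK(A_m)^r}} r\Ga, s\Pi \seq s\Si, r\De$ for \emph{all} $r,s \in \N$. Scaling is then the instance $s = 0$, plain \mixr\ is $r = s = 1$, and in the box--box case (the paper takes $k'' = kl$ rather than the lcm) the needed premise $rl\Ga_0, sk\Pi_0 \seq$ is a \emph{direct} instance of the induction hypothesis applied to the two strict subderivations with multipliers $rl$ and $sk$, the minor premises being handled likewise. If you insist on keeping your two-lemma architecture, it can be salvaged by changing the measure to the lexicographic pair (sum of maximal nestings of $\boxknr{k,n}$-applications, sum of heights), since your scaling construction preserves the first component while the zero-premises strictly decrease it; but as written, with the height-sum measure, the argument does not go through. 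Your treatment of $\seqcontn$ --- inversion down to variables and boxed formulas, then dividing a final $\boxknr{k,nl}$ into $\boxknr{kn,l}$ using iterated applications of the already-established admissible \mixr\ --- matches the paper's and is fine once \mixr\ is actually in place.
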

\begin{proof}
To show the $\lgc{GK(A_m)^r}$-admissibility of $\mixr$, we prove that
\[
\der{\lgc{GK(A_m)^r}} \Ga \seq \De \ \text{ and } \ \der{\lgc{GK(A_m)^r}} \Pi \seq \Si \quad \Longrightarrow \quad
\der{\lgc{GK(A_m)^r}} r\Ga, s\Pi \seq s\Si,r\De \ \text{ for all }  r,s \in \N,
\]
proceeding by induction on the sum of the heights of $\lgc{GK(A_m)^r}$-derivations $d_1$ and $d_2$ of $\Ga \seq \De$ and $\Pi \seq \Si$, respectively.
  
For the base case, if $d_1$ and $d_2$ have height $0$, then $\Ga \seq \De$ and $\Pi \seq \Si$ are instances of $\idr$, i.e., $\Ga = \De$ and $\Pi = \Si$. So $r\Ga \uplus s\Pi = r\De\uplus s\Si$ and $\der{\lgc{GK(A_m)^r}} r\Ga, s\Pi \seq s\Si,r\De$ by $\idr$.  If the last application of a rule in $d_1$ or $d_2$ is $\ilr$ or $\irr$, then the result follows easily by an application of the induction hypothesis and further applications of the rule.

Suppose now that $d_1$ ends with 
\[
\begin{array}{c}
\infer[\boxknr{k,n}]{\Omega, \bo \Ga' \seq \bo \f_1,\ldots, \bo \f_n, \Omega}{
\Ga_0 \seq & \Ga_1 \seq k[\f_1] & \ldots & \Ga_n \seq k[\f_n]}
\quad
 \text{where }  k\Ga' = \Ga_0 \uplus \Ga_1 \uplus \ldots \uplus \Ga_n.
\end{array}
\]
If $d_2$ has height $0$, then $\Pi = \Si$. An application of the induction hypothesis to the $\lgc{GK(A_m)^r}$-derivation of the premise $\Ga_0 \seq$ together with a $\lgc{GK(A_m)^r}$-derivation of the empty sequent $\seq$ of height $0$ yields $\der{\lgc{GK(A_m)^r}} r\Ga_0 \seq$. It follows then that the sequent $r\Omega, r\bo \Ga', s\Pi \seq s\Pi, r\bo \f_1,\ldots, r\bo \f_n, r\Omega$ is $\lgc{GK(A_m)^r}$-derivable using an application of the rule $\boxknr{k,rn}$. The case where $d_1$ has height $0$ and $d_2$ ends with $\boxknr{k,n}$ is symmetrical.

If $d_2$ ends with 
 \[
\begin{array}{c}
 \infer[\boxknr{l,m}]{\Theta, \bo \Pi' \seq \bo \p_1,\ldots, \bo \p_m, \Theta}{
\Pi_0 \seq & \Pi_1 \seq l[\p_1] & \ldots & \Pi_m \seq l[\p_m]} 
\quad \text{where }  l\Pi' = \Pi_0 \uplus \Pi_1 \uplus \ldots \uplus \Pi_m,
\end{array}
\]
then we obtain the required $\lgc{GK(A_m)^r}$-derivation 
 \[
 \infer[\boxknr{kl,rn+sm}]{r\Omega, s\Theta, r\bo \Ga', s\bo \Pi' \seq r\bo \f_1,\ldots, r\bo \f_n, s\bo \p_1,\ldots, s\bo \p_m, r\Omega, s\Theta}{
rl\Ga_0, sk\Pi_0 \seq &  
\{l\Ga_i \seq kl[\f_i]\}_{i \in \{1,\ldots,n\}} & \{k\Pi_j \seq kl[\p_j]\}_{1 \le j \le m}}
 \]
 where the premises are all $\lgc{GK(A_m)^r}$-derivable using the induction hypothesis.
 
We establish the $\lgc{GK(A_m)^r}$-admissibility of $\seqcontn$ by proving that
\[
\der{\lgc{GK(A_m)^r}} n\Ga \seq n\De \quad \Longrightarrow \quad \der{\lgc{GK(A_m)^r}} \Ga \seq \De,
\]
proceeding by induction on the sum of the complexities of the formulas in $\Ga,\De$. For the base case, if $n\Ga = n\De$ (in particular if $\Ga$ and $\De$ contain only variables), then $\Ga=\De$ and $\der{\lgc{GK(A_m)^r}} \Ga \seq \De$ by $\idr$. If $\Ga$ contains a formula $\f \to \p$, then by the invertibility of the rule $\ilr$ established in Lemma~\ref{l:invertiblerules}, $\der{\lgc{GK(A_m)^r}} n(\Ga - [\f \to \p]), n\p \seq n\f, n\De$. The induction hypothesis and an application of $\ilr$ gives $\der{\lgc{GK(A_m)^r}} \Ga \seq \De$. The  case where $\De$ contains a formula $\f \to \p$ is symmetrical. In the final case, the $\lgc{GK(A_m)^r}$-derivation of $n\Ga \seq n\De$ must end with an application of $\boxknr{k,nl}$ where $\Ga = \Pi \uplus [\bo \Si]$ and $\De = \Pi \uplus [\bo \f_1,\ldots,\bo \f_l]$. Hence $\der{\lgc{GK(A_m)^r}} \Ga \seq \De$ using $\boxknr{kn,l}$ and  the $\lgc{GK(A_m)^r}$-admissibility of $\mixr$.
\end{proof}

We now have all the necessary tools to prove the promised cut-elimination theorem.

\begin{thm}\label{t:cutelimination}
$\lgc{GK(A_m)}$ admits cut-elimination. 
\end{thm}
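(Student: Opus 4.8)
The plan is to reduce everything to the admissibility of \canr in the cut-free calculus. Since \cutr is derivable from \canr and \mixr (as shown above) and \canr is itself \lgc{GK(A_m)}-derivable, it is enough to show that \canr is admissible in \lgc{GK(A_m)} without \cutr: given a topmost application of \cutr in a derivation, its premises are cut-free, and deriving its conclusion by \mixr followed by \canr (neither of which uses \cutr) removes that cut, so iterating eliminates all cuts. First I would pass to the restricted calculus: every cut-free \lgc{GK(A_m)}-derivation can be rewritten as a \lgc{GK(A_m)^r}-derivation of the same end-sequent, because \mixr and \seqcontn are \lgc{GK(A_m)^r}-admissible by Lemma~\ref{l:admissiblemixsck}, while \idr, \ilr, \irr are already rules of \lgc{GK(A_m)^r} and \boxknr{n} is a derivable rule of \lgc{GK(A_m)^r}. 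It therefore suffices to prove that $\der{\lgc{GK(A_m)^r}} \Ga, \f \seq \f, \De$ implies $\der{\lgc{GK(A_m)^r}} \Ga \seq \De$.

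I would establish this by a primary induction on the complexity of the cancelled formula \f, with a secondary induction on the height of the given derivation to handle last rules in which \f is not principal (there one applies the secondary hypothesis to the premises and reapplies the rule, the \idr-case being immediate). If \f is a variable it is never principal, and in an application of \boxknr{k,n} it can occur only in the context \De shared by both sides, so it is cancelled directly in that context; this disposes of the base case. If $\f = \p \to \x$, I would use the invertibility of \irr and \ilr (Lemma~\ref{l:invertiblerules}) to transform the premise $\Ga, \p \to \x \seq \p \to \x, \De$ into $\Ga, \p, \x \seq \p, \x, \De$, and then cancel first \p and then \x using the primary induction hypothesis, since both are of strictly smaller complexity than \f.

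The hard case, and the main obstacle, is $\f = \bo \p$ when the last rule is \boxknr{k,n} and \emph{both} occurrences of $\bo\p$ are principal: the left one coming from the boxed antecedent (so \p occurs in $\Ga$ with $k\Ga = \Ga_0 \uplus \Ga_1 \uplus \ldots \uplus \Ga_n$) and the right one being a principal succedent, say $\p_n = \p$ with premise $\Ga_n \seq k[\p]$. After restoring the shared context by \mixr with \idr, the goal is to produce the conclusion of an application of \boxknr{k,n-1} obtained by deleting this $\bo\p$ from each side. The essential move is to reduce to \canr on the strictly simpler formula \p: the premise $\Ga_n \seq k[\p]$ supplies $k$ occurrences of \p on the right which are to be matched against the $k$ occurrences that \p contributes to $k\Ga$, and cancelling these by the primary induction hypothesis (together with \mixr- and \seqcontk-admissibility from Lemma~\ref{l:admissiblemixsck}) reshuffles the surviving premises into a partition of $k(\Ga - [\p])$ feeding \boxknr{k,n-1}.

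I expect the delicate point to be performing this reshuffling while keeping the bundled premise structure intact. One cannot simply \mixr all the premises into a single sequent $k\Ga \seq k[\p_1,\ldots,\p_n]$, cancel the $k$ copies of \p there, and re-box the result: because $\bigwedge_\R$ does not commute with $+$, a derivable multiple-conclusion sequent $k\Omega \seq k[\x_1,\ldots,\x_m]$ need not yield a derivable $\bo\Omega \seq \bo\x_1,\ldots,\bo\x_m$, so collapsing the premises destroys derivability. The cancellation of the $k$ superfluous copies of \p must therefore be carried out premise by premise, using $\Ga_n$ to pay for exactly those copies wherever they are distributed among the $\Ga_i$, so that each residual premise still has the single-bundle right-hand side $k[\p_i]$ required by \boxknr{k,n-1}. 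The remaining subcases, where at most one occurrence of $\bo\p$ is principal, are cancellations inside the context \De and reduce exactly as in the variable case.
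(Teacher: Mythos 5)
Your proposal follows essentially the same route as the paper's proof: reduce \cutr{} to the admissibility of \canr{} via \mixr, pass to the restricted calculus $\lgc{GK(A_m)^r}$ using Lemma~\ref{l:admissiblemixsck}, and prove cancellation admissible by induction, handling the critical double-principal box case exactly as the paper does --- cancel the copies of $\p$ inside the principal premise first, then use that premise to pay off, one premise at a time via \mixr{} and further cancellations on the strictly simpler $\p$, the copies of $\p$ distributed among the remaining premises, and finally reapply a bundled box rule with one fewer succedent (the paper's induction measure is the lexicographic pair consisting of the modal depth of $\f$ and the complexity of the whole sequent, rather than your pair consisting of the complexity of $\f$ and the derivation height, but both measures work). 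The only slip is quantitative: the premise-by-premise payment forces a rescaling, so if $k_1$ denotes the number of copies of $\p$ absorbed inside the principal premise, the surviving premises form a partition of $(k-k_1)k(\Ga - [\p])$ with right-hand bundles $(k-k_1)k[\p_i]$, feeding $\boxknr{(k-k_1)k,\,n-1}$ rather than $\boxknr{k,n-1}$ as you claim --- a harmless discrepancy, since the rule family admits every multiplier and yields the same conclusion.
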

\begin{proof}
To establish cut-elimination for $\lgc{GK(A_m)}$, it suffices to prove that an uppermost application of $\cutr$ in a $\lgc{GK(A_m)}$-derivation can be eliminated; that is, we show that cut-free $\lgc{GK(A_m)}$-derivations of the premises of an instance of $\cutr$ can be transformed into a cut-free $\lgc{GK(A_m)}$-derivation of the conclusion. Observe first that the rule $\boxknr{n}$ is $\lgc{GK(A_m)^r}$-derivable using $\boxknr{k,n}$ with $k = n$, $\f_1 = \ldots = \f_n = \f$, and $\Ga_1 = \ldots = \Ga_n = \Ga$. Hence, the proof of Lemma~\ref{l:admissiblemixsck} shows that any cut-free $\lgc{GK(A_m)}$-derivation can be transformed algorithmically into a $\lgc{GK(A_m)^r}$-derivation. We prove (constructively) that
\[
\der{\lgc{GK(A_m)^r}} \Ga, \f \seq \f, \De \quad \Longrightarrow \quad \der{\lgc{GK(A_m)^r}} \Ga \seq \De. \qquad (\star)
\]
Suppose then that there are cut-free $\lgc{GK(A_m)}$-derivations of the premises  $\Ga, \f \seq \De$ and $\Pi \seq \f, \Si$ of an uppermost application of $\cutr$. By $\mixr$, we obtain a cut-free $\lgc{GK(A_m)}$-derivation of $\Ga, \Pi, \f \seq \f, \Si, \De$  and hence a $\lgc{GK(A_m)^r}$-derivation of this sequent. By $(\star)$, we obtain a $\lgc{GK(A_m)^r}$-derivation of  $\Ga, \Pi \seq \Si, \De$, which also gives the desired cut-free $\lgc{GK(A_m)}$-derivation.

We prove $(\star)$ by induction on the lexicographically ordered pair consisting of the modal depth of $\f$ and the sum of the complexities of the formulas in $\Ga, \f \seq \f, \De$. If $\Ga \uplus [\f] = [\f] \uplus \De$, then $\Ga = \De$ and $\Ga \seq \De$ is derivable using $\idr$. If $\f$ has the form $\p \to \x$, then we use the $\lgc{GK(A_m)^r}$-invertibility of $\ilr$ and $\irr$ and apply the induction hypothesis twice. The cases where $\Ga$ or $\De$ includes a formula $\p \to \x$ are very similar. Lastly, suppose that $\Ga, \f \seq \f, \De$ contains only variables and box formulas. Then there is a $\lgc{GK(A_m)}$-derivation of the sequent ending with an application of $\boxknr{k,n}$. The case where $\f$ is a variable is trivial, so let us just consider the case where $\f = \bo \x$ and the derivation ends with an application of $\boxknr{k,n}$. The case where $\f$ occurs in the context appearing on both sides of the conclusion follows immediately, so suppose that the derivation ends with
\[
\infer[\boxknr{k,n}]{\Si, \bo \Pi, \bo \x \seq \bo \x, \bo \p_2,\ldots, \bo \p_n, \Si}{
\Pi_0, k_0[\x] \seq & \Pi_1, k_1[\x] \seq k[\x] & \{\Pi_i, k_i[\x] \seq k[\p_i]\}_{i=2}^{n}}
\]
where $k\Pi = \Pi_0 \uplus  \Pi_1 \uplus \ldots \uplus \Pi_n$ and $k = k_0 + k_1 + \ldots + k_n$.  By the induction hypothesis, 
\[
\der{\lgc{GK(A_m)^r}} \Pi_1 \seq (k-k_1)[\x].
\]
By Lemma~\ref{l:admissiblemixsck} (the $\lgc{GK(A_m)^r}$-admissibility of $\mixr$), we have $\lgc{GK(A_m)^r}$-derivations of
\[
\begin{array}{l}
 k_0 \Pi_1,  (k-k_1)\Pi_0, (k-k_1)k_0[\x] \seq (k-k_1)k_0[\x]\\[.025in]
 k_i \Pi_1,  (k-k_1)\Pi_i, (k-k_1)k_i[\x] \seq (k-k_1)k_i[\x], (k-k_1)k[\p_i]\quad \text{for } i \in \{2,\ldots, n\}.
\end{array}
\]
So, by the induction hypothesis, we have $\lgc{GK(A_m)^r}$-derivations of
\[
\begin{array}{l}
k_0 \Pi_1,  (k-k_1)\Pi_0 \seq \\[.025in]
k_i \Pi_1,  (k-k_1)\Pi_i \seq (k-k_1)k[\p_i] \quad \text{for }  i \in \{2,\ldots, n\}.
\end{array}
\]
Now by an application of  $\boxknr{(k-k_1)k,n-1}$, we have a $\lgc{GK(A_m)^r}$-derivation ending with
\[
\infer{\Si, \bo \Pi \seq \bo \p_2,\ldots, \bo \p_n, \Si}{
k_0 \Pi_1,  (k-k_1)\Pi_0 \seq  &  \{k_i \Pi_1,  (k-k_1)\Pi_i \seq (k-k_1)k[\p_i]\}_{i=2}^n}
\]
where $(k-k_1)k\Pi = (k_0 + k_2 + \ldots + k_n)(\Pi_0 \uplus  \Pi_1 \uplus \ldots \uplus \Pi_n)$. 
\end{proof}

%%%%%%%%%%%%%%%%%%%%%%%%%%%%%%%%%%%%%%%%%%%%%%%%%%%%%%%%%%%%

\subsection{Completeness}

In this section we establish the completeness of both the axiom system $\lgc{K(A_m)}$ and the sequent calculus $\lgc{GK(A_m)}$ for the modal-multiplicative fragment of $\lgc{K(A)}$. The crucial ingredient of our proof will be the fact that an $\lgc{LK'(A)}$-tableau for an $\lang_{\lgc{A_m}}^\bo$-formula  always consists of just one branch, and hence a single inconsistent system of linear inequations can be associated with each valid $\lang_{\lgc{A_m}}^\bo$-formula. 

We begin by proving two lemmas for $\lgc{K(A)}$-valid sequents of a certain form, recalling that sequents contain only $\lang_{\lgc{A_m}}^\bo$-formulas by definition.

\begin{lem}\label{l:separation}
Let $ \Ga, \Pi \seq \Si, \De$ be a $\lgc{K(A)}$-valid sequent such that no variable occurs in both $\Ga \uplus \De$ and $\Pi \uplus \Si$. Then $\Ga \seq \De$  and $\Pi \seq \Si$ are both $\lgc{K(A)}$-valid.
\end{lem}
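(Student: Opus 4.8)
The plan is to argue semantically and directly. Unfolding the translation $\mathcal{I}$ together with the clauses $V(\f \& \p, x) = V(\f,x) + V(\p,x)$ and $V(\f \to \p, x) = V(\p,x) - V(\f,x)$, the $\lgc{K(A)}$-validity of a sequent $\Ga \seq \De$ amounts to the purely arithmetical condition $\sum_{\p \in \De} V(\p,x) \ge \sum_{\f \in \Ga} V(\f,x)$ holding in every $\lgc{K(A)}$-model at every world $x$. Abbreviating $\sum_{\f \in \Ga} V(\f,x)$ by $V(\Ga,x)$, the hypothesis that $\Ga, \Pi \seq \Si, \De$ is valid reads
\[
V(\Si,x) + V(\De,x) \ge V(\Ga,x) + V(\Pi,x) \quad \text{in every model at every } x.
\]
I will show that $\Ga \seq \De$ is valid; the validity of $\Pi \seq \Si$ then follows by an entirely symmetric argument (zeroing the other group of variables).

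First I would record an auxiliary fact, proved by a routine induction on formula complexity over the connectives $\to$ and $\bo$: if a valuation assigns $0$ to every variable occurring in a formula $\f$, then $V(\f,y) = 0$ at every world $y$. The step for $\to$ is immediate from $V(\p_1 \to \p_2, y) = 0 - 0 = 0$, and the step for $\bo$ uses $\bigwedge_\R \{0\} = 0$ together with the convention $\bigwedge_\R \emptyset = 0$. Note also that the constant $\0 = p_0 \to p_0$ evaluates to $0$ under \emph{any} valuation.

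Given an arbitrary $\lgc{K(A)}$-model $\fram{M} = \langle W, R, V \rangle$ and world $x$, I would then pass to the model $\fram{M}' = \langle W, R, V' \rangle$ on the same frame, where $V'$ agrees with $V$ except that $V'(q,y) = 0$ for every variable $q$ occurring in $\Pi \uplus \Si$ and every $y \in W$. Since $V'$ only ever replaces values by $0$, it still has bounded range, so $\fram{M}'$ is a genuine $\lgc{K(A)}$-model. The disjointness hypothesis is precisely what makes the following two claims hold at once: (i) every $\f \in \Ga \uplus \De$ has all its variables outside $\Pi \uplus \Si$, so $V'(\f,x) = V(\f,x)$, whence $V'(\Ga,x) = V(\Ga,x)$ and $V'(\De,x) = V(\De,x)$; and (ii) by the auxiliary fact, every $\f \in \Pi \uplus \Si$ evaluates to $0$ under $V'$, so $V'(\Pi,x) = V'(\Si,x) = 0$. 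Instantiating the validity hypothesis at $\fram{M}'$ and $x$ yields $0 + V(\De,x) \ge V(\Ga,x) + 0$, i.e.\ $V(\De,x) \ge V(\Ga,x)$. As $\fram{M}$ and $x$ were arbitrary, $\der{}{}$ this gives $\mdl{K(A)} \Ga \seq \De$.

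The argument is essentially mechanical once the right modification of the valuation is identified, so there is no deep obstacle; the only point requiring care is the bookkeeping around the distinguished variable $p_0$ hidden inside $\0$. Because $\0$ has value $0$ under every valuation, zeroing out variables never disturbs it, and reading the disjointness hypothesis as applying to all variables ($p_0$ included) guarantees that claims (i) and (ii) cannot conflict. The remaining routine checks — that $V'$ respects the bounded-codomain requirement and that the base case of the auxiliary induction covers atoms — are immediate.
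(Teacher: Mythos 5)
Your proof is correct and takes essentially the same approach as the paper's: both arguments hinge on resetting the variables of $\Pi \uplus \Si$ to $0$ throughout the model and then using a routine induction (covering the $\bigwedge_\R \emptyset = 0$ convention for $\bo$) to conclude that every formula of $\Pi \uplus \Si$ evaluates to $0$ everywhere, while the formulas of $\Ga \uplus \De$ are unaffected by disjointness of variables. The only difference is packaging: the paper argues contrapositively, turning a countermodel of $\Ga \seq \De$ into a countermodel of the combined sequent, whereas you instantiate the validity hypothesis directly in the modified model—an immaterial distinction.
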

\begin{proof}
Suppose contrapositively that $\not \mdl{\lgc{K(A)}} \Ga \seq \De$. Then there exists a $\lgc{K(A)}$-model $\fram{M} = \langle W, R, V \rangle$ and $x \in W$ such that $V(\mathcal{I}(\Ga \seq \De),x) < 0$. Since $\Ga \uplus \De$ and $\Pi \uplus \Si$ have disjoint sets of variables, we may assume without loss of generality that $V(p,y) = 0$ for all $p \in \var$ occurring in $\Pi \uplus \Si$ and $y \in W$. A simple induction yields also that $V(\f,y) = 0$ for all $\f \in \Pi \uplus \Si$ and $y \in W$. But then $V(\mathcal{I}(\Ga, \Pi \seq \Si,\De),x) < 0$. So  $\not \mdl{\lgc{K(A)}} \Ga, \Pi \seq \Si, \De$. The case where $\not \mdl{\lgc{K(A)}} \Pi \seq \Si$ follows by symmetry. 
\end{proof}

\begin{lem}\label{l:mix}
Let $\bo \Ga, \Pi \seq \Si, \bo \De$ be a $\lgc{K(A)}$-valid sequent such that $\Pi$ and $\Si$ contain only variables. Then $\Pi = \Si$ and $\bo \Ga \seq \bo \De$ is $\lgc{K(A)}$-valid.
\end{lem}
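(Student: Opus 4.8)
The plan is to unfold the definition of $\lgc{K(A)}$-validity through the translation $\mathcal{I}$ and to exploit the key feature of the semantics that, at any world $x$, the value of a boxed formula depends only on values at worlds \emph{accessible} from $x$, and hence not at all on the values assigned to variables locally at $x$. Recall that $\mdl{\lgc{K(A)}} \bo\Ga, \Pi \seq \Si, \bo\De$ means that for every $\lgc{K(A)}$-model $\fram{M} = \langle W, R, V\rangle$ and every $x \in W$,
\[
\sum_{\sigma \in \Si} V(\sigma, x) + \sum_{\delta \in \De} V(\bo\delta, x) \;\ge\; \sum_{\gamma \in \Ga} V(\bo\gamma, x) + \sum_{\pi \in \Pi} V(\pi, x),
\]
since $\&$ is interpreted by $+$ and $\to$ subtracts the antecedent from the consequent.

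First I would establish $\Pi = \Si$ by contraposition. Suppose some variable $p$ occurs $a$ times in $\Pi$ and $b$ times in $\Si$ with $a \neq b$. Consider the one-point model $\fram{M}$ with $W = \{x\}$ and $R = \emptyset$: by the convention $\bigwedge_\R \emptyset = 0$, every boxed formula satisfies $V(\bo\gamma, x) = V(\bo\delta, x) = 0$, whatever its internal structure, so the displayed inequality collapses to $\sum_{\sigma \in \Si} V(\sigma,x) \ge \sum_{\pi \in \Pi} V(\pi,x)$, that is, to $(b-a)\,V(p,x) \ge 0$ once we set $V(q,x) = 0$ for every variable $q \neq p$. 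Since $b - a \neq 0$, choosing $V(p,x)$ with sign opposite to $b-a$ (and taking $r$ large enough to accommodate it) produces a model in which the sequent fails, contradicting its validity. Hence every variable has matching multiplicities in $\Pi$ and $\Si$, so $\Pi = \Si$ as multisets.

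The second conclusion then follows immediately. Since $\Pi = \Si$, for every model and world $x$ we have $\sum_{\sigma \in \Si} V(\sigma,x) = \sum_{\pi \in \Pi} V(\pi,x)$, so these two summands cancel in the displayed validity inequality, leaving exactly $\sum_{\delta \in \De} V(\bo\delta,x) \ge \sum_{\gamma \in \Ga} V(\bo\gamma,x)$ for all $\fram{M}$ and $x$, which is precisely the assertion $\mdl{\lgc{K(A)}} \bo\Ga \seq \bo\De$.

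I expect the only genuinely delicate point to be the isolation of the variable contributions in the first step: one must be certain that $V(p,x)$ can be varied freely without disturbing the boxed summands. The empty-accessibility model makes this transparent, since all boxed formulas evaluate to $0$ there irrespective of their contents; this is exactly where the convention $\bigwedge_\R \emptyset = \bigvee_\R \emptyset = 0$ earns its keep, and it is also why a single-world (non-serial) countermodel suffices, $\lgc{K(A)}$-validity being quantified over \emph{all} $\lgc{K(A)}$-models.
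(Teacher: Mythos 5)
Your proposal is correct and follows essentially the same route as the paper: the paper likewise refutes $\Pi \neq \Si$ using a single irreflexive world (where every boxed formula evaluates to $0$ by the convention $\bigwedge_\R \emptyset = 0$) with all variables other than the offending $p$ set to $0$, and then observes that $\Pi = \Si$ makes the variable contributions cancel, yielding $\mdl{\lgc{K(A)}} \bo\Ga \seq \bo\De$. The only cosmetic difference is that the paper normalizes the situation by taking, without loss of generality, a variable occurring strictly more often in $\Pi$ and setting $V(p,x)=1$, where you instead choose the sign of $V(p,x)$ to oppose $b-a$; these are interchangeable.
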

\begin{proof}
Suppose that $\mdl{\lgc{K(A)}} \bo \Ga, \Pi \seq \Si, \bo \De$. It suffices to show that $\Pi = \Si$, since then clearly also $\mdl{\lgc{K(A)}} \bo \Ga \seq \bo \De$. Suppose for a contradiction that $\Pi \not = \Si$. Without loss of generality, some $p \in \var$ occurs strictly more times in $\Pi$ than $\Si$. Consider a $\lgc{K(A)}$-model $\fram{M} = \langle \{x\}, \emptyset, V \rangle$ with one irreflexive world $x$ satisfying $V(p,x) = 1$ and $V(q,x)  = 0$ for all $q \in \var \setminus \{p\}$. Then $V(\mathcal{I}( \bo \Ga, \Pi \seq \Si, \bo \De),x) < 0$ and so $\not \mdl{\lgc{K(A)}} \bo \Ga, \Pi \seq \Si, \bo \De$, a contradiction. 
\end{proof}

To deal with $\lgc{K(A)}$-valid sequents in general, we use the fact that for such a sequent, there must exist a corresponding closed complete $\lgc{LK'(A)}$-tableau with one branch and an associated inconsistent set of inequations. We use this set of inequations to show that the rule $\boxknr{k,m}$ for suitable $k,m$ can be applied backwards to the sequent to obtain $\lgc{K(A)}$-valid sequents containing formulas of strictly smaller modal depth. To this end, it will be helpful to extend some of the notions for the labelled tableau calculus $\lgc{LK'(A)}$ to sequents. We define a {\em complete $\lgc{LK'(A)}$-tableau} for a sequent $\Ga \seq \De$  to be a tableau beginning with the active inequation $\lab{\Ga}{1} > \lab{\De}{1}$ and relation $r12$, constructed according to steps (2)--(7). Consulting the proof of Theorem~\ref{t:LabelledEquivSemantics}, we obtain the following result.

\begin{cor}\label{c:sequentlabelledcalculus}
There exists a  closed complete $\lgc{LK'(A)}$-tableau for a sequent $\Ga \seq \De$ if and only if  $\Ga \seq \De$ is $\lgc{K(A)}$-valid.
\end{cor}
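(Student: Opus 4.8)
The plan is to reduce the statement to the formula case already settled in Theorem~\ref{t:LabelledEquivSemantics}. By definition, $\mdl{\lgc{K(A)}} \Ga \seq \De$ means precisely $\mdl{\lgc{K(A)}} \mathcal{I}(\Ga \seq \De)$, so it suffices to match up a closed complete $\lgc{LK'(A)}$-tableau for the sequent $\Ga \seq \De$ with one for the formula $\mathcal{I}(\Ga \seq \De) = (\f_1 \& \ldots \& \f_n) \to (\p_1 \& \ldots \& \p_m)$, where $\Ga = [\f_1,\ldots,\f_n]$ and $\De = [\p_1,\ldots,\p_m]$, reading $\&$ and $\0$ here as primitive $\lang_\lgc{A}^\bo$-connectives (whose semantics coincide with their $\lang_{\lgc{A_m}}^\bo$-definitions).

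The key observation is that the two tableaux differ only by an initial, purely propositional preamble. Building a complete tableau for $\mathcal{I}(\Ga \seq \De)$ starts from $[] > [\lab{\mathcal{I}(\Ga \seq \De)}{1}]$ together with $r12$; applying $\imprr$ to the leading implication, then exhaustive applications of $\multlr$ and $\multrr$ to split the two $\&$-blocks (and $\zerolr$, $\zerorr$ to discard the $\0$ that stands for an empty block when $\Ga$ or $\De$ is empty), one reaches exactly the inequation $\lab{\Ga}{1} > \lab{\De}{1}$. None of these steps introduces a new label, and none contributes anything to the system of inequations associated with the branch beyond $\lab{\Ga}{1} > \lab{\De}{1}$ itself, since the formulas decomposed along the way are composite and any occurrence of $\0$ simply contributes $0$ to a formal sum. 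From the node $\lab{\Ga}{1} > \lab{\De}{1}$ onward, steps (2)--(7) of the construction are literally those of the complete tableau for the sequent $\Ga \seq \De$. Hence the two complete tableaux -- each a single branch, since no $\land$ or $\lor$ occurs in an $\lang_{\lgc{A_m}}^\bo$-sequent -- carry the same associated system of inequations, so one is closed exactly when the other is.

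With this correspondence the conclusion is immediate. By the equivalence (1)\,$\Leftrightarrow$\,(3) of Theorem~\ref{t:LabelledEquivSemantics}, a closed complete $\lgc{LK'(A)}$-tableau for $\mathcal{I}(\Ga \seq \De)$ exists if and only if $\mdl{\lgc{K(A)}} \mathcal{I}(\Ga \seq \De)$; combining this with the preamble correspondence and the definition of sequent validity shows that a closed complete $\lgc{LK'(A)}$-tableau for $\Ga \seq \De$ exists if and only if $\mdl{\lgc{K(A)}} \Ga \seq \De$.

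The only point needing care -- and the main, if modest, obstacle -- is the bookkeeping of the second paragraph: one must check that the propositional preamble really does leave the associated system of inequations unchanged apart from the single inequation $\lab{\Ga}{1} > \lab{\De}{1}$, in particular that discarding occurrences of $\0$ is harmless. Should this reduction-via-$\mathcal{I}$ feel too delicate, an equivalent route is to bypass $\mathcal{I}$ and re-run the soundness and completeness arguments behind Theorem~\ref{t:LabelledEquivSemantics} directly with the altered root inequation $\lab{\Ga}{1} > \lab{\De}{1}$: Lemmas~\ref{LK:soundness lemma} and~\ref{LK:completeness lemma} apply verbatim irrespective of the root, and faithfulness of a model to the root node encodes exactly $V(\mathcal{I}(\Ga \seq \De), w_1) < 0$, which is precisely the failure of $\lgc{K(A)}$-validity of the sequent.
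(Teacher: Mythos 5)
Your proposal is correct, but your primary argument is not the one the paper uses. The paper gives no reduction at all: having defined a complete $\lgc{LK'(A)}$-tableau for a sequent as one with root inequation $\lab{\Ga}{1} > \lab{\De}{1}$ and relation $r12$, it simply observes (``Consulting the proof of Theorem~\ref{t:LabelledEquivSemantics}\dots'') that the soundness and completeness arguments --- Lemma~\ref{LK:soundness lemma}, Proposition~\ref{p:soundnesslabelledtableau}, and Lemma~\ref{LK:completeness lemma} --- go through verbatim with this altered root, the point being that a countermodel is faithful to that root exactly when $V(\mathcal{I}(\Ga \seq \De), w_1) < 0$. That is precisely the fallback route you sketch in your final paragraph. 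Your main route instead reduces the sequent case to the already-proved formula case: build the complete tableau for $\mathcal{I}(\Ga \seq \De)$ (reading $\&$ and $\0$ as primitive $\lang_{\lgc{A}}^\bo$-connectives, which is legitimate since their semantics agree with the defined $\lang_{\lgc{A_m}}^\bo$-connectives), note that the propositional preamble via $\imprr$, $\multlr$, $\multrr$, $\zerolr$, $\zerorr$ ends in exactly $\lab{\Ga}{1} > \lab{\De}{1}$ and contributes nothing to the associated system (its intermediate inequations contain composite formulas and are therefore excluded by definition), and conclude that the two single-branch constructions carry the same associated system, so closedness transfers; Theorem~\ref{t:LabelledEquivSemantics} (1)$\Leftrightarrow$(3) then closes the loop. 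This works, with one small extra point beyond the bookkeeping you flag: a complete tableau for $\mathcal{I}(\Ga \seq \De)$ need not perform the preamble first, since step (3) only demands exhaustive application, so matching it with a sequent tableau uses the fact that the outcome of exhaustive propositional decomposition is independent of the order of rule applications --- harmless here because the fragment is branch-free. In exchange for that bookkeeping, your reduction uses Theorem~\ref{t:LabelledEquivSemantics} as a black box; the paper's re-run avoids all preamble considerations but requires re-inspecting the earlier proofs. Either approach is sound.
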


To argue about the inconsistency of a system of inequations associated to a tableau, we recall some basic notions from linear programming. Let $S$ be a system of  inequations of the form $I_i = (f_i(\bar{x}) > g_i(\bar{x}))$ ($i \in \{1,\ldots,n\}$) and $J_j = (h_j(\bar{x}) \ge k_j(\bar{x}))$  ($j \in \{1,\ldots,m\}$) where each $f_i,g_i,h_j,k_j$ is a positive linear sum of variables in $\bar{x}$. Then $S$ is inconsistent over $\R$ if and only if there exists an inequation given by a linear combination of these inequations
\[
\begin{array}{rcl}
L_S & = &  \sum_{i = 1}^n \lambda_i I_i + \sum_{j = 1}^m \mu_j J_j
\end{array}
\]
where $\lambda_1,\ldots,\lambda_n \in \N$ (not all zero) and $\mu_1,\ldots,\mu_m \in \N$ such that
\[
\begin{array}{rcl}
\lambda_1 f_1 + \ldots + \lambda_n f_n + \mu_1 h_1 + \ldots + \mu_m h_m & = & \lambda_1 g_1 + \ldots + \lambda_n g_n + \mu_1 k_1 + \ldots + \mu_m k_m.
\end{array}
\]
We say that $L_S$ is {\em inconsistent} and that each inequation $f_i(\bar{x}) > g_i(\bar{x})$ or $h_j(\bar{x}) \ge k_j(\bar{x})$ is {\em used} $\lambda_i$ or $\mu_j$ times, respectively, in $L_S$.

Given a labelled inequation $I = \lab{\Ga_1}{\vecn{k_1}} \aineq \lab{\De_1}{\vecn{l_1}}$, let $I^E =  \lab{\Ga_2}{\vecn{k_2}} \aineq \lab{\De_2}{\vecn{l_2}}$ be the inequation obtained by applying the rules for $\to$ in $\lgc{LK'(A)}$ to $I$ exhaustively. By further replacing each boxed formula $\bo \f$ with $\newv{\bo \f}$, we obtain the \emph{reduced form} $I^R$ of $I$, saying that $I$ is \emph{in reduced form} if $I = I^R$. We now have all the required tools to prove our main lemma.

\begin{lem}\label{l:main}
Let $\bo \Ga \seq \bo \p_1,\ldots, \bo \p_m$ be a $\lgc{K(A)}$-valid sequent. Then there exist $k \in \N \setminus\!\{0\}$ and multisets of $\lang_{\lgc{A_m}}^\bo$-formulas $\Ga_0,\Ga_1,\ldots,\Ga_m$ such that 
\begin{enumerate}[label=\rm (\roman*)]
\item	$k\Ga = \Ga_0 \uplus \Ga_1 \uplus \ldots \uplus \Ga_m$
\item	$\Ga_0 \seq$ \, and\, $\Ga_i \seq k[\p_i]$ for $i \in \{1,\ldots,m\}$ are all $\lgc{K(A)}$-valid.
\end{enumerate}
\end{lem}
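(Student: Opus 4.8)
The plan is to read the data $k$ and $\Ga_0,\ldots,\Ga_m$ directly off an inconsistency certificate for the single inequational system attached to a tableau for the sequent. First I would invoke Corollary~\ref{c:sequentlabelledcalculus}: since $\bo\Ga \seq \bo\p_1,\ldots,\bo\p_m$ is $\lgc{K(A)}$-valid there is a closed complete $\lgc{LK'(A)}$-tableau for it, and because the sequent lies in the modal-multiplicative fragment this tableau has a single branch $B$, so its associated system $S$ is inconsistent over $\R$. By the linear-programming criterion recalled just above the lemma, $S$ admits an inconsistent combination $L_S = \sum_i \lambda_i I_i + \sum_j \mu_j J_j$ with coefficients in $\N$ whose two formal sums coincide and in which at least one strict inequation is used positively. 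The key structural observation is that the \emph{only} strict inequation on $B$ is (the reduced form of) the root: antecedent and succedent consist entirely of boxed formulas, so the propositional rules never touch the root, the box rules turn it into $\sum_j \nu_j\, b_j > \sum_i n_i\, c_i$ (writing $b_j = \lab{\newv{\bo\gamma_j}}{1}$, $c_i = \lab{\newv{\bo\p_i}}{1}$, with $\nu_j,n_i$ the multiplicities in $\Ga$ and the succedent), and every other rule produces only non-strict inequations. Hence $L_S$ uses this root inequation with multiplicity $\lambda_0 \ge 1$, and I would set $k := \lambda_0$.

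Next I would extract the partition. By the construction procedure, the successors of world $1$ are exactly the worlds $y_1,\ldots,y_m$ introduced by $\borrn$ for the distinct succedent boxes, together with the world $w_2$ supplied by the covering node $r12$ (the rule $\ex$ cannot add further successors to world $1$). Closing the right boxes yields boundary inequations $J^R_i\colon c_i \geq \lab{\p_i}{y_i}$, while closing each left box $\bo\gamma_j$ against each successor $t \in \{w_2,y_1,\ldots,y_m\}$ yields $J^L_{j,t}\colon \lab{\gamma_j}{t} \geq b_j$. Since $b_j$ and $c_i$ occur nowhere else, balancedness of $L_S$ forces the cancellation identities $\mu^R_i = \lambda_0 n_i$ and $\sum_{t} \mu^L_{j,t} = \lambda_0 \nu_j$, where $\mu^R_i,\mu^L_{j,t}$ are the multiplicities of $J^R_i,J^L_{j,t}$ in $L_S$. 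I would then let $\Ga_i$ contain $\mu^L_{j,y_i}$ copies of each $\gamma_j$ (for $i \in \{1,\ldots,m\}$) and $\Ga_0$ contain $\mu^L_{j,w_2}$ copies of each $\gamma_j$. The second identity, summed over $t$, is precisely condition (i): $k\Ga = \Ga_0 \uplus \Ga_1 \uplus \cdots \uplus \Ga_m$.

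For condition (ii) I would localize $L_S$ to each successor region. The inequations of $B$ group by world: the region of $y_i$ carries $J^R_i$, the $J^L_{j,y_i}$, and the full decompositions of $\p_i$ and the $\gamma_j$ at $y_i$; the region of $w_2$ carries the $J^L_{j,w_2}$ and their decompositions. Distinct regions share no variables except the world-$1$ boxes $b_j,c_i$, so summing the region of $y_i$ with its $L_S$-multiplicities cancels all its internal variables and leaves $\mu^R_i c_i \geq \sum_j \mu^L_{j,y_i} b_j$. To prove $\Ga_i \seq k[\p_i]$ valid it suffices, by Corollary~\ref{c:sequentlabelledcalculus}, to close its own tableau, and I would do this by combining its strict reduced root $\sum_j \mu^L_{j,y_i}\lab{\gamma_j}{1} > k\lab{\p_i}{1}$ with the region-$y_i$ decomposition inequations under the same multiplicities. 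By additivity of the $\to$-reduction, taking the root as \emph{non-strict} this combination is exactly the region-$y_i$ combination minus its boundary tags $\mu^R_i c_i$ and $\sum_j \mu^L_{j,y_i} b_j$, hence $0\geq 0$; restoring strictness yields $0 > 0$, so the system is inconsistent and the premise is valid. The sequent $\Ga_0 \seq$ is handled identically, its region producing no $c$-term. Together these give all the premises of $\boxknr{k,m}$.

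The main obstacle I anticipate is making this localization rigorous: verifying the additivity of the $\to$-reduction, confirming that the internal region variables really are forced to cancel (so that each localized combination is a genuine $\N$-certificate), and tracking the antecedent multiplicities $\nu_j$. A separate, more delicate point is that equal succedent formulas are forced by $\borrn$ to share a single introduced world, so the certificate only delivers a combined premise $\Ga_y \seq (\lambda_0 n_i)[\p]$ for such a block; recovering the individual premises $\Ga_i \seq k[\p_i]$ then needs an extra step distributing the copies of the $\gamma_j$ evenly among the equal indices, an averaging argument reducible to the separation in Lemma~\ref{l:separation}. I would accordingly carry out the distinct-succedent case in full and then reduce the general case to it.
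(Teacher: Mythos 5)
Your proposal is correct in substance and rests on the same foundation as the paper's proof---Corollary~\ref{c:sequentlabelledcalculus}, the single-branch observation, and the structural analysis of an inconsistency certificate $L_S$ (the reduced root is the unique strict inequation, used $k$ times; cancellation of the box-variables $\lab{\newv{\bo \p_i}}{1}$ and $\lab{\newv{\bo \f_j}}{1}$ forces the multiplicities $\lambda_{i,j}$ that define $\Ga_0,\ldots,\Ga_m$ and yield condition (i))---but it executes the passage to condition (ii) by a genuinely different route. The paper never localizes the certificate region by region: it sums $kI + \sum_j kI_j + \sum_{i,j}\lambda_{i,j}I_{i,j}$ into a single inequation $J$, observes $L_S = J^R + L_{S'}$, renames the propositional variables of the regions apart via bijections $h_j$ so that all regions can be relocated to world $1$ inside one combined sequent $h_0(\Ga_0),\ldots,h_m(\Ga_m),\De \seq \De, k[h_1(\p_1)],\ldots,k[h_m(\p_m)]$, shows by one induction that the complete tableau of this sequent is closed, and only then splits it semantically using Lemma~\ref{l:separation}. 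You instead split $L_S$ first---using variable-disjointness of the regions to argue that each region's internal variables must cancel within its own partial sum---and transfer each regional certificate directly to the complete tableau of the corresponding premise $\Ga_i \seq k[\p_i]$, via additivity of the $\to$-reduction (the premise's strict reduced root, used once, contributes exactly what $kI_i^R$ and the $\lambda_{j,i}I_{j,i}^R$ contribute once the boundary tags $c_i, b_j$ are stripped). Both routes require the same kind of induction showing that the auxiliary tableau (the paper's combined one, or your per-region ones) reproduces the relevant inequations of the original tableau; you trade the paper's renaming device and semantic separation lemma for per-region localization, which is more direct but places the bookkeeping burden exactly where you flagged it.

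One point in your favour: the duplicate-succedent subtlety you raise is real, and the paper's own write-up glosses over it---its labels $y_1,\ldots,y_m$ are not pairwise distinct when some of the $\p_i$ coincide, since $\borrn$ introduces one world per formula, not per occurrence. Be aware, however, that \emph{even distribution at fixed $k$} is impossible in general (the block multiset need not split into equal parts), and Lemma~\ref{l:separation} is not the right tool there. The correct repair is scaling: if a block of $n_b$ equal succedent formulas yields the combined valid premise $\Ga'_b \seq (kn_b)[\p_b]$, replace $k$ by $kM$ for $M$ a common multiple of all block sizes and assign to every occurrence in the block the multiset $(M/n_b)\Ga'_b$, using that validity of a sequent is preserved under multiplying both sides by a positive integer; conditions (i) and (ii) then hold for the enlarged $k$.
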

\begin{proof}
Let $\Ga = [\f_1,\ldots,\f_n]$. By assumption, $\mdl{\lgc{K(A)}} \bo \f_1,\ldots,\bo \f_n \seq \bo \p_1,\ldots,\bo \p_m$, and, by Corollary~\ref{c:sequentlabelledcalculus}, we obtain a complete closed tableau $T$ in $\lgc{LK'(A)}$ that begins with 
\[
\lab{{\bo \f_1}}{1},\ldots,\lab{{\bo \f_n}}{1} > \lab{{\bo \p_1}}{1},\ldots, \lab{{\bo \p_m}}{1} \quad \text{and} \quad r12.
\]
This tableau will contain the inequation
\[
\begin{array}{rcl}
I & = & \lab{\newv{\bo \f_1}}{1},\ldots,\lab{\newv{\bo \f_n}}{1} > \lab{\newv{\bo \p_1}}{1},\ldots, \lab{\newv{\bo \p_m}}{1}
\end{array}
\]
and for new labels $y_1,\ldots,y_m \in \N$, the inequations
\[
\begin{array}{rclcrcl}
I_1 & = & \lab{\newv{\bo \p_1}}{1} \ge \lab{\p_1}{y_1} & \quad \ldots \quad & I_m & = &  \lab{\newv{\bo \p_m}}{1} \ge \lab{\p_m}{y_m}.
\end{array}
\]
Let us fix $y_0 = 2$. Then $T$ contains for each $i \in \{1, \ldots, n\}$ and  $j \in \{ 0, \ldots, m\}$, an inequation
\[
\begin{array}{rcl}
I_{i,j} & = & \lab{\f_i}{y_j} \ge \lab{\newv{\bo \f_i}}{1}.
\end{array}
\]
Consider now the set of inequations associated to $T$
\[
\begin{array}{rcl}
S &  = & \{I\} \cup \{I_j^R : 1 \le j \le m\} \cup  \{I_{ij}^R : 1 \le i \le n, \, 0 \le j \le m\} \cup S',
\end{array}
\]
noting that the inequations in $S'$ are obtained by applying rules of $\lgc{LK'(A)}$ to inequations in $\{I_j^E : 1 \le j \le m\} \cup  \{I_{ij}^E : 1 \le i \le n, \, 0 \le j \le m\}$. Since $T$ is closed, $S$ is inconsistent over $\R$. Hence there is an inconsistent linear combination $L_S$ of the inequations in $S$. The following observations can be confirmed  by simple inductions on the height of $T$:

\begin{enumerate}[label=\rm (\roman*)]

\item The (reduced form) inequation $I$ is the only strict inequation occurring in $S$, and hence must be used $k$ times in $L_S$ for some $k \in \N \setminus\!\{0\}$. \smallskip

\item For each $j \in \{1, \ldots, m\}$, $\lab{\newv{\bo \p_j}}{1}$ occurs in $S$ only in $I$ and in the reduced form $I_j^R$ of $I_j$; hence, by (i), $I_j^R$ must also be used $k$ times in $L_S$. \smallskip

\item For each $i \in \{1, \ldots,n\}$, $\lab{\newv{\bo \f_i}}{1}$ occurs in $S$ only in $I$ and in the reduced forms $I_{i,j}^R$ of $I_{i,j}$ for $j \in \{0, \ldots, m\}$; hence, given that $I_{i,j}^R$ is used in the linear combination $\lambda_{i,j}$ times, we obtain $\lambda_{i,0} + \lambda_{i,1} + \ldots + \lambda_{i,m} = k$; in particular, not all $\lambda_{i,j}$ are zero. \smallskip

\end{enumerate}

\noindent
The inconsistent linear combination of the inequations in $S$ is therefore
\[
\begin{array}{rcl}
L_S & = & k I + \sum_{j = 1}^m k I^R_j + \sum_{i = 1}^n  \sum_{j = 0}^m \lambda_{i,j} I^R_{i,j} + L_{S'}.
\end{array}
\]
We define multisets of formulas
\[
\begin{array}{rcl}
\Ga_j & = & \lambda_{1,j} [\f_1], \ldots, \lambda_{n,j} [\f_n] \ \text{ for $j \in \{0,\ldots,m\}$}\\[.05in]
\De  & = & k[\newv{\bo \f_1}],\ldots, k[\newv{\bo \f_n}], k[\newv{\bo \p_1}],\ldots, k[\newv{\bo \p_m}].
\end{array}
\] 
Note that, as required, $k\Ga = \Ga_0 \uplus \Ga_1 \uplus \ldots \uplus \Ga_m$. Consider now the inequation
\[
\begin{array}{rcl}
 J	& = & kI + \sum_{j=1}^{m} kI_j + \sum_{i=1}^{n}  \sum_{j=0}^{m} \lambda_{i,j} I_{i,j}\\[.05in]
	& =  & \lab{\Ga_0}{y_0}, \lab{\Ga_1}{y_1}, \ldots, \lab{\Ga_m}{y_m}, \lab{\De}{1} > \lab{\De}{1}, k[\lab{\p_1}{y_1}], \ldots,  k[\lab{\p_m}{y_m}].
\end{array}
\]
Then $L_S =  J^R + L_{S'}$ and the set of inequations $S^* = \{J^R\} \cup S'$ is inconsistent over $\R$.

Recall that each (reduced form) inequation in $S'$ is obtained by applying rules of $\lgc{LK'(A)}$ to the inequations $\{I_j^E : 1 \le j \le m\} \cup  \{I_{ij}^E : 1 \le i \le n, \, 0 \le j \le m\}$. But following the procedure for building a complete $\lgc{LK'(A)}$-tableau, the inequations in $S'$ are obtained by first applying the rules $\bolrn$ and $\borrn$. Hence these inequations in $S'$ and $J^R$ are also obtained by first applying the rules $\bolrn$ and $\borrn$ to $J^E$ and then continuing as before. 

Now for each  $j \in \{0,\ldots,m\}$, let $\var_j \subseteq \var$ be a countably infinite set such that $\var_0 \cap \var_1 \cap \ldots \cap \var_m = \emptyset$, and let $h_j \colon \var \to \var_j$ be a bijective map that extends in the obvious way to all formulas and multisets of formulas. Consider the inequation
\[ 
\begin{array}{rcl}
J' & = & \lab{h_0(\Ga_0)}{1}, \lab{h_1(\Ga_1)}{1}, \ldots, \lab{h_m(\Ga_m)}{1}, \lab{\De}{1} > \lab{\De}{1}, k[\lab{h_1(\p_1)}{1}], \ldots,  k[\lab{h_m(\p_m)}{1}].
\end{array}
\]
An easy induction on the height of a tableau shows that applying the rules of $\lgc{LK'(A)}$ to $J'$ and relation $r12$ also produces  a set of inequations that is inconsistent over $\R$. But then by Corollary~\ref{c:sequentlabelledcalculus}, 
\[
\mdl{\lgc{K(A)}} h_0(\Ga_0), h_1(\Ga_1), \ldots, h_m(\Ga_m), \De \seq \De, k[h_1(\p_1)],\ldots, k[h_m(\p_m)].
\]
Applying Lemma~\ref{l:separation} repeatedly, we obtain 
\[
\mdl{\lgc{K(A)}} h_0(\Ga_0) \seq \quad \text{ and } \quad \mdl{\lgc{K(A)}} h_i(\Ga_i) \seq k[h_i(\p_i)] \, \text{ for }i \in \{1,\ldots,m\},
\]
and hence, renaming variables,
\[
\mdl{\lgc{K(A)}} \Ga_0 \seq \quad \text{ and } \quad \mdl{\lgc{K(A)}} \Ga_i \seq k[\p_i] \, \text{ for }i \in \{1,\ldots,m\}
\]
as required.
\end{proof}

\begin{prop}\label{p:sequentcalculuscompleteness}
Let $\Ga \seq \De$ be a $\lgc{K(A)}$-valid sequent. Then $\der{\lgc{GK(A_m)}} \Ga \seq \De$. 
\end{prop}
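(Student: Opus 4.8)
The plan is to argue by induction on the modal depth $d$ of the sequent $\Ga \seq \De$, reducing an arbitrary $\lgc{K(A)}$-valid sequent to one of the special shape $\bo \Ga'' \seq \bo \p_1,\ldots,\bo \p_m$ that is handled by Lemma~\ref{l:main}. The genuine mathematical content sits in Lemma~\ref{l:main} (already available), so the task here is mainly to assemble the pieces while controlling the induction measure.

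First I would strip away all implications. Each single backward application of $\ilr$ or $\irr$ leaves the value of the formula interpretation $\mathcal{I}$ unchanged at every world (both conclusion and premise compute the same real number, since $V(\f \to \p) = V(\p) - V(\f)$), so it preserves $\lgc{K(A)}$-validity; it also cannot increase modal depth. Applying these rules backwards exhaustively therefore transforms $\Ga \seq \De$ into a $\lgc{K(A)}$-valid sequent $\Ga' \seq \De'$ of modal depth at most $d$ in which $\Ga'$ and $\De'$ contain only variables and boxed formulas (a non-variable, non-boxed $\lang_{\lgc{A_m}}^\bo$-formula is an implication, which gets decomposed). Since $\ilr$ and $\irr$ are rules of $\lgc{GK(A_m)}$, it then suffices to derive $\Ga' \seq \De'$, as forward applications recover $\Ga \seq \De$.

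Next I would separate the variables from the boxed formulas. Writing $\Ga' = \Pi \uplus \bo \Ga''$ and $\De' = \Si \uplus \bo \De''$ with $\Pi,\Si$ consisting only of variables, Lemma~\ref{l:mix} gives $\Pi = \Si$ and reduces the problem to deriving $\bo \Ga'' \seq \bo \De''$, say $\bo \Ga'' \seq \bo \p_1,\ldots,\bo \p_m$; the sequent $\Ga' \seq \De'$ is then recovered by an application of $\mixr$ against the instance $\Pi \seq \Pi$ of $\idr$. In the base case $d = 0$ there are no boxed formulas, so $\bo \Ga''$ and $\bo \De''$ are empty, $\Pi = \Si$ is forced by Lemma~\ref{l:mix}, and $\idr$ closes the derivation.

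For the inductive step $d > 0$, I would apply Lemma~\ref{l:main} to the $\lgc{K(A)}$-valid sequent $\bo \Ga'' \seq \bo \p_1,\ldots,\bo \p_m$, obtaining $k \in \N \setminus\!\{0\}$ and multisets $\Ga_0,\ldots,\Ga_m$ with $k\Ga'' = \Ga_0 \uplus \cdots \uplus \Ga_m$ such that $\Ga_0 \seq$ and each $\Ga_i \seq k[\p_i]$ are $\lgc{K(A)}$-valid. The decisive observation is that every formula occurring in these premises is one of the $\f \in \Ga''$ or one of the $\p_i$, each of which has modal depth at most $d-1$ (since $\bo \f$ and $\bo \p_i$ have modal depth at most $d$); hence the premise sequents have modal depth at most $d-1$ and the induction hypothesis supplies $\lgc{GK(A_m)}$-derivations of all of them. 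Feeding these into the $\lgc{GK(A_m)}$-derivable rule $\boxknr{k,m}$ with empty context yields a derivation of $\bo \Ga'' \seq \bo \p_1,\ldots,\bo \p_m$, and reassembling via $\mixr$, $\idr$, and forward applications of $\ilr$, $\irr$ completes the derivation of $\Ga \seq \De$. The only real obstacle is the bookkeeping for well-foundedness: I must check that implication-stripping preserves both validity and the modal-depth bound, and that the single passage through Lemma~\ref{l:main} strictly lowers the modal depth, which is exactly what guarantees the induction terminates.
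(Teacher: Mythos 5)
Your proof is correct and takes essentially the same approach as the paper's: the same key ingredients (semantic invertibility of $\ilr$ and $\irr$, Lemma~\ref{l:mix}, Lemma~\ref{l:main}, and the derivable rule $\boxknr{k,m}$) are assembled in the same order. The only difference is bookkeeping: the paper runs a single lexicographic induction on the pair consisting of modal depth and the sum of formula complexities, whereas you use a plain induction on modal depth with exhaustive implication-stripping as a terminating preprocessing step inside each stage.
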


\begin{proof}
We prove the claim by induction on the lexicographically ordered pair consisting of the modal depth of $\mathcal{I}(\Ga \seq \De)$ and the sum of the complexities of the formulas in $\Ga \uplus \De$. 

For the base case, suppose that $\mdl{\lgc{K(A)}} \Ga \seq \De$ and that both $\Ga$ and $\De$ contain only variables. Then, by Lemma~\ref{l:mix}, we obtain $\Ga = \De$. Hence, by $\idr$, we get $\der{\lgc{GK(A_m)}} \Ga \seq \De$.

For the inductive step, suppose first that $\mdl{\lgc{K(A)}} \Ga, \f \to \p \seq \De$. Then also $\mdl{\lgc{K(A)}} \Ga, \p \seq \f, \De$. So by the induction hypothesis, $\der{\lgc{GK(A_m)}} \Ga, \p \seq \f, \De$. Hence, by $\ilr$, we get $\der{\lgc{GK(A_m)}} \Ga, \f \to \p \seq \De$. The case where $\f \to \p$ occurs on the right is very similar.

Now suppose that $\mdl{\lgc{K(A)}}\bo \Ga, \Pi \seq \Si, \bo \p_1,\ldots,\bo \p_m$ where $\Pi$ and $\Si$ contain only variables.  By Lemma~\ref{l:mix}, we obtain $\Pi= \Si$ and $\mdl{\lgc{K(A)}} \bo \Ga \seq  \bo \p_1,\ldots,\bo \p_m$. By $\idr$, we get $\der{\lgc{GK(A_m)}}\Pi \seq  \Si$.  Moreover, by Lemma~\ref{l:main}, there exist $k \in \N \setminus\!\{0\}$ and multisets of $\lang_{\lgc{A_m}}^\bo$-formulas $\Ga_0,\Ga_1,\ldots,\Ga_m$ such that 
\begin{enumerate}[label=\rm (\roman*)]
\item	$k\Ga = \Ga_0 \uplus \Ga_1 \uplus \ldots \uplus \Ga_m$
\item	$\mdl{\lgc{K(A)}} \Ga_0 \seq$ \, and\, $\mdl{\lgc{K(A)}} \Ga_i \seq k[\p_i]$ for $i \in \{1,\ldots,m\}$.
\newcounter{tempSaveCounter}
\setcounter{tempSaveCounter}{\value{enumi}}
\end{enumerate}
But then by the induction hypothesis also
\begin{enumerate}[label=\rm (\roman*)]
\setcounter{enumi}{\value{tempSaveCounter}}
\item	$\der{\lgc{GK(A_m)}} \Ga_0 \seq$ \, and\, $\der{\lgc{GK(A_m)}} \Ga_i \seq k[\p_i]$ for $i \in \{1,\ldots,m\}$.
\end{enumerate}
Hence, using the $\lgc{GK(A_m)}$-derivable rule $\boxknr{k,m}$, we obtain $\der{\lgc{GK(A_m)}} \bo \Ga \seq \bo \p_1,\ldots,\bo \p_m$. Finally, using $\mixr$, we obtain $\der{\lgc{GK(A_m)}} \bo \Ga, \Pi \seq \Si, \bo \p_1,\ldots,\bo \p_m$ as required.
\end{proof}

Our main theorem now follows as a direct combination of Propositions~\ref{p:axiomsystemsound},~\ref{p:sequentcalculusaxiomsystemequivalent}, and~\ref{p:sequentcalculuscompleteness}.

\begin{thm}\label{t:main}
The following are equivalent for any $\f \in \fm(\lang_{\lgc{A_m}}^\bo)$:
\begin{enumerate}[label=\rm (\arabic*)]
\item $\mdl{\lgc{K(A)}} \f$.
\item	$\der{K(A_m)} \f$.
\item	$\der{\lgc{GK(A_m)}} \, \seq\! \f$.
\end{enumerate}
\end{thm}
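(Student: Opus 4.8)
The plan is to close the cycle of implications $(1) \Rightarrow (3) \Rightarrow (2) \Rightarrow (1)$, drawing each arrow from one of the three results already in hand and inserting only a little bookkeeping to pass between a formula $\f$ and the sequent $\seq \f$. All of the genuinely difficult content — completeness of the sequent calculus, obtained through the labelled tableau calculus and Lemma~\ref{l:main} — is packaged inside Proposition~\ref{p:sequentcalculuscompleteness}, so at this stage essentially no new work remains.

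Before closing the cycle I would record the two elementary facts linking $\f$ with $\seq \f$. On the semantic side, $\mathcal{I}(\seq \f) = \0 \to \f$, and since $V(\0 \to \f, x) = V(\f,x) - V(\0,x) = V(\f,x)$ at every world $x$ of every $\lgc{K(A)}$-model, the sequent $\seq \f$ is $\lgc{K(A)}$-valid exactly when $\mdl{\lgc{K(A)}} \f$. On the syntactic side, $\0 = p_0 \to p_0$ is an instance of axiom (I), so a single application of (mp) to $\0$ and $\0 \to \f$ converts a $\lgc{K(A_m)}$-derivation of $\0 \to \f = \mathcal{I}(\seq \f)$ into one of $\f$.

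With these two observations in place the three arrows are immediate. For $(1) \Rightarrow (3)$, assuming $\mdl{\lgc{K(A)}} \f$, the semantic fact makes $\seq \f$ a $\lgc{K(A)}$-valid sequent, whence Proposition~\ref{p:sequentcalculuscompleteness} gives $\der{\lgc{GK(A_m)}} \seq \f$. For $(3) \Rightarrow (2)$, Proposition~\ref{p:sequentcalculusaxiomsystemequivalent} turns $\der{\lgc{GK(A_m)}} \seq \f$ into $\der{\lgc{K(A_m)}} \0 \to \f$, and the syntactic fact then yields $\der{K(A_m)} \f$. Finally, $(2) \Rightarrow (1)$ is precisely the soundness statement of Proposition~\ref{p:axiomsystemsound}.

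Since each step reduces to a cited result or a one-line remark, this theorem carries no real obstacle of its own; the only points requiring any care are the two translation facts above, both of which are routine. The true burden of proof lies upstream, in Proposition~\ref{p:sequentcalculuscompleteness} together with its supporting machinery — in particular the linear-programming extraction in Lemma~\ref{l:main} and the cut-elimination of Theorem~\ref{t:cutelimination} — so the present argument is genuinely just the assembly of those pieces.
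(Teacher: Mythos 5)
Your proposal is correct and matches the paper's proof, which obtains the theorem as a direct combination of Propositions~\ref{p:axiomsystemsound}, \ref{p:sequentcalculusaxiomsystemequivalent}, and~\ref{p:sequentcalculuscompleteness}; the only difference is that you spell out the routine bookkeeping identifying $\mathcal{I}(\seq\!\f) = \0 \to \f$ with $\f$ semantically and syntactically, which the paper leaves implicit. Closing the implications as a cycle $(1)\Rightarrow(3)\Rightarrow(2)\Rightarrow(1)$ rather than as a conjunction of biconditionals is an inessential presentational variant.
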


Let us remark finally that, since any  $\lgc{LK'(A)}$-tableau for an $\lang_{\lgc{A_m}}^\bo$-formula has just one branch, we obtain (consulting the proof of Theorem~\ref{t:complexity}) a smaller upper bound for the complexity of checking $\lgc{K(A)}$-validity in this fragment.

\begin{thm}
The problem of checking if $\f \in \fm(\lang_\lgc{A_m}^\bo)$ is $\lgc{K(A)}$-valid is in {\sc EXPTIME}.
\end{thm}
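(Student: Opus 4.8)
The plan is to re-examine the complexity analysis carried out in the proof of Theorem~\ref{t:complexity} and to exploit the observation, already made when the calculus was introduced, that any $\lgc{LK'(A)}$-tableau for an $\lang_{\lgc{A_m}}^\bo$-formula consists of a single branch. This holds because the only rules of $\lgc{LK(A)}$, and hence of $\lgc{LK'(A)}$, that cause branching are $\lorl$ and $\wedr$, neither of which can be applied to formulas that contain no occurrence of $\lor$ or $\land$. Since $\lang_{\lgc{A_m}}^\bo$-formulas are built from variables using only $\to$ and $\bo$, the tableau ever only triggers the rules for $\to$, the rules $\bolrn$ and $\borrn$ for $\bo$, and the rule $\ex$, all of which are non-branching.

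First I would recall the bounds established in Theorem~\ref{t:complexity}: for a formula $\f$ of complexity $n$, the number of labels occurring on any branch is at most exponential in $n$, the complexity of each labelled inequation is bounded by $n$, and there are at most $n$ new variables of the form $\newv{\bo \p}$; consequently the number of distinct labelled inequations, and so the length of a branch, is at most exponential in $n$. By Theorem~\ref{t:LabelledEquivSemantics}, $\mdl{\lgc{K(A)}} \f$ holds if and only if the complete $\lgc{LK'(A)}$-tableau for $\f$ is closed.

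The key point is that, with only a single branch, the non-deterministic choice of a branch that underlies the {\sc coNEXPTIME} bound disappears entirely. I would therefore construct the unique complete $\lgc{LK'(A)}$-tableau for $\f$ deterministically by following steps (1)--(7) of the procedure: each rule application adds finitely many inequations and relations, and since the total number of distinct labelled inequations that can appear on the branch is exponential in $n$, the construction terminates after at most exponentially many steps, each computable in at most exponential time. This yields a single system of linear inequations whose size is exponential in $n$, and $\f$ is $\lgc{K(A)}$-valid precisely when this system is inconsistent over $\R$. Since the linear programming problem is in {\sc P}~\cite{Khachiyan79}, consistency can be decided in time polynomial in the (exponential) size of the system, hence in deterministic exponential time overall.

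There is no serious obstacle beyond contrasting the argument carefully with the {\sc coNEXPTIME} bound of Theorem~\ref{t:complexity}. In the full logic, non-validity is witnessed by guessing one open branch among exponentially many, giving a {\sc NEXPTIME} bound for non-validity and thus {\sc coNEXPTIME} for validity; in the modal-multiplicative fragment the absence of branching removes both the guessing and the complementation, since there is exactly one branch to build and one linear program to solve. The only points requiring care are verifying that building this single branch and solving the associated exponential-sized linear program both stay within deterministic exponential time, which follows directly from the bounds above together with the polynomial-time solvability of linear programming.
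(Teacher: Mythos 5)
Your proposal is correct and follows essentially the same route as the paper: the paper's proof consists precisely of the remark that an $\lgc{LK'(A)}$-tableau for an $\lang_{\lgc{A_m}}^\bo$-formula has a single branch (the branching rules $\lorl$ and $\wedr$ never apply), so the non-deterministic branch choice in the proof of Theorem~\ref{t:complexity} disappears and one deterministically constructs a single exponential-size system of linear inequations, whose consistency is checkable in time polynomial in its size. Your fleshed-out account of the size bounds and of why both the guessing and the complementation vanish is exactly what the paper leaves implicit in the remark preceding the theorem.
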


%%%%%%%%%%%%%%%%%%%%%%%%%%%%%%%%%%%%%%%%%%%%%%%%%%%%%%%%%

\section{Concluding Remarks}

This paper takes a  significant step towards a proof-theoretic account of continuous modal logics: many-valued modal logics with connectives interpreted locally by continuous functions over sets of real numbers. We have introduced here a minimal modal extension $\lgc{K(A)}$ of Abelian logic (see~\cite{mey:ab,cas:ab,met:seq}), where propositional connectives are interpreted using lattice-ordered group operations over the real numbers, and shown that the modal \L ukasiewicz logic $\lgc{K(\mathrmL)}$ studied in~\cite{HT13} is a fragment of this logic with an additional constant. We have provided a labelled tableau calculus for $\lgc{K(A)}$ and established a {\sc coNEXPTIME} upper bound for checking validity. More significantly, for the modal-multiplicative fragment of $\lgc{K(A)}$, we have obtained both a sequent calculus that admits cut-elimination and an axiomatization without infinitary rules. Notably, this latter result was established using the completeness of the labelled tableau calculus to derive a corresponding proof in the sequent calculus. The more standard algebraic approach to proving completeness of many-valued modal logics, employed, e.g., for finite-valued {\L}ukasiewicz modal logics in~\cite{HT13}, proceeds by constructing a canonical model as the set of maximal filters of the Lindenbaum-Tarski algebra of the logic. For finite-valued {\L}ukasiewicz modal logics, completeness is proved using the fact that the appropriate reduct of this algebra is semi-simple, which is not applicable in the infinite-valued case or for the modal-multiplicative fragment of $\lgc{K(A)}$.

Clearly, there are many open questions still to be addressed. The most pressing issue is to find an axiomatization and algebraic semantics for the full logic $\lgc{K(A)}$. We conjecture that such an axiomatization can be obtained by extending the axiom system $\lgc{HA}$ for Abelian logic with the axiom schema (K), (D$_n$) ($n \ge 2$) and rules (mp), (nec) from Figure~\ref{f:kz}, and the axiom schema $(\bo \f \land \bo \p) \to \bo (\f \land \p)$. It can be shown using methods of abstract algebraic logic that this axiom system is sound and complete with respect to a corresponding variety of algebras with a lattice-ordered abelian group reduct; the difficulty of course is to prove that the axiomatization is complete with respect to the frame semantics of $\lgc{K(A)}$, perhaps by extending the proof for the modal-multiplicative fragment (using the labelled tableau calculus and a Gentzen-style calculus), or via an alternative representation of the algebras. Such a proof would provide the basis for an axiomatization and algebraic semantics for $\lgc{K(\mathrmL)}$, and, more generally, a starting point for a J{\'o}nsson-Tarski-style account of the relationship between relational and algebraic semantics for these logics. Note that we can already  develop such a relationship for the modal-multiplicative fragment axiomatized in this paper, but the algebras corresponding to the axiom system $\lgc{K(A_m)}$ will not form a variety.

We have focussed in this work only on the minimal modal extension of Abelian logic. However, adapting the Kripke semantics and labelled tableau calculi to other  (e.g., reflexive, symmetric, transitive) classes of frames is a straightforward exercise. More challenging is the problem of adapting the completeness proofs for the modal-multiplicative fragment to suitably extended axiom systems and sequent calculi. For the reflexive case, completeness proofs, similar to those given here, can be obtained for the extension of the  axiom system $\lgc{K(A_m)}$ with the axiom schema $\bo \f \to \f$ and the sequent calculus $\lgc{GK(A_m)}$ with the rule
\[
\infer{\Ga, \bo \f \seq \De}{\Ga, \f \seq \De}
\]
However, a general approach for tackling different classes of frames is still lacking.

Finally, it remains to determine whether the upper bounds given here for the complexity of checking $\lgc{K(A)}$-validity are optimal. Let us just note that it makes sense to first investigate the {\sc EXPTIME} upper bound for the modal-multiplicative fragment, before considering the {\sc coNEXPTIME} upper bound for the full logic $\lgc{K(A)}$ and indeed also $\lgc{K(\mathrmL)}$.

%------------------------------------------------------------------------------
\bibliographystyle{plain}

%------------------------------------------------------------------------------
\end{document}